\providecommand{\U}[1]{\protect\rule{.1in}{.1in}}
\newtheorem{theorem}{Theorem}
\newtheorem{definition}[theorem]{Definition}
\newtheorem{remark}[theorem]{Remark}
\numberwithin{equation}{section}
\numberwithin{theorem}{section}
\numberwithin{table}{section}
\newenvironment{proof}[1][Proof]{\noindent\textbf{#1.} }{\ \rule{0.5em}{0.5em}}
\begin{document}

\title{New improved estimators for overdispersion in models with clustered
multinomial data and unequal cluster sizes}
\author{Alonso-Revenga, J. M.$^{1}$, Mart\'{\i}n, N.$^{2}$\thanks{Corresponding
author, E-mail: \href{mailto: nirian.martin@uc3m.es}{nirian.martin@uc3m.es}.},
Pardo, L.$^{3}$\\$^{1}${\small Department of Statistics and O.R. III, Complutense University of
Madrid, Spain}\\$^{2}${\small Department of Statistics, Carlos III University of Madrid,
Spain}\\$^{3}${\small Department of Statistics and O.R. I, Complutense University of
Madrid, Spain}}
\maketitle

\begin{abstract}
It is usual to rely on the quasi-likelihood methods for deriving statistical
methods applied to clustered multinomial data with no underlying distribution.
Even though extensive literature can be encountered for these kind of data
sets, there are few investigations to deal with unequal cluster sizes. This
paper aims to contribute to fill this gap by proposing new estimators for the
intracluster correlation coefficient.

\end{abstract}

\noindent\textbf{Keywords}{\small :} Clustered Multinomial Data; Consistent
Intracluster Correlation Estimator; Log-linear model; Overdispersion; Quasi
Minimum Divergence Estimator{\small .}

\section{Introduction\label{sec1}}

When categorical data arise from individuals classified into groups of
individuals or cluster of objects, the major issue is that observations within
a cluster are not independent and the conventional methods of inference for
multinomial sampling, are inappropriate. The strength of similarity of two
observations within a cluster is typically measured by the intracluster
correlation coefficient, whereas observations from separate clusters are
regarded as independent. In most situations, the intracluster correlation
tends to be positive and this induces that the variances of the counts under
clustered sampling to be greater than the ones under multinomial sampling,
namely extra variation with respect to the multinomial sampling (for the
technical details, see page \pageref{page}). This kind of observations are
referred to as overdispersed multinomial clustered data. Some models in the
literature have been considered for this type of \textquotedblleft complex
sampling\textquotedblright. See Altham (1976), Brier (1980), Cohen (1976),
Hall (2000), Men\'{e}ndez et al. (1995, 1996), Morel and Nagaraj (1993),
Neerchal and Morel (1998) and references therein.

A sample of size $n>1$ is taken in each of the $N$ independent clusters,%
\[
\boldsymbol{X}^{\left(  \ell\right)  }=(X_{1}^{\left(  \ell\right)
},...,X_{n}^{\left(  \ell\right)  })^{T}\text{,\quad}\ell=1,...,N,
\]
with realizations in the sampling space $\mathcal{X=}\left\{  1,...,M\right\}
$, i.e.%
\[
\forall\ell=1,...,N,\qquad p_{r}\left(  \boldsymbol{\theta}\right)  =\Pr
(X_{s}^{\left(  \ell\right)  }=r)>0\text{,\quad}s=1,...,n\text{,\quad
}r=1,...,M,
\]
with $\sum_{r=1}^{M}p_{r}\left(  \boldsymbol{\theta}\right)  =1$. This
distribution,%
\begin{equation}
\boldsymbol{p}\left(  \boldsymbol{\theta}\right)  =\left(  p_{1}\left(
\boldsymbol{\theta}\right)  ,...,p_{M}\left(  \boldsymbol{\theta}\right)
\right)  ^{T}, \label{prob}%
\end{equation}
is assumed to be unknown but belonging to a known family of discrete
distributions on $\mathcal{X}$, $\mathcal{P}=\{\boldsymbol{p}\left(
\boldsymbol{\theta}\right)  :\boldsymbol{\theta}\in\Theta\}$, with
$\Theta\subset\mathbb{R}^{M_{0}}$ ($M_{0}\leq M$). In other words, the true
value of parameter $\boldsymbol{\theta}=\left(  \theta_{1},...,\theta_{M_{0}%
}\right)  ^{T}$, $\boldsymbol{\theta}_{0}$, is assumed to be unknown. We
denote the number of units in the $\ell$-th cluster that are classified into
the $r$-th category by
\begin{equation}
Y_{r}^{\left(  \ell\right)  }=%
{\textstyle\sum\limits_{s=1}^{n}}
I_{\left\{  r\right\}  }(X_{s}^{\left(  \ell\right)  })\text{,\quad
}r=1,....,M\text{,\quad}\ell=1,...,N, \label{samples}%
\end{equation}
with $I_{\left\{  r\right\}  }(X_{s}^{\left(  \ell\right)  })$ being equal to
$1$ if $X_{s}^{\left(  \ell\right)  }=r$ and $0$ otherwise. Therefore,
$Y_{1}^{\left(  \ell\right)  }+...+Y_{M}^{\left(  \ell\right)  }=n$, i.e., all
clusters contain the same number of units, $n$. In Section \ref{sec4} a
generalization for unequal cluster sizes is presented. The $M$-dimensional
vector of cell counts associated with the $\ell$-th cluster,%
\begin{equation}
\boldsymbol{Y}^{\left(  \ell\right)  }=(Y_{1}^{\left(  \ell\right)
},...,Y_{M}^{\left(  \ell\right)  })^{T}, \label{CT}%
\end{equation}
is the so-called contingency table.

In what is to follow, we shall assume that $\boldsymbol{p}\left(
\boldsymbol{\theta}\right)  $ belongs to the general class of log-linear
models with full column rank $M\times M_{0}$ design matrix $\boldsymbol{W}$,%
\begin{equation}
\boldsymbol{p}\left(  \boldsymbol{\theta}\right)  =\frac{\exp
\{\boldsymbol{W\theta}\}}{\boldsymbol{1}_{M}^{T}\exp\{\boldsymbol{W\theta}%
\}}\text{,} \label{model}%
\end{equation}
where the $M$ linearly independent column vectors of $\boldsymbol{W}$, are
also linearly independent with respect to the $M$-dimensional vector of ones,
$\boldsymbol{1}_{M}=\left(  1,...,1\right)  ^{T}$. The assumption established
by (\ref{model}) is the condition needed to define the parametric space of
$\boldsymbol{\theta}$, $\Theta$, for log-linear models.

The assumption about $\boldsymbol{p}\left(  \boldsymbol{\theta}\right)  $,
belonging to the general class of log-linear models, covers important models.
We are going to clarify this point for a two dimensional log-linear models,
undestanding that it is easily generalized for any other dimension. If the
$\ell$-th cluster's sample come from a bidimensional variable $(X_{1},X_{2})$
with $I$ and $J$ categories respectively, we have%
\begin{align*}
&  \boldsymbol{X}^{\left(  \ell\right)  }=((X_{1,1}^{\left(  \ell\right)
},X_{2,1}^{\left(  \ell\right)  }),...,(X_{1,n}^{\left(  \ell\right)
},X_{2,n}^{\left(  \ell\right)  }))^{T}\text{,\quad}\ell=1,...,N,\\
&  (X_{1,s}^{\left(  \ell\right)  },X_{2,s}^{\left(  \ell\right)  }%
)\in\mathcal{X}=\{1,...,I\}\times\{1,...,J\}\text{,\quad}s=1,...,n,
\end{align*}
and the single index probability vector (\ref{prob}) matches the double index
probability vector, in lexicographic order,%
\begin{align*}
\boldsymbol{p}\left(  \boldsymbol{\theta}\right)   &  =\left(  p_{11}\left(
\boldsymbol{\theta}\right)  ,p_{12}\left(  \boldsymbol{\theta}\right)
,...,p_{IJ}\left(  \boldsymbol{\theta}\right)  \right)  ^{T},\\
p_{ij}\left(  \boldsymbol{\theta}\right)   &  =\Pr(X_{1}=i,X_{2}%
=j)\text{,\quad}i=1,...,I\text{,\quad}j=1,...,J,
\end{align*}
i.e. in this case, we have $M=I\times J$ cells. The sample of counts given in
(\ref{samples}) for each cluster $\ell=1,...,N$\ can be denoted using the
double index notation, through%
\begin{equation}
Y_{ij}^{\left(  \ell\right)  }=%
{\textstyle\sum\limits_{s=1}^{n}}
I_{\left\{  (i,j)\right\}  }(X_{1,s}^{\left(  \ell\right)  },X_{2,s}^{\left(
\ell\right)  })\text{,\quad}i=1,....,I,\text{\quad}j=1,....,J.\label{sample2}%
\end{equation}
In this setting, we have a two-way contingency table with $I$ rows and $J$
columns for each cluster,%
\[
\boldsymbol{Y}^{\left(  \ell\right)  }=(Y_{11}^{\left(  \ell\right)  }%
,Y_{12}^{\left(  \ell\right)  },...,Y_{IJ}^{\left(  \ell\right)  })^{T},
\]
corresponding to the cells counts of two variables $X_{1}$ and $X_{2}$,
respectively. The independence model between $X_{1}$ and $X_{2}$\ is the most
important model for two-way contingency tables, defined primarily as
\[
p_{ij}\left(  \boldsymbol{\theta}\right)  =p_{i\bullet}\left(
\boldsymbol{\theta}\right)  p_{\bullet j}\left(  \boldsymbol{\theta}\right)
\text{,\quad}i=1,...,I\text{,\quad}j=1,...,J,
\]
where $p_{i\bullet}\left(  \boldsymbol{\theta}\right)  =%
{\textstyle\sum_{j=1}^{J}}
p_{ij}\left(  \boldsymbol{\theta}\right)  $, $p_{\bullet j}\left(
\boldsymbol{\theta}\right)  =%
{\textstyle\sum_{i=1}^{I}}
p_{ij}\left(  \boldsymbol{\theta}\right)  $, and expressed as%
\begin{equation}
\log p_{ij}\left(  \boldsymbol{\theta}\right)  =u+\theta_{1(i)}+\theta
_{2(j)}\text{,\quad}i=1,...,I,\text{\quad}j=1,...,J,\label{In1}%
\end{equation}
in terms of log-linear models, jointly with the restrictions to avoid
overparemeterization,
\[%
{\textstyle\sum\limits_{i=1}^{I}}
\theta_{1(i)}=%
{\textstyle\sum\limits_{j=1}^{J}}
\theta_{2(j)}=0.
\]

For the traditional multinomial log-linear models, the first and second order
moments of $\boldsymbol{Y}^{\left(  \ell\right)  }$ are%
\[
\mathrm{E}[\boldsymbol{Y}^{\left(  \ell\right)  }]=n\boldsymbol{p}\left(
\boldsymbol{\theta}\right)  \text{\quad and\quad}\mathrm{Var}[\boldsymbol{Y}%
^{\left(  \ell\right)  }]=n\boldsymbol{\Sigma}_{\boldsymbol{p}\left(
\boldsymbol{\theta}\right)  },
\]
where%
\begin{equation}
\boldsymbol{\Sigma}_{\boldsymbol{p}\left(  \boldsymbol{\theta}\right)
}=\boldsymbol{D}_{\boldsymbol{p}\left(  \boldsymbol{\theta}\right)
}-\boldsymbol{p}\left(  \boldsymbol{\theta}\right)  \boldsymbol{p}\left(
\boldsymbol{\theta}\right)  ^{T}, \label{var}%
\end{equation}
and $\boldsymbol{D}_{\boldsymbol{p}\left(  \boldsymbol{\theta}\right)  }$\ is
the diagonal matrix of $\boldsymbol{p}\left(  \boldsymbol{\theta}\right)  $.
In this paper, we shall assume the components of sample vectors
$\boldsymbol{Y}^{\left(  \ell\right)  }$ to be overdispersed with respect to
the model with multinomial sampling, i.e.,%
\begin{equation}
\mathrm{E}[\boldsymbol{Y}^{\left(  \ell\right)  }]=n\boldsymbol{p}\left(
\boldsymbol{\theta}\right)  \text{\quad and\quad}\mathrm{Var}[\boldsymbol{Y}%
^{\left(  \ell\right)  }]=\vartheta_{n}n\boldsymbol{\Sigma}_{\boldsymbol{p}%
\left(  \boldsymbol{\theta}\right)  }, \label{moments2}%
\end{equation}
with%
\begin{equation}
\vartheta_{n}=1+\left(  n-1\right)  \rho^{2}\in(1,n] \label{DF}%
\end{equation}
referred to as \textquotedblleft design effect\textquotedblright\ and
$\rho^{2}\in(0,1]$ to as \textquotedblleft intracluster correlation
coefficient\textquotedblright. Notice that $\vartheta_{n}=1$ would correspond
to the multinomial sampling with parameters $n$ and $\boldsymbol{p}\left(
\boldsymbol{\theta}\right)  $, which means that either the components of
$\boldsymbol{X}^{\left(  \ell\right)  }$ are mutually independent ($\rho
^{2}=0$) or there is a unique observation without possibility of being
correlated ($n=1$).

\label{page}In order to interpret $\rho^{2}$, we could consider $(Y_{r}%
^{\left(  \ell\right)  }|Z_{r}=p_{r}\left(  \boldsymbol{\theta}\right)  )\sim
Bin(n,p_{r}\left(  \boldsymbol{\theta}\right)  )$, with $Z_{r}$ being a
generic latent random variable which models the probability of success for
each of the individuals associated with $Y_{r}^{\left(  \ell\right)  }$, with
$\mathrm{E}[Z_{r}]=p_{r}\left(  \boldsymbol{\theta}\right)  $ and
$\mathrm{Var}[Z_{r}]=\mathrm{E}[Z_{r}^{2}]-\mathrm{E}^{2}[Z_{r}]$ has a
general shape. Since the support of $Z_{r}$ is $[0,1]$, it holds that
$Z_{r}\geq Z_{r}^{2}$ and so $\mathrm{E}[Z_{r}]\geq\mathrm{E}[Z_{r}^{2}]$ or%
\begin{align}
\mathrm{E}[Z_{r}]-\mathrm{E}^{2}[Z_{r}]  &  \geq\mathrm{E}[Z_{r}%
^{2}]-\mathrm{E}^{2}[Z_{r}]\nonumber\\
p_{r}\left(  \boldsymbol{\theta}\right)  (1-p_{r}\left(  \boldsymbol{\theta
}\right)  )  &  \geq\mathrm{Var}[Z_{r}]. \label{var2}%
\end{align}
From (\ref{var2}), there exists $\rho_{r}^{2}\in\lbrack0,1]$ such that
$\mathrm{Var}[Z_{r}]=\rho_{r}^{2}p_{r}\left(  \boldsymbol{\theta}\right)
(1-p_{r}\left(  \boldsymbol{\theta}\right)  )$ and thus%
\begin{align*}
\mathrm{E}[Y_{r}^{\left(  \ell\right)  }]  &  =\mathrm{E}[\mathrm{E}%
[Y_{r}^{\left(  \ell\right)  }|Z_{r}]]=np_{r}\left(  \boldsymbol{\theta
}\right)  ,\\
\mathrm{Var}[Y_{r}^{\left(  \ell\right)  }]  &  =\mathrm{E}[\mathrm{Var}%
[Y_{r}^{\left(  \ell\right)  }|Z_{r}]]+\mathrm{Var}[\mathrm{E}[Y_{r}^{\left(
\ell\right)  }|Z_{r}]]=\vartheta_{n}^{(r)}np_{r}\left(  \boldsymbol{\theta
}\right)  (1-p_{r}\left(  \boldsymbol{\theta}\right)  ),
\end{align*}
where $\vartheta_{n}^{(r)}=1+\left(  n-1\right)  \rho_{r}^{2}$. Since the same
degree of overdispersion is assumed over the $M$ categories, it holds
$\rho_{1}^{2}=\cdots=\rho_{M}^{2}=\rho^{2}$, $\vartheta_{n}^{(1)}%
=\cdots=\vartheta_{n}^{(M)}=\vartheta_{n}$, and now%
\[
\mathrm{Var}[Y_{r}^{\left(  \ell\right)  }]=\vartheta_{n}np_{r}\left(
\boldsymbol{\theta}\right)  (1-p_{r}\left(  \boldsymbol{\theta}\right)
),\text{\quad}r=1,....,M,\text{\quad}\ell=1,...,N,
\]
match the diagonal elements of the inflated variance-covariance matrix given
in (\ref{moments2}).

Ann and James (1995) presented an algorithm for generating overdispersed
binomial distributions. Some examples of distributions for $\boldsymbol{Y}%
^{\left(  \ell\right)  }$, with expectation vector and variance-covariance
given in (\ref{moments2}), are the following: the Dirichlet-multinomial, the
random-clumped multinomial and $n$-inflated multinomial distributions. The
Dirichlet-multinomial distribution%
\begin{equation}
\Pr(Y_{1}^{\left(  \ell\right)  }=y_{1},...,Y_{M}^{\left(  \ell\right)
}=y_{M})=\binom{n}{y_{1}\cdots y_{M}}\frac{\Gamma(c)}{\Gamma(n+c)}\frac{%
{\textstyle\prod_{r=1}^{M}}
\Gamma(y_{r}+cp_{r}\left(  \boldsymbol{\theta}\right)  )}{%
{\textstyle\prod_{r=1}^{M}}
\Gamma(cp_{r}\left(  \boldsymbol{\theta}\right)  )}, \label{model1}%
\end{equation}
where $y_{s}\in\mathcal{%
\mathbb{Z}
}^{+}$, $\sum_{r=1}^{M}y_{r}=n$, $c=\rho^{-2}(1-\rho^{2})$, $\binom{n}%
{y_{1}\cdots y_{M}}=n!/%
{\textstyle\prod\nolimits_{r=1}^{M}}
y_{r}!$ and $\Gamma(\cdot)$ denotes the gamma function, is due to Mosimann (1962).

The random-clumped multinomial distribution%
\begin{equation}
\Pr(\boldsymbol{Y}^{\left(  \ell\right)  }=\boldsymbol{y})=\sum_{r=1}^{M}%
p_{r}\left(  \boldsymbol{\theta}\right)  \Pr(\boldsymbol{U}^{(r)}%
=\boldsymbol{y}),\label{model2}%
\end{equation}
where $\boldsymbol{y}=(y_{1},...,y_{M})^{T}$, $y_{r}\in\mathcal{%
\mathbb{Z}
}^{+}$, $\sum_{r=1}^{M}y_{r}=n$, $\boldsymbol{U}^{(r)}$, $r=1,...,M$ are
multinomial random vectors%
\[
\boldsymbol{U}^{(r)}\sim\mathcal{M}(n,(1-\rho)\boldsymbol{p}\left(
\boldsymbol{\theta}\right)  +\rho\boldsymbol{e}_{r}),\text{\quad}r=1,...,M,
\]
$\rho\in\lbrack0,1]$ and $\boldsymbol{e}_{r}$ is $r$-th the unit vector of
dimension $M$ ($1$ in the $r$-th position and the rest elements are zero), is
due to Morel and Nagaraj (1993). The $n$-inflated multinomial distribution%
\begin{equation}
\Pr(\boldsymbol{Y}^{\left(  \ell\right)  }=\boldsymbol{y})=(1-\rho^{2}%
)\Pr(\boldsymbol{U}=\boldsymbol{y})+\rho^{2}%
{\textstyle\sum_{r=1}^{M}}
I_{\left\{  n\right\}  }(y_{r})p_{r}\left(  \boldsymbol{\theta}\right)
,\label{model3}%
\end{equation}
where $\boldsymbol{y}=(y_{1},...,y_{M})^{T}$, $y_{r}\in\mathcal{%
\mathbb{Z}
}^{+}$, $\sum_{r=1}^{M}y_{r}=n$,%
\[
\boldsymbol{U}\sim\mathcal{M}(n,\boldsymbol{p}\left(  \boldsymbol{\theta
}\right)  ),
\]
is due to Cohen (1976) and Altham (1976). The multinomial distribution, with
zero inflation in the first $M-1$ cells, i.e. $n$-inflation in the $M$-th cell%
\begin{equation}
\Pr(\boldsymbol{Y}^{\left(  \ell\right)  }=\boldsymbol{y})=w\Pr(\boldsymbol{U}%
=\boldsymbol{y})+(1-w)I_{\left\{  n\right\}  }(y_{r})(n),\label{modelo4}%
\end{equation}
for any $w\in(0,1)$, cannot be considered in general as a distribution with
expectation vector and variance-covariance given in (\ref{moments2}), but does
satisfy both moments for the special case of $M=2$, i.e. for the zero-inflated
binomial distribution. The details are given in Section \ref{SecA.0} in the
Appendix. The advantage of not using the distributional assumption in the
model is that we can address the estimation problem in a similar way done for
the multinomial sampling, if we correct the estimator of the variance-variance
covariance matrix through an appropriate estimator of the design effect or
intracluster correlation coefficient. The consistency of both estimators is an
important property in order to make statistical inference about the goodness
of fit or other kind of hypothesis testing.

Throughout this paper we shall assume at the beginning, that all the
contingency tables $\boldsymbol{Y}^{\left(  \ell\right)  }$, $\ell=1,...,N$,
have a common sample size, $n$. This assumption is often violated (e.g., due
to missing values). The extension of the results from equal cluster sizes to
the unequal cluster sizes is not difficult, nevertheless, as we are aware,
even for the quasi-likelihood methodology, no paper has previously provided an
explicit expression for a consistent estimator of the design effect
($\vartheta_{n}$) or intracluster correlation coefficient ($\rho^{2}$). We
shall present this extension in Section \ref{sec4}.

The contents of this paper are organized as follows. For two-way contingency
tables with overdispersion, Brier (1980) analyzed the independence model but
using a parametrization different from the log-linear modeling given in
(\ref{In1}). An advantage of using the log-linear model parametrization is
that the estimation of the interaction parameter could provide some insight on
the appropriate log-linear model before considering the goodness-of-fit test.
In three-way contingency tables with overdispersion, the log-linear modeling
makes clearly simpler the statistical inference needed for model fitting.
Motivated by these facts, the second purpose of this paper is to present a new
family of estimators useful for log-linear modeling with overdispersion, under
the mild assumption that the distribution of the contingency tables\ is not
specified but it is suppose to hold (\ref{moments2}). These new estimators are
the quasi\ minimum divergence estimators. We shall refer them in Section
\ref{sec2}. One member of these estimators is the so-called quasi maximum
likelihood estimator. Their corresponding asymptotic properties are also
shown. We shall propose in Section \ref{sec5prev} new estimators for
$\vartheta_{n}$ and $\rho^{2}$. The assumption of equal cluster sizes is
generalized to unequal cluster sizes in Section \ref{sec4}. In this setting on
one hand, a new family of consistent estimators of the design effect or
intracluster correlation coefficient is provided, and on the other hand a new
estimator is proposed for the special case of large cluster sizes. Two
numerical examples illustrate the practical application of the new proposed
estimators in Section \ref{sec5} and a simulation study is presented in
Section \ref{sec6} by using distributions for the contingency tables, related
to (\ref{model1}), (\ref{model2}), (\ref{model3}). Finally, in Section
\ref{sec7} some concluding remarks are provided.

\section{Quasi minimum $\phi$-divergence estimator for log-linear models with
complex sampling\label{sec2}}

The nonparametric estimator of $\boldsymbol{p}\left(  \boldsymbol{\theta
}\right)  $ based on $N$ clusters is%
\begin{equation}
\widehat{\boldsymbol{p}}=\frac{1}{nN}\sum\limits_{\ell=1}^{N}\boldsymbol{Y}%
^{\left(  \ell\right)  }, \label{pHat}%
\end{equation}
i.e. $\widehat{\boldsymbol{p}}=(\widehat{p}_{1},...,\widehat{p}_{M})^{T}$,
with $\widehat{p}_{r}=\frac{1}{nN}\sum\limits_{\ell=1}^{N}Y_{r}^{\left(
\ell\right)  }$. This global estimator can also be expressed through the
average of the nonparametric estimators of $\boldsymbol{p}\left(
\boldsymbol{\theta}\right)  $, based on the $\ell$-th cluster,
$\widehat{\boldsymbol{p}}^{(\ell)}=\frac{1}{n}\boldsymbol{Y}^{\left(
\ell\right)  }$, $\ell=1,...,N$,
\[
\widehat{\boldsymbol{p}}=\frac{1}{N}\sum\limits_{\ell=1}^{N}%
\widehat{\boldsymbol{p}}^{(\ell)},
\]
i.e. $\widehat{\boldsymbol{p}}^{(\ell)}=(\widehat{p}_{1}^{(\ell)}%
,...,\widehat{p}_{M}^{(\ell)})^{T}$, with $\widehat{p}_{r}^{(\ell)}=\frac
{1}{n}Y_{r}^{(\ell)}$, $\ell=1,...,N$.

In the particular case of $\boldsymbol{Y}^{\left(  \ell\right)  }$ with
multinomial distribution ($\vartheta_{n}=1$), $\ell=1,...,N$, $\boldsymbol{Y}%
=\sum\limits_{\ell=1}^{N}\boldsymbol{Y}^{\left(  \ell\right)  }$ is also
multinomial, $\boldsymbol{Y}=(Y_{1},...,Y_{M})^{T}\boldsymbol{\sim}%
\mathcal{M}(nN,\boldsymbol{p}\left(  \boldsymbol{\theta}\right)  )$. Obtaining
the MLE of $\boldsymbol{\theta}$\ consists in maximizing%
\begin{equation}
\Pr\left(  Y_{1}=y_{1},...,Y_{M}=y_{M}\right)  =\binom{nN}{y_{1}\cdots y_{M}%
}p_{1}\left(  \boldsymbol{\theta}\right)  ^{y_{1}}\times\cdots\times
p_{M}\left(  \boldsymbol{\theta}\right)  ^{y_{M}} \label{mult}%
\end{equation}
or equivalently
\[
\log\Pr\left(  Y_{1}=y_{1},...,Y_{M}=y_{M}\right)  =-nNd_{Kullback}%
(\widehat{\boldsymbol{p}},\boldsymbol{p}\left(  \boldsymbol{\theta}\right)
)+k,
\]
where $k$ is a constant independent from the parameter $\boldsymbol{\theta,}$
$d_{Kullback}(\widehat{\boldsymbol{p}},\boldsymbol{p}\left(
\boldsymbol{\theta}\right)  )$ is the Kullback-Leibler divergence between the
probability vectors $\widehat{\boldsymbol{p}}$ and $\boldsymbol{p}\left(
\boldsymbol{\theta}\right)  $, i.e.,
\[
d_{Kullback}(\widehat{\boldsymbol{p}},\boldsymbol{p}\left(  \boldsymbol{\theta
}\right)  )=\sum\limits_{r=1}^{M}\widehat{p}_{r}\log\frac{\widehat{p}_{r}%
}{p_{r}\left(  \boldsymbol{\theta}\right)  }.
\]
Therefore the MLE of $\boldsymbol{\theta}$ for the multinomial model is given
by the value $\widehat{\boldsymbol{\theta}}=\widehat{\boldsymbol{\theta}%
}\left(  \boldsymbol{Y}\right)  $ such that
\begin{equation}
\widehat{\boldsymbol{\theta}}\left(  \boldsymbol{Y}\right)  =\arg\min
_{\theta\in\Theta}d_{Kullback}(\widehat{\boldsymbol{p}},\boldsymbol{p}\left(
\boldsymbol{\theta}\right)  ). \label{Kull}%
\end{equation}
Let $\phi$ a convex function $\phi\left(  x\right)  $, $x>0$, such that at
$x=1$,\ $\phi\left(  1\right)  =0$,\ $\phi^{\prime}\left(  1\right)  =0$,
$\phi^{\prime\prime}\left(  1\right)  >0$, at $x=0$,\ $0\phi\left(
0/0\right)  =0$ and $0\phi\left(  p/0\right)  =\underset{u\rightarrow
\infty}{\lim}p\phi\left(  u\right)  /u$. It is well-known that the
Kullback-Leibler divergence is a particular case of the so-called
phi-divergence measures between the probability vectors
$\widehat{\boldsymbol{p}}$ and $\boldsymbol{p}\left(  \boldsymbol{\theta
}\right)  $, given by%
\begin{equation}
d_{\phi}\left(  \widehat{\boldsymbol{p}},\boldsymbol{p}\left(
\boldsymbol{\theta}\right)  \right)  =\sum\limits_{r=1}^{M}p_{r}\left(
\boldsymbol{\theta}\right)  \phi\!\left(  \frac{\widehat{p}_{r}}{p_{r}\left(
\boldsymbol{\theta}\right)  }\right)  . \label{6}%
\end{equation}
More thoroughly, taking
\begin{equation}
\phi(x)=x\log x-x+1, \label{phiKull}%
\end{equation}
(\ref{6}) is the Kullback-Leibler divergence between $\widehat{\boldsymbol{p}%
}$ and $\boldsymbol{p}\left(  \boldsymbol{\theta}\right)  $. For more details
about phi-divergence measures see Pardo (2006). The phi-divergence based
estimators for multinomial log-linear models is not new, see for instance
Cressie and Pardo (2000, 2003), Cressie et al. (2003), Mart\'{\i}n and Pardo
(2008a, 2008b, 2010, 2011, 2012).

When $\rho^{2}=0$, notice that a unique contingency table associated with one
cluster ($N=1$), $\boldsymbol{Y=\boldsymbol{Y}}^{(1)}\boldsymbol{\sim
}\mathcal{M}(n,\boldsymbol{p}\left(  \boldsymbol{\theta}\right)  )$, is enough
for having a suitable sample for making asymptotic statistical inference,
since the Weak Law of Large Numbers and the Central Limit Theorem can be
applied directly inside the unique cluster, making the number independent
observation inside, $n$, large enough. Nevertheless, when $\rho^{2}>0$, $n$ is
fixed and $N$ must be large enough. The following definition shows that the
phi-divergences, given in (\ref{6}), are also useful for the general case of
$\rho^{2}\geq0$. Unlike the multinomial sampling ($\rho^{2}=0$), for clustered
multinomial log-linear models ($\rho^{2}>0$) the knowledge of the shape of the
moments, given in (\ref{moments2}), is only assumed. Since no underlying
distribution is being assumed and only mild assumptions on the first two
moments of a distribution are taken into account, the estimator of
$\boldsymbol{\theta}$ is termed \textquotedblleft quasi minimum $\phi
$-divergence estimator\textquotedblright\ of $\boldsymbol{\theta}$ (in the
sequel, QM$\phi$E), defined for the first time for a more general setting in
Vos (1992).\bigskip

\begin{definition}
\label{DefMin}We consider a statistical model verifying (\ref{moments2}). The
QM$\phi$E of $\boldsymbol{\theta}$, $\widehat{\boldsymbol{\theta}}_{\phi
}=\widehat{\boldsymbol{\theta}}_{\phi}\left(  \boldsymbol{Y}\right)  $, is
defined as
\begin{equation}
\widehat{\boldsymbol{\theta}}_{\phi}\left(  \boldsymbol{Y}\right)  =\arg
\min_{\theta\in\Theta}d_{\phi}(\widehat{\boldsymbol{p}},\boldsymbol{p}\left(
\boldsymbol{\theta}\right)  ). \label{70}%
\end{equation}
where $d_{\phi}(\widehat{\boldsymbol{p}},\boldsymbol{p}\left(
\boldsymbol{\theta}\right)  )$, the phi-divergence measure between the
probability vectors $\widehat{\boldsymbol{p}}$ and $\boldsymbol{p}\left(
\boldsymbol{\theta}\right)  $,\ is given by (\ref{6}).
\end{definition}

From a practical point of view, in order to find the quasi minimum $\phi$
divergence estimator of $\boldsymbol{\theta}$ for clustered multinomial
log-linear models, we have to solve the following system of equations%
\begin{equation}
\boldsymbol{W}^{T}\boldsymbol{\Sigma}_{p(\boldsymbol{\theta})}\boldsymbol{D}%
_{\boldsymbol{p}(\theta)}^{-1}\boldsymbol{\Psi}^{\phi}(\boldsymbol{\theta
})=\boldsymbol{0}_{M_{0}}, \label{eq}%
\end{equation}
where $\boldsymbol{\Sigma}_{p(\boldsymbol{\theta})}\boldsymbol{D}%
_{\boldsymbol{p}(\theta)}^{-1}=\boldsymbol{I}_{M}-\boldsymbol{p}\left(
\boldsymbol{\theta}\right)  \boldsymbol{1}_{M}^{T}$,%
\begin{align*}
\boldsymbol{\Psi}^{\phi}(\boldsymbol{\theta})  &  =(\Psi_{1}^{\phi
}(\boldsymbol{\theta}),...,\Psi_{M}^{\phi}(\boldsymbol{\theta}))^{T},\\
\Psi_{r}^{\phi}(\boldsymbol{\theta})  &  =\widehat{p}_{r}\phi^{\prime
}\!\left(  \frac{\widehat{p}_{r}}{p_{r}\left(  \boldsymbol{\theta}\right)
}\right)  -p_{r}\left(  \boldsymbol{\theta}\right)  \phi\!\left(
\frac{\widehat{p}_{r}}{p_{r}\left(  \boldsymbol{\theta}\right)  }\right)
,\quad r=1,...,M.
\end{align*}
This expression arises from considering%
\[
\dfrac{\partial}{\partial\theta_{g}}d_{\phi}\left(  \widehat{\boldsymbol{p}%
},\boldsymbol{p}\left(  \boldsymbol{\theta}\right)  \right)  =-\sum
\limits_{r=1}^{M}\dfrac{\partial p_{r}(\boldsymbol{\theta})}{\partial
\theta_{g}}\frac{\Psi_{r}^{\phi}(\boldsymbol{\theta})}{p_{r}%
(\boldsymbol{\theta})},\quad g=1,...,M_{0}.
\]
These equations are nonlinear functions of the unknown parameter,
$\boldsymbol{\theta}$. In order to solve these equations numerically the
Newton-Raphson method is used, in such a way that the $(t+1)$th-step estimate,
$\widehat{\boldsymbol{\theta}}_{\phi}^{\left(  t+1\right)  }$, is obtained
from $\widehat{\boldsymbol{\theta}}_{\phi}^{\left(  t\right)  }$ as
\[
\widehat{\boldsymbol{\theta}}_{\phi}^{\left(  t+1\right)  }%
=\widehat{\boldsymbol{\theta}}_{\phi}^{\left(  t\right)  }-\boldsymbol{G}%
_{\phi}^{-1}(\widehat{\boldsymbol{\theta}}_{\phi}^{\left(  t\right)
})\boldsymbol{W}^{T}\boldsymbol{\Sigma}_{p(\widehat{\boldsymbol{\theta}}%
_{\phi}^{\left(  t\right)  })}\boldsymbol{D}_{\boldsymbol{p}%
(\widehat{\boldsymbol{\theta}}_{\phi}^{\left(  t\right)  })}^{-1}%
\boldsymbol{\Psi}^{\phi}(\widehat{\boldsymbol{\theta}}_{\phi}^{\left(
t\right)  }),
\]
where
\begin{align*}
\boldsymbol{G}_{\phi}(\boldsymbol{\theta})  &  =\left(  G_{\phi,g,h}%
(\boldsymbol{\theta})\right)  _{g,h=1,...,M_{0}}=\boldsymbol{W}^{T}%
\boldsymbol{\Sigma}_{p(\boldsymbol{\theta})}\boldsymbol{D}%
_{p(\boldsymbol{\theta})}^{-1}\boldsymbol{D}_{\overline{\boldsymbol{\Psi}%
}^{\phi}(\boldsymbol{\theta})}\boldsymbol{D}_{p(\boldsymbol{\theta})}%
^{-1}\boldsymbol{\Sigma}_{p(\boldsymbol{\theta})}\boldsymbol{W}\\
&  =\boldsymbol{W}^{T}\left(  \boldsymbol{I}_{M}-\boldsymbol{p}%
(\boldsymbol{\theta})\boldsymbol{1}_{M}^{T}\right)  \boldsymbol{D}%
_{\overline{\boldsymbol{\Psi}}^{\phi}(\boldsymbol{\theta})}\left(
\boldsymbol{I}_{M}-\boldsymbol{1}_{M}\boldsymbol{p}^{T}(\boldsymbol{\theta
})\right)  \boldsymbol{W},
\end{align*}
with%
\begin{align*}
G_{\phi,g,h}(\boldsymbol{\theta})  &  =\dfrac{\partial^{2}}{\partial\theta
_{g}\partial\theta_{h}}d_{\phi}\left(  \widehat{\boldsymbol{p}},\boldsymbol{p}%
\left(  \boldsymbol{\theta}\right)  \right) \\
&  =\sum\limits_{r=1}^{M}{\phi}^{\prime\prime}\!\left(  \frac{\widehat{p}_{r}%
}{p_{r}(\boldsymbol{\theta})}\right)  \left(  \frac{\widehat{p}_{r}}%
{p_{r}(\boldsymbol{\theta})}\right)  ^{2}\dfrac{1}{p_{r}(\mathbf{\theta}%
)}{\dfrac{\partial p_{r}(\boldsymbol{\theta})}{\partial\theta_{g}}%
\dfrac{\partial p_{r}(\boldsymbol{\theta})}{\partial\theta_{h}}}%
-\sum\limits_{r=1}^{M}\dfrac{\partial^{2}p_{r}(\boldsymbol{\theta})}%
{\partial\theta_{g}\partial\theta_{h}}\frac{\Psi^{\phi}(\boldsymbol{\theta}%
)}{p_{r}(\boldsymbol{\theta})}\\
&  =\sum\limits_{r=1}^{M}\overline{\Psi}_{r}^{\phi}(\boldsymbol{\theta}%
)\frac{1}{p_{r}^{2}(\boldsymbol{\theta})}{\dfrac{\partial p_{r}%
(\boldsymbol{\theta})}{\partial\theta_{g}}\dfrac{\partial p_{r}%
(\boldsymbol{\theta})}{\partial\theta_{h}}},
\end{align*}%
\begin{align*}
\overline{\boldsymbol{\Psi}}^{\phi}(\boldsymbol{\theta})  &  =(\overline{\Psi
}_{1}^{\phi}(\boldsymbol{\theta}),...,\overline{\Psi}_{M}^{\phi}%
(\boldsymbol{\theta}))^{T},\\
\overline{\Psi}_{r}^{\phi}(\boldsymbol{\theta})  &  ={\phi}^{\prime\prime
}\!\left(  \dfrac{\widehat{p}_{r}}{p_{r}(\boldsymbol{\theta})}\right)
\dfrac{\widehat{p}_{r}^{2}}{p_{r}(\boldsymbol{\theta})}-\Psi_{r}^{\phi
}(\boldsymbol{\theta}),\quad r=1,...,M.
\end{align*}
It is worthwhile of mentioning that the quasi maximum\ likelihood estimators
(QMLE), introduced by Wedderburn (1974), are very useful for clustered
multinomial models, in particular for multinomial log-linear models. The QMLEs
of $\boldsymbol{\theta}$ are obtained by solving the system of non-linear
equations (\ref{eq}) with ${\phi}$ given by (\ref{phiKull}), i.e.,
\[
\boldsymbol{\Psi}^{\phi}(\boldsymbol{\theta})=\widehat{\boldsymbol{p}%
}-\boldsymbol{p}\left(  \boldsymbol{\theta}\right)  ,
\]
and the expression of these estimators match the ones of the minimum
Kullback-Leibler divergence estimators given in (\ref{Kull}). Hence, the
QM$\phi$Es are generalizations of the QMLEs, for clustered multinomial
log-linear models.

Under the assumption that the $QM\phi E$ exists, Newton-Raphson method tends
always to converge with any initialization or guess of parameters. However,
for some samples, such as contingency tables with outlying cells or
high-dimensional parameters, $d_{\phi}(\widehat{\boldsymbol{p}},\boldsymbol{p}%
(\boldsymbol{\theta}))$, can be quite flat and can create troubles related to
precision, even with no null frequencies. Such problems are related to Hessian
matrices near to be singular. In addition, if the $QM\phi E$ fails to exist,
the algorithm does not converge, since the Hessian matrix of $d_{\phi
}(\widehat{\boldsymbol{p}},\boldsymbol{p}(\boldsymbol{\theta}))$ becomes
iteratively to being singular. The Newton-Raphson method converges in general
in few iterations, but is convenient to begin properly the iterative process
with the weighted least squares method of fitting log-linear models (Grizzle,
Stramer and Koch (1969)), i.e. taking
\[%
\begin{pmatrix}
\widehat{u}\\
\widehat{\boldsymbol{\theta}}%
\end{pmatrix}
=\left(  \boldsymbol{X}^{T}\boldsymbol{D}_{\widehat{\boldsymbol{p}}%
}\boldsymbol{X}\right)  ^{-1}\boldsymbol{X}^{T}\boldsymbol{D}%
_{\widehat{\boldsymbol{p}}}\log\widehat{\boldsymbol{p}}%
\]
with $\boldsymbol{X}=(\boldsymbol{1}_{M},\boldsymbol{W})$ and removing from it
$\widehat{u}$, the independent term. A count of $\widehat{p}_{r}=0$ is
problematic, so one could set $\widehat{p}_{r}=\frac{1}{2}$.

In order to ensure an algorithm with proper convergence properties in both
cases at the same time, with zero frequencies or not, there are several
possibilities. Quasi-Newton and conjugate gradient methods require only
evaluating gradients, being quasi-Newton methods faster but more storage
demanding. For this reason, we have applied in the simulation study the
Fortran \texttt{NAG} subroutine \texttt{C05PBF}, which is based on a modified
version of the Powell's hybrid algorithm, a combination of quasi-Newton and
conjugate gradient methods. It is also worth of mentioning that the derivative
free algorithms (Nelder-Mead, Hooke-Jeeves, Torczon) constitute a robust
choice with respect to the initial point, and could be particularly useful for
contingency tables with outlying cells.

\begin{theorem}
\label{Th1}Let $\widehat{\boldsymbol{\theta}}_{\phi}$ be the QM$\phi$E for the
unknown parameter $\boldsymbol{\theta}$ of the clustered multinomial
log-linear models, then it holds

\begin{enumerate}
\item[i)]
\begin{equation}
\sqrt{N}(\widehat{\boldsymbol{\theta}}_{\phi}-\boldsymbol{\theta}%
_{0})\overset{\mathcal{L}}{\underset{N\rightarrow\infty}{\longrightarrow}%
}\mathcal{N}(\boldsymbol{0}_{M_{0}},\tfrac{\vartheta_{n}}{n}\left(
\boldsymbol{\boldsymbol{W}}^{T}\boldsymbol{\Sigma\boldsymbol{_{\boldsymbol{p}%
\left(  \theta_{0}\right)  }}W}\right)  ^{-1}),\label{9}%
\end{equation}

\item[ii)]
\begin{equation}
\sqrt{N}(\boldsymbol{p}(\widehat{\boldsymbol{\theta}}_{\phi})-\boldsymbol{p}%
(\boldsymbol{\theta}_{0}))\overset{\mathcal{L}}{\underset{N\rightarrow
\infty}{\longrightarrow}}\mathcal{N}(\boldsymbol{0}_{M},\tfrac{\vartheta_{n}%
}{n}\boldsymbol{\Sigma\boldsymbol{_{\boldsymbol{p}\left(  \theta_{0}\right)
}}W}\left(  \boldsymbol{\boldsymbol{W}}^{T}\boldsymbol{\Sigma
\boldsymbol{_{\boldsymbol{p}\left(  \theta_{0}\right)  }}W}\right)
^{-1}\boldsymbol{W}^{T}\boldsymbol{\Sigma}_{p\left(  \boldsymbol{\theta}%
_{0}\right)  }),\label{10}%
\end{equation}
with $\boldsymbol{\theta}_{0}$\ being the true and unknown value of
$\boldsymbol{\theta}$.
\end{enumerate}
\end{theorem}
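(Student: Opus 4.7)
The plan is to derive the asymptotic distribution of $\widehat{\boldsymbol{\theta}}_{\phi}$ via the usual M-estimator recipe: prove consistency, linearize the estimating equation, and then apply the central limit theorem to the linearization. Part (ii) will then follow at once from part (i) by the delta method, because $\partial \boldsymbol{p}(\boldsymbol{\theta})/\partial\boldsymbol{\theta}^{T}=\boldsymbol{\Sigma}_{\boldsymbol{p}(\boldsymbol{\theta})}\boldsymbol{W}$ for the log-linear model (\ref{model}).

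First I would establish the basic CLT for the pooled nonparametric estimator. Since $\widehat{\boldsymbol{p}}=\frac{1}{N}\sum_{\ell=1}^{N}\widehat{\boldsymbol{p}}^{(\ell)}$ is an average of i.i.d. random vectors with mean $\boldsymbol{p}(\boldsymbol{\theta}_0)$ and covariance $\frac{\vartheta_n}{n}\boldsymbol{\Sigma}_{\boldsymbol{p}(\boldsymbol{\theta}_0)}$ by (\ref{moments2}), the classical multivariate CLT gives
\[
\sqrt{N}\bigl(\widehat{\boldsymbol{p}}-\boldsymbol{p}(\boldsymbol{\theta}_0)\bigr)\overset{\mathcal{L}}{\longrightarrow}\mathcal{N}\!\Bigl(\boldsymbol{0}_M,\tfrac{\vartheta_n}{n}\boldsymbol{\Sigma}_{\boldsymbol{p}(\boldsymbol{\theta}_0)}\Bigr).
\]
Consistency of $\widehat{\boldsymbol{\theta}}_{\phi}$ then follows from the standard argument for M-estimators: $d_{\phi}(\widehat{\boldsymbol{p}},\boldsymbol{p}(\boldsymbol{\theta}))$ converges (uniformly on compact sets) to $d_{\phi}(\boldsymbol{p}(\boldsymbol{\theta}_0),\boldsymbol{p}(\boldsymbol{\theta}))$, whose unique minimum is $\boldsymbol{\theta}_0$ because $\phi$ satisfies $\phi(1)=\phi'(1)=0$ with $\phi''(1)>0$ and $\boldsymbol{W}$ has full column rank.

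Next I linearise the estimating equation (\ref{eq}). A Taylor expansion of the gradient $-\boldsymbol{H}(\boldsymbol{\theta})=\partial d_{\phi}/\partial\boldsymbol{\theta}$ around $\boldsymbol{\theta}_0$ gives, using $\boldsymbol{H}(\widehat{\boldsymbol{\theta}}_{\phi})=\boldsymbol{0}$ and $\partial^{2}d_{\phi}/\partial\boldsymbol{\theta}\partial\boldsymbol{\theta}^{T}=\boldsymbol{G}_{\phi}$,
\[
\widehat{\boldsymbol{\theta}}_{\phi}-\boldsymbol{\theta}_0=\boldsymbol{G}_{\phi}(\boldsymbol{\theta}^{*})^{-1}\,\boldsymbol{H}(\boldsymbol{\theta}_0),
\]
for some $\boldsymbol{\theta}^{*}$ between $\widehat{\boldsymbol{\theta}}_{\phi}$ and $\boldsymbol{\theta}_0$. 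By the consistency already established and the continuous mapping theorem, $\boldsymbol{G}_{\phi}(\boldsymbol{\theta}^{*})\overset{P}{\to}\phi''(1)\,\boldsymbol{W}^{T}\boldsymbol{\Sigma}_{\boldsymbol{p}(\boldsymbol{\theta}_0)}\boldsymbol{W}$, since $\overline{\Psi}_r^{\phi}(\boldsymbol{\theta}_0)\overset{P}{\to}\phi''(1)p_r(\boldsymbol{\theta}_0)$ and the identity $\boldsymbol{\Sigma}_{\boldsymbol{p}}\boldsymbol{D}_{\boldsymbol{p}}^{-1}\boldsymbol{\Sigma}_{\boldsymbol{p}}=\boldsymbol{\Sigma}_{\boldsymbol{p}}$ simplifies $\boldsymbol{G}_{\phi}(\boldsymbol{\theta}_0)$ in the limit. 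For the score term I would expand $\Psi_r^{\phi}(\boldsymbol{\theta}_0)=\widehat{p}_r\phi'(\widehat{p}_r/p_r(\boldsymbol{\theta}_0))-p_r(\boldsymbol{\theta}_0)\phi(\widehat{p}_r/p_r(\boldsymbol{\theta}_0))$ around $\widehat{p}_r=p_r(\boldsymbol{\theta}_0)$, using $\phi(1)=\phi'(1)=0$, to get $\Psi_r^{\phi}(\boldsymbol{\theta}_0)=\phi''(1)\bigl(\widehat{p}_r-p_r(\boldsymbol{\theta}_0)\bigr)+O_P(N^{-1})$. Combined with $\boldsymbol{\Sigma}_{\boldsymbol{p}(\boldsymbol{\theta}_0)}\boldsymbol{D}_{\boldsymbol{p}(\boldsymbol{\theta}_0)}^{-1}=\boldsymbol{I}_M-\boldsymbol{p}(\boldsymbol{\theta}_0)\boldsymbol{1}_M^{T}$ and $\boldsymbol{1}_M^{T}(\widehat{\boldsymbol{p}}-\boldsymbol{p}(\boldsymbol{\theta}_0))=0$, this yields
\[
\boldsymbol{H}(\boldsymbol{\theta}_0)=\phi''(1)\,\boldsymbol{W}^{T}\bigl(\widehat{\boldsymbol{p}}-\boldsymbol{p}(\boldsymbol{\theta}_0)\bigr)+O_P(N^{-1}),
\]
so that
\[
\sqrt{N}(\widehat{\boldsymbol{\theta}}_{\phi}-\boldsymbol{\theta}_0)=\bigl(\boldsymbol{W}^{T}\boldsymbol{\Sigma}_{\boldsymbol{p}(\boldsymbol{\theta}_0)}\boldsymbol{W}\bigr)^{-1}\boldsymbol{W}^{T}\sqrt{N}\bigl(\widehat{\boldsymbol{p}}-\boldsymbol{p}(\boldsymbol{\theta}_0)\bigr)+o_P(1).
\]
Slutsky's lemma with the CLT above then gives (\ref{9}), since $(\boldsymbol{W}^{T}\boldsymbol{\Sigma}\boldsymbol{W})^{-1}\boldsymbol{W}^{T}\boldsymbol{\Sigma}\boldsymbol{W}(\boldsymbol{W}^{T}\boldsymbol{\Sigma}\boldsymbol{W})^{-1}=(\boldsymbol{W}^{T}\boldsymbol{\Sigma}\boldsymbol{W})^{-1}$. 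For part (ii), a further Taylor expansion of $\boldsymbol{p}(\boldsymbol{\theta})$ at $\boldsymbol{\theta}_0$ with Jacobian $\boldsymbol{\Sigma}_{\boldsymbol{p}(\boldsymbol{\theta}_0)}\boldsymbol{W}$ and one more application of the delta method yields (\ref{10}).

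The main obstacle I anticipate is controlling the remainder in the linearisation uniformly over $\phi$, because the key cancellations leading to the clean score $\phi''(1)(\widehat{\boldsymbol{p}}-\boldsymbol{p}(\boldsymbol{\theta}_0))$ rely on the specific conditions $\phi(1)=\phi'(1)=0$ and the projection identity $(\boldsymbol{I}_M-\boldsymbol{p}\boldsymbol{1}_M^{T})\boldsymbol{1}_M=\boldsymbol{0}$; a careful bookkeeping of the second-order Taylor remainder in $\Psi_r^\phi$, together with the consistency of $\widehat{\boldsymbol{\theta}}_\phi$ and the continuous dependence of $\boldsymbol{G}_\phi(\cdot)$ on $\boldsymbol{\theta}$ and $\widehat{\boldsymbol{p}}$, is what has to be done to justify the $o_P(1)$ claim rigorously.
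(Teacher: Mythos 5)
Your proposal is correct, and its overall architecture coincides with the paper's: a CLT for $\widehat{\boldsymbol{p}}$ (the paper's (\ref{CLT})), an asymptotic linearization of $\sqrt{N}(\widehat{\boldsymbol{\theta}}_{\phi}-\boldsymbol{\theta}_{0})$ in terms of $\sqrt{N}(\widehat{\boldsymbol{p}}-\boldsymbol{p}(\boldsymbol{\theta}_{0}))$, the projection identities $\boldsymbol{\Sigma}_{\boldsymbol{p}}\boldsymbol{D}_{\boldsymbol{p}}^{-1}\boldsymbol{\Sigma}_{\boldsymbol{p}}=\boldsymbol{\Sigma}_{\boldsymbol{p}}$ and $\boldsymbol{1}_{M}^{T}(\widehat{\boldsymbol{p}}-\boldsymbol{p}(\boldsymbol{\theta}_{0}))=\boldsymbol{0}$ to clean up the covariance, and the delta method with Jacobian $\boldsymbol{\Sigma}_{\boldsymbol{p}(\boldsymbol{\theta}_{0})}\boldsymbol{W}$ for part (ii), exactly as in (\ref{8b}). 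The genuine difference is how the linchpin expansion is justified: the paper simply imports it from Pardo (2006, formula (7.10)) as equation (\ref{8}), whereas you derive it from scratch by the M-estimation recipe — consistency of the minimizer, a mean-value expansion of the estimating equation (\ref{eq}), the limit $\boldsymbol{G}_{\phi}(\boldsymbol{\theta}^{*})\overset{P}{\to}\phi''(1)\boldsymbol{W}^{T}\boldsymbol{\Sigma}_{\boldsymbol{p}(\boldsymbol{\theta}_{0})}\boldsymbol{W}$, and the second-order expansion $\Psi_{r}^{\phi}(\boldsymbol{\theta}_{0})=\phi''(1)(\widehat{p}_{r}-p_{r}(\boldsymbol{\theta}_{0}))+O_{P}(N^{-1})$, with the $\phi''(1)$ factors cancelling. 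Your expansion $\sqrt{N}(\widehat{\boldsymbol{\theta}}_{\phi}-\boldsymbol{\theta}_{0})=(\boldsymbol{W}^{T}\boldsymbol{\Sigma}_{\boldsymbol{p}(\boldsymbol{\theta}_{0})}\boldsymbol{W})^{-1}\boldsymbol{W}^{T}\sqrt{N}(\widehat{\boldsymbol{p}}-\boldsymbol{p}(\boldsymbol{\theta}_{0}))+o_{P}(1)$ agrees with (\ref{8}) precisely because $\boldsymbol{W}^{T}\boldsymbol{\Sigma}_{\boldsymbol{p}}\boldsymbol{D}_{\boldsymbol{p}}^{-1}(\widehat{\boldsymbol{p}}-\boldsymbol{p})=\boldsymbol{W}^{T}(\widehat{\boldsymbol{p}}-\boldsymbol{p})$, and your covariance simplification $(\boldsymbol{W}^{T}\boldsymbol{\Sigma}\boldsymbol{W})^{-1}\boldsymbol{W}^{T}\boldsymbol{\Sigma}\boldsymbol{W}(\boldsymbol{W}^{T}\boldsymbol{\Sigma}\boldsymbol{W})^{-1}=(\boldsymbol{W}^{T}\boldsymbol{\Sigma}\boldsymbol{W})^{-1}$ is, if anything, slightly more direct than the paper's route through $\boldsymbol{\Sigma}\boldsymbol{D}^{-1}\boldsymbol{\Sigma}=\boldsymbol{\Sigma}$. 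What your approach buys is self-containedness and an explicit view of where the hypotheses $\phi(1)=\phi'(1)=0$, $\phi''(1)>0$ and the full-rank assumption on $\boldsymbol{W}$ enter; what it costs is the remainder bookkeeping you correctly flag (uniform convergence for consistency, componentwise mean-value expansion, continuity of $\boldsymbol{G}_{\phi}$ in both $\boldsymbol{\theta}$ and $\widehat{\boldsymbol{p}}$), all of which the paper avoids by citation and all of which are standard to fill in.
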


\begin{proof}
The proof is given in Section \ref{SecA.2} of the Appendix.
\end{proof}

\section{Consistent estimator for $\vartheta_{n}$ and $\rho^{2}$%
\label{sec5prev}}

We consider $\widehat{\boldsymbol{p}}$ and $\boldsymbol{p}\left(
\boldsymbol{\theta}\right)  $ defined in (\ref{pHat}) and (\ref{model})
respectively. By the Weak Law of Large Numbers, it holds%
\[
\widehat{\boldsymbol{p}}\overset{P}{\underset{N\rightarrow\infty
}{\longrightarrow}}\boldsymbol{p}\left(  \boldsymbol{\theta}_{0}\right)  ,
\]
and applying the Central Limit Theorem, it follows that
\begin{equation}
\sqrt{N}\left(  \widehat{\boldsymbol{p}}-\boldsymbol{p}\left(
\boldsymbol{\theta}_{0}\right)  \right)  \overset{\mathcal{L}%
}{\underset{N\rightarrow\infty}{\longrightarrow}}\mathcal{N(}\boldsymbol{0}%
_{M},\tfrac{\vartheta_{n}}{n}\boldsymbol{\Sigma}_{\boldsymbol{p}\left(
\boldsymbol{\theta}_{0}\right)  }), \label{CLT}%
\end{equation}
where $\boldsymbol{\Sigma}_{\boldsymbol{p}\left(  \boldsymbol{\theta}\right)
}$ was given in (\ref{var}).

\begin{remark}
Notice that $\tfrac{\vartheta_{n}}{n}=\frac{1}{n}+\frac{n-1}{n}\rho^{2}$ is an
increasing function of the intracluster correlation, $\rho^{2}$: with
$\rho^{2}=\frac{k-1}{n-1}$ we obtain $\tfrac{\vartheta_{n}}{n}=\frac{k}{n}$,
for $k\in\{2,...,n\}$ and with $\frac{k-2}{n-1}<\rho^{2}<\frac{k-1}{n-1}$ we
obtain $\frac{k-1}{n}<\tfrac{\vartheta_{n}}{n}<\frac{k}{n}$, for
$k\in\{2,...,n\}$. On the other hand, if the cluster size ($n$) were large,
$\frac{1}{n}(1-\rho^{2})$\ would be small, and $\tfrac{\vartheta_{n}}{n}%
=\frac{1}{n}(1-\rho^{2})+\rho^{2}$ would become similar to $\rho^{2}$.
\end{remark}

Now, we shall consider the $N$ contingency tables expressed jointly in a
unique $NM$-dimensional vector,%
\[
\widetilde{\boldsymbol{Y}}=(\boldsymbol{Y}^{\left(  1\right)  T}%
,...,\boldsymbol{Y}^{\left(  N\right)  T})^{T},
\]
and we can define its corresponding vector of probabilities,
$\widetilde{\boldsymbol{p}}$, as follows%
\[
\widetilde{\boldsymbol{p}}=\frac{1}{nN}\widetilde{\boldsymbol{Y}}.
\]
In addition, the inter-cluster-level homogeneous version of the probability
vector is given by%
\begin{equation}
\widetilde{\boldsymbol{p}}^{\ast}=(\tfrac{1}{N}\widehat{\boldsymbol{p}}%
^{T},...,\tfrac{1}{N}\widehat{\boldsymbol{p}}^{T})^{T}. \label{pTilde*}%
\end{equation}
Brier (1980) proposed a consistent estimator of $\vartheta_{n}$\ based on
comparing the discrepancy between $\widetilde{\boldsymbol{p}}$\ and
$\widetilde{\boldsymbol{p}}^{\ast}$ in the following way%
\begin{equation}
X^{2}(\widetilde{\boldsymbol{Y}})=\sum_{\ell=1}^{N}\left(  \boldsymbol{Y}%
^{(\ell)}-n\widehat{\boldsymbol{p}}\right)  ^{T}\tfrac{1}{n}\boldsymbol{D}%
_{\widehat{\boldsymbol{p}}}^{-1}\left(  \boldsymbol{Y}^{(\ell)}%
-n\widehat{\boldsymbol{p}}\right)  =n\sum_{\ell=1}^{N}\sum_{r=1}^{M}%
\frac{(\widehat{p}_{r}^{(\ell)}-\widehat{p}_{r})^{2}}{\widehat{p}_{r}},
\label{chi}%
\end{equation}
with $\boldsymbol{D}_{\boldsymbol{a}}$ being the diagonal matrix of vector
$\boldsymbol{a}$. The shape of this estimator reminds the expression of the
chi-square test-statistic for inter-cluster level homogeneity.

The following theorem permit us to define estimators for $\vartheta_{n}$\ and
$\rho^{2}$ through the same expression proposed by Brier (1980). Nevertheless,
these estimators are valid not only for the Dirichlet-multinomial distribution
given in (\ref{model1}), as desired by Brier, but also for other distributions
with overdispersion such as (\ref{model2}) and (\ref{model3}). For this reason
we refer them as the Brier's estimators.

\begin{theorem}
\label{Dist}For (\ref{chi}) divided by $(N-1)(M-1)$, as $N$ tends to infinity,
it holds
\begin{equation}
\frac{X^{2}(\widetilde{\boldsymbol{Y}})}{(N-1)(M-1)}%
\overset{P}{\underset{N\rightarrow\infty}{\longrightarrow}}\vartheta
_{n},\qquad\frac{\frac{X^{2}(\widetilde{\boldsymbol{Y}})}{(N-1)(M-1)}-1}%
{n-1}\overset{P}{\underset{N\rightarrow\infty}{\longrightarrow}}\rho^{2}.
\label{cons2}%
\end{equation}

\end{theorem}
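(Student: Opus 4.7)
The plan is to show that the inner sum $\sum_{\ell=1}^{N}(\widehat{p}_r^{(\ell)}-\widehat{p}_r)^2$ is, up to the factor $N-1$, precisely the classical unbiased sample variance of the i.i.d. sequence $\widehat{p}_r^{(1)},\dots,\widehat{p}_r^{(N)}$, and then to apply the Weak Law of Large Numbers and Slutsky's lemma cell by cell. The identity $\sum_{r=1}^{M}(1-p_r(\boldsymbol{\theta}_0)) = M-1$ will convert the per-cell limit into the claimed overall limit.

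First, I would interchange the order of summation and rewrite
\[
\frac{X^{2}(\widetilde{\boldsymbol{Y}})}{N-1}
= n\sum_{r=1}^{M}\frac{1}{\widehat{p}_r}\cdot
\frac{1}{N-1}\sum_{\ell=1}^{N}\bigl(\widehat{p}_r^{(\ell)}-\widehat{p}_r\bigr)^{2}
= n\sum_{r=1}^{M}\frac{s_r^{2}}{\widehat{p}_r},
\]
where $s_r^{2}$ is the (unbiased) sample variance of the i.i.d.\ scalars $\widehat{p}_r^{(1)},\dots,\widehat{p}_r^{(N)}$. Since $\widehat{p}_r^{(\ell)}=\tfrac{1}{n}Y_r^{(\ell)}$, the assumption (\ref{moments2}) yields $\mathrm{E}[\widehat{p}_r^{(\ell)}]=p_r(\boldsymbol{\theta}_0)$ and $\mathrm{Var}[\widehat{p}_r^{(\ell)}]=\tfrac{\vartheta_n}{n}\,p_r(\boldsymbol{\theta}_0)(1-p_r(\boldsymbol{\theta}_0))$. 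By the Weak Law of Large Numbers applied to $\widehat{p}_r^{(\ell)}$ and to $(\widehat{p}_r^{(\ell)})^{2}$ we get
\[
s_r^{2}\;\overset{P}{\underset{N\to\infty}{\longrightarrow}}\;
\tfrac{\vartheta_n}{n}\,p_r(\boldsymbol{\theta}_0)\bigl(1-p_r(\boldsymbol{\theta}_0)\bigr),
\qquad
\widehat{p}_r\;\overset{P}{\underset{N\to\infty}{\longrightarrow}}\;p_r(\boldsymbol{\theta}_0),
\]
and the latter stays bounded away from $0$ by the assumption $p_r(\boldsymbol{\theta}_0)>0$, so the continuous mapping theorem (equivalently Slutsky) gives $s_r^{2}/\widehat{p}_r\overset{P}{\to}\tfrac{\vartheta_n}{n}(1-p_r(\boldsymbol{\theta}_0))$ for each $r$.

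Summing over the $M$ cells and multiplying by $n$ produces
\[
\frac{X^{2}(\widetilde{\boldsymbol{Y}})}{N-1}
\;\overset{P}{\underset{N\to\infty}{\longrightarrow}}\;
\vartheta_n\sum_{r=1}^{M}\bigl(1-p_r(\boldsymbol{\theta}_0)\bigr)=\vartheta_n(M-1),
\]
where the constraint $\sum_r p_r(\boldsymbol{\theta}_0)=1$ is what forces the combinatorial factor $M-1$. Dividing by $M-1$ gives the first limit of (\ref{cons2}); the second follows immediately by solving $\vartheta_n=1+(n-1)\rho^{2}$ for $\rho^{2}$ and invoking the continuous mapping theorem once more.

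The only genuinely delicate step is the justification that $s_r^{2}$ converges in probability to the variance. This is routine once one observes that the vectors $\boldsymbol{Y}^{(\ell)}$ are i.i.d.\ across $\ell$ (clusters are independent by assumption, with a common distribution satisfying (\ref{moments2})), so the scalar sequence $\widehat{p}_r^{(\ell)}$ is i.i.d.\ with finite second moment (as a bounded random variable in $[0,1]$), and the standard decomposition $\tfrac{1}{N-1}\sum_\ell(\widehat{p}_r^{(\ell)}-\widehat{p}_r)^{2}=\tfrac{N}{N-1}\bigl(\tfrac{1}{N}\sum_\ell (\widehat{p}_r^{(\ell)})^{2}-\widehat{p}_r^{2}\bigr)$ reduces the problem to two applications of the WLLN. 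No distributional assumption on $\boldsymbol{Y}^{(\ell)}$ beyond (\ref{moments2}) is used, which is exactly why the conclusion covers the Dirichlet-multinomial, random-clumped multinomial, and $n$-inflated multinomial models uniformly.
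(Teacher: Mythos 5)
Your proposal is correct and takes essentially the same route as the paper's own proof: both reduce $X^{2}(\widetilde{\boldsymbol{Y}})/((N-1)(M-1))$ to a $\widehat{p}_{r}^{-1}$-weighted sum of per-cell sample variances, combine consistency of those sample variances (which the paper merely cites as well known, while you make it explicit via the WLLN decomposition) with $\widehat{p}_{r}\overset{P}{\to}p_{r}(\boldsymbol{\theta}_{0})>0$, and conclude using $\sum_{r=1}^{M}\bigl(1-p_{r}(\boldsymbol{\theta}_{0})\bigr)=M-1$ and the relation (\ref{DF}) for the second limit. The only cosmetic difference is that the paper phrases the argument through the trace of the diagonal quasi-covariance matrix and also records its exact expectation, which is not needed for the consistency conclusion you establish directly.
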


\begin{proof}
The proof is given in Section \ref{SecA.1} of the Appendix.
\end{proof}

Since in this paper no specific distribution is assumed, the proof of Theorem
\ref{Dist} is completely new and more general than the one given in Brier
(1980) and is the basis for considering the second of the following consistent
estimators, for the design effect as well as the intracluster correlation coefficient.

\begin{definition}
[Nonparametric estimators of
$\vartheta$ and $\rho^{2}$%
]\label{Def0}The Brier's consistent estimator of the design effect,
$\vartheta_{n}$, is%
\begin{equation}
\widetilde{\vartheta}_{n,N}(\widetilde{\boldsymbol{Y}})=\frac{X^{2}%
(\widetilde{\boldsymbol{Y}})}{(N-1)(M-1)}, \label{psi}%
\end{equation}
where $X^{2}(\widetilde{\boldsymbol{Y}})$ is defined in (\ref{chi}).
Similarly, the the Brier's consistent estimator of the intracluster
correlation coefficient, $\rho^{2}$, is%
\begin{equation}
\widetilde{\rho}_{n,N}^{2}(\widetilde{\boldsymbol{Y}})=\frac
{\widetilde{\vartheta}_{n,N}(\widetilde{\boldsymbol{Y}})-1}{n-1}. \label{rro}%
\end{equation}

\end{definition}

The estimator for the design effect, $\widetilde{\vartheta}_{n,N}%
(\widetilde{\boldsymbol{Y}})$, as well as for the intracluster correlation
coefficient, $\widetilde{\rho}_{n,N}^{2}(\widetilde{\boldsymbol{Y}})$ are
fully non-parametric. Based on the proof of Theorem \ref{Dist} it is possible
to give the following definition based on the consistent estimator
$\boldsymbol{p}(\widehat{\boldsymbol{\theta}}_{\phi})$ of $\boldsymbol{p}%
(\boldsymbol{\theta})$\ for a log-linear model with complex sampling, with
$\widehat{\boldsymbol{\theta}}_{\phi}$ being the QM$\phi$E\ given in
(\ref{70}). This could be a semi-parametric version of the estimator, and is
proposed for the first time in this paper.

\begin{definition}
[Semiparametric estimators of
$\vartheta$ and $\rho^{2}$%
]\label{Def01}The parametric extension of the Brier's consistent estimator of
$\vartheta_{n}$ is%
\[
\widetilde{\vartheta}_{n,N}(\widetilde{\boldsymbol{Y}}%
,\widehat{\boldsymbol{\theta}}_{\phi})=\frac{X^{2}(\widetilde{\boldsymbol{Y}%
},\widehat{\boldsymbol{\theta}}_{\phi})}{(N-1)(M-1)},
\]
where%
\[
X^{2}(\widetilde{\boldsymbol{Y}},\widehat{\boldsymbol{\theta}}_{\phi}%
)=\sum_{\ell=1}^{N}\left(  \boldsymbol{Y}^{(\ell)}-n\widehat{\boldsymbol{p}%
}\right)  ^{T}\tfrac{1}{n}\boldsymbol{D}_{\boldsymbol{p}%
(\widehat{\boldsymbol{\theta}}_{\phi})}^{-1}\left(  \boldsymbol{Y}^{(\ell
)}-n\widehat{\boldsymbol{p}}\right)  =n\sum_{\ell=1}^{N}\sum_{r=1}^{M}%
\frac{(\widehat{p}_{r}^{(\ell)}-\widehat{p}_{r})^{2}}{p_{r}%
(\widehat{\boldsymbol{\theta}}_{\phi})}.
\]
Similarly, the parametric extension of the Brier's consistent estimator of
$\rho^{2}$ is%
\[
\widetilde{\rho}_{n,N}^{2}(\widetilde{\boldsymbol{Y}}%
,\widehat{\boldsymbol{\theta}}_{\phi})=\frac{\widetilde{\vartheta}%
_{n,N}(\widetilde{\boldsymbol{Y}},\widehat{\boldsymbol{\theta}}_{\phi}%
)-1}{n-1}.
\]

\end{definition}

\section{Generalization for unequal cluster sizes\label{sec4}}

\subsection{Notation and basic results\label{sec4.1}}

Let us consider $G$ groups of clusters in such a way that all the contingency
tables,%
\[
\boldsymbol{Y}^{\left(  g,\ell\right)  }=(Y_{1}^{\left(  g,\ell\right)
},...,Y_{M}^{\left(  g,\ell\right)  })^{T},\quad\ell=1,...,N_{g},
\]
of the same group of clusters have the same sample size $n_{g}$, $g=1,...,G$,
and $N=%
{\textstyle\sum\nolimits_{g=1}^{G}}
N_{g}$. It is assumed having at least an index $g$ such that $n_{g}>1$. If we
replace the assumption $N\rightarrow\infty$ by $N_{g}\rightarrow\infty$ for
each group of clusters, then all above stated results hold separately for each
group of clusters.

By following (\ref{pHat}), the nonparametric estimator of $\boldsymbol{p}%
\left(  \boldsymbol{\theta}\right)  $, based on $N_{g}$ clusters, is now given
by%
\[
\widehat{\boldsymbol{p}}^{(g)}=\frac{1}{n_{g}N_{g}}\sum\limits_{\ell=1}%
^{N_{g}}\boldsymbol{Y}^{\left(  g,\ell\right)  },
\]
i.e. $\widehat{\boldsymbol{p}}^{(g)}=(\widehat{p}_{1}^{(g)},...,\widehat{p}%
_{M}^{(g)})^{T}$, with $\widehat{p}_{r}^{(g)}=\frac{1}{n_{g}N_{g}}%
\sum\limits_{\ell=1}^{N_{g}}Y_{r}^{\left(  g,\ell\right)  }$, $r=1,...,M$.
This global estimator can be also expressed through the average of the
nonparametric estimators of $\boldsymbol{p}\left(  \boldsymbol{\theta}\right)
$, based on the $\ell$-th cluster, $\widehat{\boldsymbol{p}}^{(g,\ell
)}=(\widehat{p}_{1}^{(g,\ell)},...,\widehat{p}_{M}^{(g,\ell)})^{T}=\frac
{1}{n_{g}}\boldsymbol{Y}^{\left(  g,\ell\right)  }$, $\ell=1,...,N_{g}$, as%
\begin{align*}
\widehat{\boldsymbol{p}}^{\left(  g\right)  } &  =\frac{1}{N_{g}}%
\sum\limits_{\ell=1}^{N_{g}}\widehat{\boldsymbol{p}}^{(g,\ell)},\\
\widehat{p}_{r}^{(g,\ell)} &  =\frac{1}{n_{g}}Y_{r}^{(g,\ell)}.
\end{align*}
On the other hand, the nonparametric estimator of $\boldsymbol{p}\left(
\boldsymbol{\theta}\right)  $, based on $G$ groups of $N_{1}$, ..., $N_{G}%
$\ clusters with sample size $n_{1}$, ..., $n_{G}$\ respectively, is now given
by%
\begin{equation}
\widehat{\boldsymbol{p}}=\frac{\sum\limits_{g=1}^{G}\sum\limits_{\ell
=1}^{N_{g}}\boldsymbol{Y}^{\left(  g,\ell\right)  }}{\sum\limits_{g=1}%
^{G}n_{g}N_{g}}=\frac{\sum\limits_{g=1}^{G}n_{g}N_{g}\frac{1}{N_{g}}%
\sum\limits_{\ell=1}^{N_{g}}\frac{1}{n_{g}}\boldsymbol{Y}^{\left(
g,\ell\right)  }}{\sum\limits_{g=1}^{G}n_{g}N_{g}}=\sum\limits_{g=1}^{G}%
w_{g}\widehat{\boldsymbol{p}}^{\left(  g\right)  },\label{forp}%
\end{equation}
where%
\begin{equation}
w_{g}=\frac{n_{g}N_{g}}{\sum\limits_{h=1}^{G}n_{h}N_{h}}>0,\quad
g=1,...,G,\label{w}%
\end{equation}
and $\sum\limits_{g=1}^{G}w_{g}=1$.

Through the Central Limit Theorem, similarly to (\ref{CLT}), for the $g$-th
group, it follows that
\begin{equation}
\sqrt{N_{g}}(\widehat{\boldsymbol{p}}^{(g)}-\boldsymbol{p}\left(
\boldsymbol{\theta}_{0}\right)  )\overset{\mathcal{L}}{\underset{N_{g}%
\rightarrow\infty}{\longrightarrow}}\mathcal{N(}\boldsymbol{0}_{M}%
,\tfrac{\vartheta_{n_{g}}}{n_{g}}\boldsymbol{\Sigma}_{\boldsymbol{p}\left(
\boldsymbol{\theta}_{0}\right)  }),\label{TCL0}%
\end{equation}
and thus%
\begin{equation}
\frac{\sum\limits_{h=1}^{G}n_{h}N_{h}}{\sqrt{\sum\limits_{g=1}^{G}n_{g}%
N_{g}\vartheta_{n_{g}}}}(\widehat{\boldsymbol{p}}\boldsymbol{-p}%
(\boldsymbol{\theta}_{0}))\overset{\mathcal{L}}{\underset{N_{1},...,N_{G}%
\rightarrow\infty}{\longrightarrow}}\mathcal{N}(\boldsymbol{0}_{M}%
,\boldsymbol{\Sigma}_{\boldsymbol{p}\left(  \boldsymbol{\theta}_{0}\right)
}).\label{CLT2}%
\end{equation}
See Section \ref{SecA.3} in the Appendix for the details of the derivation of
(\ref{CLT2}). If in addition, if we assume that there exists a sequence
$\{N_{h}^{\ast}\}_{h=1}^{G}$, such that%
\[
\frac{N_{h}}{N}\overset{P}{\underset{N\rightarrow\infty}{\longrightarrow}%
}N_{h}^{\ast}\in(0,1],
\]
(\ref{CLT2}) can be rewritten as%
\begin{equation}
\sqrt{N}(\widehat{\boldsymbol{p}}\boldsymbol{-p}(\boldsymbol{\theta}%
_{0}))\overset{\mathcal{L}}{\underset{N\rightarrow\infty}{\longrightarrow}%
}\mathcal{N}(\boldsymbol{0}_{M},\tfrac{\vartheta_{n^{\ast}}}{\bar{n}%
}\boldsymbol{\Sigma}_{\boldsymbol{p}\left(  \boldsymbol{\theta}_{0}\right)
}),\label{CLT2B}%
\end{equation}
where%
\begin{equation}
\bar{n}=\sum\limits_{g=1}^{G}N_{g}^{\ast}n_{g},\label{nMean}%
\end{equation}%
\begin{equation}
\vartheta_{n^{\ast}}=\sum\limits_{g=1}^{G}w_{g}^{\ast}\vartheta_{n_{g}%
},\label{overp}%
\end{equation}
and%
\[
w_{g}^{\ast}=\frac{N_{g}^{\ast}n_{g}}{\sum\limits_{h=1}^{G}N_{h}^{\ast}n_{h}%
}>0,\quad g=1,...,G,
\]
such that%
\[
w_{g}\overset{P}{\underset{N\rightarrow\infty}{\longrightarrow}}w_{g}^{\ast
},\quad%
{\textstyle\sum\nolimits_{g=1}^{G}}
w_{g}^{\ast}=1.
\]
Notice that%
\begin{equation}
\vartheta_{n^{\ast}}=1+\rho^{2}\left(  n^{\ast}-1\right)  \in(1,n^{\ast
}],\label{ro}%
\end{equation}
i.e. (\ref{overp}) represents the overdispersion parameter when the cluster
size is%
\begin{equation}
n^{\ast}=%
{\textstyle\sum\nolimits_{g=1}^{G}}
w_{g}^{\ast}n_{g}.\label{n2}%
\end{equation}
In particular, $\vartheta_{n^{\ast}}=1$ ($\rho^{2}=0$ or $n_{1}=\cdots
=n_{G}=1$) represents the case of multinomial sampling.

It is interesting to be mentioned that Brier (1980, Section 3.4) proposed the
unknown parameter $\vartheta_{n^{\ast}}$, given in (\ref{overp}), for the
stronger assumption of Dirichlet-multinomial distribution for $\boldsymbol{Y}%
^{(g,\ell)}$, given in (\ref{model1}). For this reason, in a future work, a
new improved consistent estimator of $\vartheta_{n^{\ast}}$ could be a useful
tool to propose appropriate test-statistics for the goodness-of-fit of
log-linear models with clustered multinomial data under overdispersion. These
test-statistics would require a weaker assumption in comparison with the
Brier's paper.

\subsection{Brier's modified estimators for $\vartheta_{n}$ and $\rho^{2}$}

We shall define a consistent estimator of the design effect and the
intracluster correlation coefficient, for unequal cluster sizes, as%
\begin{align}
\widetilde{\vartheta}_{\widehat{n}^{\ast},N} &  =\sum\limits_{g=1}^{G}%
w_{g}\widetilde{\vartheta}_{n_{g},N_{g}}(\widetilde{\boldsymbol{Y}}%
_{g}),\label{estOver}\\
\widetilde{\rho}_{\widehat{n}^{\ast},N}^{2} &  =\frac{\widetilde{\vartheta
}_{\widehat{n}^{\ast},N}-1}{\widehat{n}^{\ast}-1},\label{estRho}%
\end{align}
where%
\[
\widehat{n}^{\ast}=%
{\textstyle\sum\nolimits_{g=1}^{G}}
w_{g}n_{g},
\]
is a consistent estimator of $n^{\ast}$ given in (\ref{overp}) or (\ref{ro})
and%
\begin{align*}
\widetilde{\boldsymbol{Y}}_{g} &  =((\boldsymbol{Y}^{\left(  g,1\right)
})^{T},...,(\boldsymbol{Y}^{\left(  g,N_{g}\right)  })^{T})^{T},\\
\widetilde{\vartheta}_{n_{g},N_{g}}(\widetilde{\boldsymbol{Y}}_{g}) &
=\frac{X^{2}(\widetilde{\boldsymbol{Y}}_{g})}{(N_{g}-1)(M-1)},\\
X^{2}(\widetilde{\boldsymbol{Y}}_{g}) &  =n_{g}\sum_{\ell=1}^{N_{g}}\sum
_{r=1}^{M}\frac{(\widehat{p}_{r}^{(\ell,g)}-\widehat{p}_{r}^{(g)})^{2}%
}{\widehat{p}_{r}^{(g)}},
\end{align*}
$g=1,...,G$. Both estimators, (\ref{estOver}) and (\ref{estRho}), are
consistent estimators since
\[
\widetilde{\vartheta}_{\widehat{n}^{\ast},N}\overset{P}{\underset{N\rightarrow
\infty}{\longrightarrow}}\vartheta_{n^{\ast}},\qquad\widetilde{\rho
}_{\widehat{n}^{\ast},N}^{2}\overset{P}{\underset{N\rightarrow\infty
}{\longrightarrow}}\rho^{2}.
\]
In addition, focussed on a specific cluster size, notice that
$\widetilde{\vartheta}_{n_{g},\widehat{n}^{\ast},N}=1+\widetilde{\rho
}_{\widehat{n}^{\ast},N}^{2}(n_{g}-1)$ is an alternative consistent estimator
of $\vartheta_{n_{g}}$, $g=1,...,G$, but it requires from estimators of
$\vartheta_{n^{\ast}}$\ and $\rho^{2}$, i.e. (\ref{estOver}) and
(\ref{estRho}) respectively.

For $\boldsymbol{Y}=\sum\limits_{g=1}^{G}\sum\limits_{\ell=1}^{N_{g}%
}\boldsymbol{Y}^{\left(  g,\ell\right)  }$, it is possible to follow
Definition \ref{DefMin} to obtain the QM$\phi$E of $\boldsymbol{\theta}$,
$\widehat{\boldsymbol{\theta}}_{\phi}=\widehat{\boldsymbol{\theta}}_{\phi
}\left(  \boldsymbol{Y}\right)  $ and also Equation (\ref{eq}) replacing
properly the expression of $\widehat{\boldsymbol{p}}$, according to
(\ref{forp}). In a similar way done for Theorem \ref{Th1}, we have%
\[
\sqrt{N}(\widehat{\boldsymbol{\theta}}_{\phi}\boldsymbol{-\theta}%
_{0})\overset{\mathcal{L}}{\underset{N\rightarrow\infty}{\longrightarrow}%
}\mathcal{N}(\boldsymbol{0}_{M_{0}},\tfrac{\vartheta_{n^{\ast}}}{\bar{n}%
}\left(  \boldsymbol{\boldsymbol{W}}^{T}\boldsymbol{\Sigma
\boldsymbol{_{\boldsymbol{p}\left(  \theta_{0}\right)  }}W}\right)  ^{-1})
\]
and%
\begin{equation}
\sqrt{N}(\boldsymbol{p}(\widehat{\boldsymbol{\theta}}_{\phi})\boldsymbol{-p}%
(\boldsymbol{\theta}_{0}))\overset{\mathcal{L}}{\underset{N\rightarrow
\infty}{\longrightarrow}}\mathcal{N}(\boldsymbol{0}_{M},\tfrac{\vartheta
_{n^{\ast}}}{\bar{n}}\boldsymbol{\Sigma}_{p\left(  \boldsymbol{\theta}%
_{0}\right)  }\boldsymbol{W}\left(  \boldsymbol{\boldsymbol{W}}^{T}%
\boldsymbol{\Sigma\boldsymbol{_{\boldsymbol{p}\left(  \theta_{0}\right)  }}%
W}\right)  ^{-1}\boldsymbol{W}^{T}\boldsymbol{\Sigma}_{p\left(
\boldsymbol{\theta}_{0}\right)  }). \label{asympVar}%
\end{equation}

\subsection{New non-parametric and semi-parametric estimators for
$\vartheta_{n}$ and $\rho^{2}$\label{Sec:new}}

\subsubsection{Case 1: $N_{g}>1$, $g=1,...,G$\label{Sec:new1}}

Let $\widetilde{\boldsymbol{Y}}=(\widetilde{\boldsymbol{Y}}_{1}^{T}%
,...,\widetilde{\boldsymbol{Y}}_{G}^{T})^{T}$, be the whole sample with the
dimension of $\widetilde{\boldsymbol{Y}}_{g}$ being the corresponding
dimension, $N_{g}>1$. Based on the proof of Theorem \ref{Dist} it is possible
to propose a new non-parametric consistent estimator of $\vartheta_{n_{g}}$
with a faster convergence level by using%
\begin{align*}
\widetilde{\vartheta}_{n_{g},N_{g}}(\widetilde{\boldsymbol{Y}}_{g}%
,\widetilde{\boldsymbol{Y}}) &  =\frac{X^{2}(\widetilde{\boldsymbol{Y}}%
_{g},\widetilde{\boldsymbol{Y}})}{(N_{g}-1)(M-1)},\\
X^{2}(\widetilde{\boldsymbol{Y}}_{g},\widetilde{\boldsymbol{Y}}) &  =n_{g}%
\sum_{\ell=1}^{N_{g}}\sum_{r=1}^{M}\frac{(\widehat{p}_{r}^{(\ell
,g)}-\widehat{p}_{r}^{(g)})^{2}}{\widehat{p}_{r}}=n_{g}\sum_{r=1}^{M}\frac
{1}{\widehat{p}_{r}}\sum_{\ell=1}^{N_{g}}(\widehat{p}_{r}^{(\ell
,g)}-\widehat{p}_{r}^{(g)})^{2},
\end{align*}
rather than $X^{2}(\widetilde{\boldsymbol{Y}}_{g})$ and $\widetilde{\vartheta
}_{n_{g},N_{g}}(\widetilde{\boldsymbol{Y}}_{g})$\ respectively, $g=1,...,G$.
Moreover, if the log-linear model were correctly validated, a new
semi-parametric consistent estimator of $\vartheta_{n_{g}}$ even with a faster
convergence degree is given by%
\begin{align*}
\widetilde{\vartheta}_{n_{g},N_{g}}(\widetilde{\boldsymbol{Y}}_{g}%
,\widehat{\boldsymbol{\theta}}_{\phi}) &  =\frac{X^{2}%
(\widetilde{\boldsymbol{Y}}_{g},\widehat{\boldsymbol{\theta}}_{\phi})}%
{(N_{g}-1)(M-1)},\\
X^{2}(\widetilde{\boldsymbol{Y}}_{g},\widehat{\boldsymbol{\theta}}_{\phi}) &
=n_{g}\sum_{\ell=1}^{N_{g}}\sum_{r=1}^{M}\frac{(\widehat{p}_{r}^{(\ell
,g)}-\widehat{p}_{r}^{(g)})^{2}}{p_{r}(\widehat{\boldsymbol{\theta}}_{\phi}%
)}=n_{g}\sum_{r=1}^{M}\frac{1}{p_{r}(\widehat{\boldsymbol{\theta}}_{\phi}%
)}\sum_{\ell=1}^{N_{g}}(\widehat{p}_{r}^{(\ell,g)}-\widehat{p}_{r}^{(g)})^{2},
\end{align*}
$g=1,...,G$. Plugging either $\widetilde{\vartheta}_{n_{g},N_{g}%
}(\widetilde{\boldsymbol{Y}}_{g},\widetilde{\boldsymbol{Y}})$ or
$\widetilde{\vartheta}_{n_{g},N_{g}}(\widetilde{\boldsymbol{Y}}_{g}%
,\widehat{\boldsymbol{\theta}}_{\phi})$ into (\ref{estOver}) in the place of
$\widetilde{\vartheta}_{n_{g},N_{g}}(\widetilde{\boldsymbol{Y}}_{g})$, the new
consistent estimators of the design effect is obtained, for unequal cluster
sizes and based on phi-divergences (the intracluster correlation coefficient,
(\ref{estRho}), is similarly computed).

In the sequel we shall abbreviate by $\widetilde{\vartheta}_{n_{g},N_{g}}$,
$\widetilde{\vartheta}_{n_{g},N_{g},\bullet}$, $\widetilde{\vartheta}%
_{n_{g},N_{g},\phi}$, the three versions $\widetilde{\vartheta}_{n_{g},N_{g}%
}(\widetilde{\boldsymbol{Y}}_{g})$, $\widetilde{\vartheta}_{n_{g},N_{g}%
}(\widetilde{\boldsymbol{Y}}_{g},\widetilde{\boldsymbol{Y}})$,
$\widetilde{\vartheta}_{n_{g},N_{g}}(\widetilde{\boldsymbol{Y}}_{g}%
,\widehat{\boldsymbol{\theta}}_{\phi})$ respectively, and their corresponding
expression for (\ref{estOver}), (\ref{estRho}), $\widetilde{\vartheta
}_{\widehat{n}^{\ast},N}$, $\widetilde{\rho}_{\widehat{n}^{\ast},N}^{2}$,
$\widetilde{\vartheta}_{\widehat{n}^{\ast},N,\bullet}$, $\widetilde{\rho
}_{\widehat{n}^{\ast},N,\bullet}^{2}$, $\widetilde{\vartheta}_{\widehat{n}%
^{\ast},N,\phi}$, $\widetilde{\rho}_{\widehat{n}^{\ast},N,\phi}^{2}$.

\subsubsection{Case 2: $n_{g}$ large enough and $N_{g}\geq1$, $g=1,...,G$%
\label{Sec:new2}}

When the values of the cluster sizes are large, without any loss of generality
can be assumed that $N_{g}=1$ and $G=N$. By following Section \ref{sec4.1} and
taking into account that $\lim_{n_{g\rightarrow\infty}}\tfrac{\vartheta
_{n_{g}}}{n_{g}}=\rho^{2}$,%
\[
\widehat{\boldsymbol{p}}^{(\ell)}\overset{\mathcal{L}}{\underset{n_{\ell
\rightarrow\infty}}{\longrightarrow}}\mathcal{N(}\boldsymbol{p}\left(
\boldsymbol{\theta}_{0}\right)  ,\rho^{2}\boldsymbol{\Sigma}_{\boldsymbol{p}%
\left(  \boldsymbol{\theta}_{0}\right)  }),
\]
which means that $\widehat{\boldsymbol{p}}^{(1)}$, ...,
$\widehat{\boldsymbol{p}}^{(G)}$ are (asymptotically, as $n_{\ell}%
\rightarrow\infty$) i.i.d. $M$-dimensional random variables. Taking into
account similar arguments as the ones given in Section \ref{SecA.1} we obtain
the following consistent estimators of $\rho^{2}$ as $n_{g},N\rightarrow
\infty$
\begin{align*}
\widehat{\rho}^{2} &  =\frac{1}{(N-1)\left(  M-1\right)  }\sum_{\ell=1}%
^{N}\left(  \widehat{\boldsymbol{p}}^{(\ell)}-\frac{1}{N}%
{\textstyle\sum\limits_{s=1}^{N}}
\widehat{\boldsymbol{p}}^{(s)}\right)  ^{T}\boldsymbol{D}%
_{\widehat{\boldsymbol{p}}}^{-1}\left(  \widehat{\boldsymbol{p}}^{(\ell
)}-\frac{1}{N}%
{\textstyle\sum\limits_{s=1}^{N}}
\widehat{\boldsymbol{p}}^{(s)}\right)  \\
&  =\frac{1}{(N-1)\left(  M-1\right)  }\sum_{r=1}^{M}\frac{1}{\widehat{p}_{r}%
}\sum_{\ell=1}^{N}\left(  \widehat{p}_{r}^{(\ell)}-\frac{1}{N}%
{\textstyle\sum\limits_{s=1}^{N}}
\widehat{p}_{r}^{(s)}\right)  ^{2},
\end{align*}
for the saturated model and%
\begin{align*}
\widehat{\rho}^{2}(\widehat{\boldsymbol{\theta}}_{\phi}) &  =\frac
{1}{(N-1)\left(  M-1\right)  }\sum_{\ell=1}^{N}\left(  \widehat{\boldsymbol{p}%
}^{(\ell)}-\frac{1}{N}%
{\textstyle\sum\limits_{s=1}^{N}}
\widehat{\boldsymbol{p}}^{(s)}\right)  ^{T}\boldsymbol{D}_{\boldsymbol{p}%
(\boldsymbol{\widehat{\boldsymbol{\theta}}_{\phi}})}^{-1}\left(
\widehat{\boldsymbol{p}}^{(\ell)}-\frac{1}{N}%
{\textstyle\sum\limits_{s=1}^{N}}
\widehat{\boldsymbol{p}}^{(s)}\right)  \\
&  =\frac{1}{(N-1)\left(  M-1\right)  }\sum_{r=1}^{M}\frac{1}{p_{r}%
(\widehat{\boldsymbol{\theta}}_{\phi})}\sum_{\ell=1}^{N}\left(  \widehat{p}%
_{r}^{(\ell)}-\frac{1}{N}%
{\textstyle\sum\limits_{s=1}^{N}}
\widehat{p}_{r}^{(s)}\right)  ^{2},
\end{align*}
for the log-linear model.

\section{Numerical examples\label{sec5}}

The following two studies represent respectively the numerical examples for
cases $1$ and $2$ in Section \ref{Sec:new}. Focussed on estimating the the
intracluster correlation coefficient, $\rho^{2}$, the semiparametric
consistent estimators are considered for case $1$, and the non-parametric ones
for case $2$. The second example illustrates that the estimation of the
intracluster correlation coefficient corrects the variance we had without
overdispersion, for using same estimators we had without overdispersion. The
corresponding \texttt{Fortran} codes are available at \texttt{%
\hyperref{http://sites.google.com/site/nirianmartinswebsite/software}{}%
{}{http://sites.google.com/site/nirianmartinswebsite/software}%
}.

\subsection{Study on housing satisfaction (Brier, 1980)\label{Housing}}

From all the households located in $N=20$ neighborhoods\ around Montevideo
(Minnesota, US), some households were randomly selected: from $N_{1}=18$
neighborhoods $n_{1}=5$ houses were selected and from $N_{2}=2$ neighborhoods
$n_{2}=3$ houses. The neighborhoods are grouped into class $g=1$ or $g=2$
depending on the selected number of houses (neighborhood or cluster size),
$n_{1}=5$\ and $n_{2}=3$\ respectively. For the $\ell$-th neighborhood
($\ell=1,...,N_{g}$) of the $g$-th cluster size, in the $s$-th selected home
($s=1,...,n_{g}$), the family was questioned on two study interests:
satisfaction with the housing in the neighborhood as a whole ($X_{1s}%
^{(g,\ell)}$), and satisfaction with their own home ($X_{2s}^{(g,\ell)}$). For
both questions the responses were classified as unsatisfied ($US$), satisfied
($S$) or very satisfied ($VS$). In the sequel, we shall identify the
aforementioned categories of the ordinal variables, $X_{11}^{(g,\ell)}$ and
$X_{12}^{(g,\ell)}$, with numbers $1$, $2$, and $3$: for example, $(US,S)$\ is
associated with $(X_{11}^{(g,\ell)}$,$X_{12}^{(g,\ell)})=(1,2)$.

Under the assumption that a family's classification according to level of
personal satisfaction is independent of its classification by level of
community satisfaction, the log-linear model given in (\ref{In1}) is
considered for a $I\times J$ contingency table with $I=J=3$. The corresponding
data, given in Table \ref{t0}, are disaggregated based on the number of houses
and neighborhood identifications $(g,\ell)$ in $20$ rows, having each $M=9$
cells in lexicographical order (number of columns). The design matrix and the
unknown parameter vector are%
\[
\boldsymbol{W}=%
\begin{pmatrix}
1 & 1 & 1 & 0 & 0 & 0 & -1 & -1 & -1\\
0 & 0 & 0 & 1 & 1 & 1 & -1 & -1 & -1\\
1 & 0 & -1 & 1 & 0 & -1 & 1 & 0 & -1\\
0 & 1 & -1 & 0 & 1 & -1 & 0 & 1 & -1
\end{pmatrix}
^{T}\quad\text{and}\quad\boldsymbol{\theta}=(\theta_{1(1)},\theta
_{1(2)},\theta_{2(1)},\theta_{2(2)})^{T}\text{.}%
\]
%

\begin{table}[hbpt]  \tabcolsep2.8pt \small\centering
$%
\begin{tabular}
[c]{rcccccccccc}\hline
& $(US,US)$ & $(US,S)$ & $(US,VS)$ & $(S,US)$ & $(S,S)$ & $(S,VS)$ & $(VS,US)$
& $(VS,S)$ & $(VS,US)$ & \\
$(g,\ell)$ & $Y_{11}^{\left(  g,\ell\right)  }$ & $Y_{12}^{\left(
g,\ell\right)  }$ & $Y_{13}^{\left(  g,\ell\right)  }$ & $Y_{21}^{\left(
g,\ell\right)  }$ & $Y_{22}^{\left(  g,\ell\right)  }$ & $Y_{23}^{\left(
g,\ell\right)  }$ & $Y_{31}^{\left(  g,\ell\right)  }$ & $Y_{32}^{\left(
g,\ell\right)  }$ & $Y_{33}^{\left(  g,\ell\right)  }$ & $n_{g}$\\\hline
$(1,1)$ & $1$ & $0$ & $0$ & $2$ & $2$ & $0$ & $0$ & $0$ & $0$ & $5$\\
$(1,2)$ & $1$ & $0$ & $0$ & $2$ & $2$ & $0$ & $0$ & $0$ & $0$ & $5$\\
$(1,3)$ & $0$ & $2$ & $0$ & $0$ & $2$ & $0$ & $0$ & $1$ & $0$ & $5$\\
$(1,4)$ & $0$ & $1$ & $0$ & $2$ & $1$ & $0$ & $1$ & $0$ & $0$ & $5$\\
$(1,5)$ & $0$ & $0$ & $0$ & $0$ & $4$ & $0$ & $0$ & $1$ & $0$ & $5$\\
$(1,6)$ & $1$ & $0$ & $0$ & $3$ & $1$ & $0$ & $0$ & $0$ & $0$ & $5$\\
$(1,7)$ & $3$ & $0$ & $0$ & $0$ & $1$ & $0$ & $0$ & $1$ & $0$ & $5$\\
$(1,8)$ & $1$ & $0$ & $0$ & $1$ & $3$ & $0$ & $0$ & $0$ & $0$ & $5$\\
$(1,9)$ & $3$ & $0$ & $0$ & $0$ & $0$ & $0$ & $1$ & $0$ & $1$ & $5$\\
$(1,10)$ & $0$ & $1$ & $0$ & $0$ & $3$ & $1$ & $0$ & $0$ & $0$ & $5$\\
$(1,11)$ & $1$ & $1$ & $0$ & $0$ & $2$ & $0$ & $1$ & $0$ & $0$ & $5$\\
$(1,12)$ & $0$ & $1$ & $0$ & $4$ & $0$ & $0$ & $0$ & $0$ & $0$ & $5$\\
$(1,13)$ & $0$ & $0$ & $0$ & $4$ & $1$ & $0$ & $0$ & $0$ & $0$ & $5$\\
$(1,14)$ & $0$ & $0$ & $0$ & $1$ & $2$ & $0$ & $0$ & $0$ & $2$ & $5$\\
$(1,15)$ & $2$ & $0$ & $0$ & $2$ & $1$ & $0$ & $0$ & $0$ & $0$ & $5$\\
$(1,16)$ & $0$ & $0$ & $0$ & $1$ & $1$ & $1$ & $0$ & $2$ & $0$ & $5$\\
$(1,17)$ & $2$ & $0$ & $0$ & $2$ & $1$ & $0$ & $0$ & $0$ & $0$ & $5$\\
$(1,18)$ & $2$ & $0$ & $0$ & $2$ & $0$ & $0$ & $1$ & $0$ & $0$ & $5$\\
$(2,1)$ & $1$ & $0$ & $0$ & $1$ & $1$ & $0$ & $0$ & $0$ & $0$ & $3$\\
$(2,2)$ & $0$ & $0$ & $0$ & $1$ & $0$ & $1$ & $0$ & $0$ & $1$ & $3$\\\hline
total & $18$ & $6$ & $0$ & $28$ & $28$ & $3$ & $4$ & $5$ & $4$ &
$n=96$\\\hline
\end{tabular}
\ \ \ \ \ \ $%
\caption{Housing satisfaction in 20 neighbourhoods of Montevideo (Brier, 1980).\label{t0}}%
\end{table}%

For estimation, the power divergence measures are considered, by restricting
$\phi$ from the family of convex\ functions to the subfamily%
\[
\phi_{\lambda}(x)=\left\{
\begin{array}
[c]{ll}%
\frac{1}{\lambda(1+\lambda)}\left[  x^{\lambda+1}-x-\lambda(x-1)\right]  , &
\lambda\notin\{-1,0\}\\
\lim_{\upsilon\rightarrow\lambda}\frac{1}{\upsilon(1+\upsilon)}\left[
x^{\upsilon+1}-x-\upsilon(x-1)\right]  , & \lambda\in\{-1,0\}
\end{array}
\right.  ,
\]
where $\lambda\in%
\mathbb{R}
$ is a tuning parameter. The expression of (\ref{6}) becomes%
\[
d_{\phi_{\lambda}}(\widehat{\boldsymbol{p}},\boldsymbol{p}(\boldsymbol{\theta
}))=\left\{
\begin{array}
[c]{ll}%
\frac{1}{\lambda(\lambda+1)}%
{\displaystyle\sum\limits_{r=1}^{M}}
\left(  \frac{\widehat{p}_{r}^{\lambda+1}}{p_{r}^{\lambda}\left(
\boldsymbol{\theta}\right)  }-p_{r}\left(  \boldsymbol{\theta}\right)
\right)  , & \lambda\notin\{-1,0\}\\
d_{Kullback}(\boldsymbol{p}\left(  \boldsymbol{\theta}\right)
,\widehat{\boldsymbol{p}}), & \lambda=-1\\
d_{Kullback}(\widehat{\boldsymbol{p}},\boldsymbol{p}\left(  \boldsymbol{\theta
}\right)  ), & \lambda=0
\end{array}
\right.  ,
\]
in such a way that for each $\lambda\in%
\mathbb{R}
$\ a different divergence measure is obtained. By following Definition
\ref{DefMin}, the quasi\ minimum power-divergence estimator (QMPE) of
$\boldsymbol{\theta}$, is given by $\widehat{\boldsymbol{\theta}}%
_{\phi_{\lambda}}=\arg\min_{\theta\in\Theta}d_{\phi_{\lambda}}%
(\widehat{\boldsymbol{p}},\boldsymbol{p}\left(  \boldsymbol{\theta}\right)
)$. Notice that the case of $\lambda=0$ for the QMPE of $\boldsymbol{\theta}$,
$\widehat{\boldsymbol{\theta}}_{\phi_{0}}$, match the QMLE of
$\boldsymbol{\theta}$, $\widehat{\boldsymbol{\theta}}$, or equivalently the
QM$\phi$E\ of $\boldsymbol{\theta}$ with $\phi$ being equal to (\ref{phiKull}%
). Under the independence model, the two parameters of interest,
$\boldsymbol{\beta}=%
\begin{pmatrix}
\rho^{2}\\
\boldsymbol{p}(\boldsymbol{\theta})
\end{pmatrix}
$, are estimated through%
\[
\widehat{\boldsymbol{\beta}}_{\widehat{n}^{\ast},N,\phi_{\lambda}}=%
\begin{pmatrix}
\widetilde{\rho}_{\widehat{n}^{\ast},N,\phi_{\lambda}}^{2}\\
\boldsymbol{p}(\widehat{\boldsymbol{\theta}}_{\phi_{\lambda}})
\end{pmatrix}
=%
\begin{pmatrix}
\frac{\widetilde{\vartheta}_{\widehat{n}^{\ast},N,\phi_{\lambda}}%
-1}{\widehat{n}^{\ast}-1}\\
\frac{\exp\{\boldsymbol{W}\widehat{\boldsymbol{\theta}}_{\phi_{\lambda}}%
\}}{\boldsymbol{1}_{M}^{T}\exp\{\boldsymbol{W}\widehat{\boldsymbol{\theta}%
}_{\phi_{\lambda}}\}}%
\end{pmatrix}
,
\]
where $\widetilde{\vartheta}_{\widehat{n}^{\ast},N}$ is computed as
(\ref{estOver}), and $\widehat{\boldsymbol{\theta}}_{\phi_{\lambda}}$ as
follows. The expression of $\boldsymbol{\Psi}^{\phi_{\lambda}}%
(\boldsymbol{\theta})=(\Psi_{1}^{\phi_{\lambda}}(\boldsymbol{\theta}%
),...,\Psi_{M}^{\phi_{\lambda}}(\boldsymbol{\theta}))^{T}$, for $\lambda\in%
\mathbb{R}
-\{-1\}$, is given by%
\[
\Psi_{r}^{\phi_{\lambda}}(\boldsymbol{\theta})=\frac{1}{1+\lambda}\left(
\frac{\widehat{p}_{r}^{\lambda+1}}{p_{r}^{\lambda}\left(  \boldsymbol{\theta
}\right)  }-p_{r}\left(  \boldsymbol{\theta}\right)  \right)  ,\quad
r=1,...,M,
\]
hence the QMPE of $\boldsymbol{\theta}$, $\widehat{\boldsymbol{\theta}}%
_{\phi_{\lambda}}$, is obtained by solving%
\begin{equation}
\boldsymbol{W}^{T}\boldsymbol{\Sigma}_{p(\boldsymbol{\theta})}\boldsymbol{D}%
_{\boldsymbol{p}(\theta)}^{-(\lambda+1)}(\widehat{\boldsymbol{p}}^{\lambda
+1}-\boldsymbol{p}^{\lambda+1}\left(  \boldsymbol{\theta}\right)
)=\boldsymbol{0}_{M_{0}},\label{MPE}%
\end{equation}
where $\lambda\in%
\mathbb{R}
-\{-1\}$ and%
\begin{align*}
\boldsymbol{\Sigma}_{p(\boldsymbol{\theta})}\boldsymbol{D}_{\boldsymbol{p}%
(\theta)}^{-(\lambda+1)}(\widehat{\boldsymbol{p}}^{\lambda+1}-\boldsymbol{p}%
^{\lambda+1}\left(  \boldsymbol{\theta}\right)  ) &  =\left(  \boldsymbol{I}%
_{M}-\boldsymbol{p}\left(  \boldsymbol{\theta}\right)  \boldsymbol{1}_{M}%
^{T}\right)  \boldsymbol{D}_{\boldsymbol{p}(\theta)}^{-\lambda}%
(\widehat{\boldsymbol{p}}^{\lambda+1}-\boldsymbol{p}^{\lambda+1}\left(
\boldsymbol{\theta}\right)  )\\
&  =%
\begin{bmatrix}
\frac{1-p_{1}\left(  \boldsymbol{\theta}\right)  }{p_{1}^{\lambda}\left(
\boldsymbol{\theta}\right)  } & -\frac{p_{1}\left(  \boldsymbol{\theta
}\right)  }{p_{2}^{\lambda}\left(  \boldsymbol{\theta}\right)  } & \cdots &
-\frac{p_{1}\left(  \boldsymbol{\theta}\right)  }{p_{M}^{\lambda}\left(
\boldsymbol{\theta}\right)  }\\
-\frac{p_{2}\left(  \boldsymbol{\theta}\right)  }{p_{1}^{\lambda}\left(
\boldsymbol{\theta}\right)  } & \frac{1-p_{2}\left(  \boldsymbol{\theta
}\right)  }{p_{2}^{\lambda}\left(  \boldsymbol{\theta}\right)  } & \cdots &
-\frac{p_{2}\left(  \boldsymbol{\theta}\right)  }{p_{M}^{\lambda}\left(
\boldsymbol{\theta}\right)  }\\
\vdots & \vdots & \ddots & \vdots\\
-\frac{p_{M}\left(  \boldsymbol{\theta}\right)  }{p_{1}^{\lambda}\left(
\boldsymbol{\theta}\right)  } & -\frac{p_{M}\left(  \boldsymbol{\theta
}\right)  }{p_{2}^{\lambda}\left(  \boldsymbol{\theta}\right)  } & \cdots &
\frac{1-p_{M}\left(  \boldsymbol{\theta}\right)  }{p_{M}^{\lambda}\left(
\boldsymbol{\theta}\right)  }%
\end{bmatrix}%
\begin{bmatrix}
\widehat{p}_{1}^{\lambda}-p_{1}^{\lambda}\left(  \boldsymbol{\theta}\right)
\\
\widehat{p}_{2}^{\lambda}-p_{2}^{\lambda}\left(  \boldsymbol{\theta}\right)
\\
\vdots\\
\widehat{p}_{M}^{\lambda}-p_{M}^{\lambda}\left(  \boldsymbol{\theta}\right)
\end{bmatrix}
.
\end{align*}
Since%
\[
\overline{\boldsymbol{\Psi}}^{\phi_{\lambda}}(\boldsymbol{\theta})=\frac
{1}{1+\lambda}\left(  \boldsymbol{I}_{M}+\lambda\boldsymbol{D}%
_{\widehat{\boldsymbol{p}}}^{\lambda+1}\boldsymbol{D}_{\boldsymbol{p}%
(\boldsymbol{\theta})}^{-(\lambda+1)}\right)  \boldsymbol{p}%
(\boldsymbol{\theta}),
\]
the expression of $\boldsymbol{G}_{\phi_{\lambda}}(\boldsymbol{\theta})$\ is%
\begin{align*}
\boldsymbol{G}_{\phi_{\lambda}}(\boldsymbol{\theta}) &  =\frac{1}{1+\lambda
}\boldsymbol{W}^{T}\left(  \boldsymbol{I}_{M}-\boldsymbol{p}%
(\boldsymbol{\theta})\boldsymbol{1}_{M}^{T}\right)  \left(  \boldsymbol{I}%
_{M}+\lambda\boldsymbol{D}_{\widehat{\boldsymbol{p}}}^{\lambda+1}%
\boldsymbol{D}_{\boldsymbol{p}(\boldsymbol{\theta})}^{-(\lambda+1)}\right)
\boldsymbol{D}_{\boldsymbol{p}(\boldsymbol{\theta})}\left(  \boldsymbol{I}%
_{M}-\boldsymbol{1}_{M}\boldsymbol{p}^{T}(\boldsymbol{\theta})\right)
\boldsymbol{W}\\
&  =\frac{1}{1+\lambda}\left(  \boldsymbol{W}^{T}\boldsymbol{\Sigma
}_{\boldsymbol{p}(\boldsymbol{\theta})}\boldsymbol{W}+\lambda\boldsymbol{W}%
^{T}\boldsymbol{\Sigma}_{\boldsymbol{p}(\boldsymbol{\theta})}\boldsymbol{D}%
_{\widehat{\boldsymbol{p}}}^{\lambda+1}\boldsymbol{D}_{\boldsymbol{p}%
(\boldsymbol{\theta})}^{-(\lambda+2)}\boldsymbol{\Sigma}_{\boldsymbol{p}%
(\boldsymbol{\theta})}\boldsymbol{W}\right)  ,
\end{align*}
and the Newton-Raphson algorithm for the QMPE of $\boldsymbol{\theta}$ is%
\begin{align}
\widehat{\boldsymbol{\theta}}_{\phi_{\lambda}}^{\left(  t+1\right)  } &
=\widehat{\boldsymbol{\theta}}_{\phi_{\lambda}}^{\left(  t\right)  }-\left(
\boldsymbol{W}^{T}\boldsymbol{\Sigma}_{\boldsymbol{p}%
(\widehat{\boldsymbol{\theta}}_{\phi_{\lambda}}^{\left(  t\right)  }%
)}\boldsymbol{W}+\lambda\boldsymbol{W}^{T}\boldsymbol{\Sigma}_{\boldsymbol{p}%
(\widehat{\boldsymbol{\theta}}_{\phi_{\lambda}}^{\left(  t\right)  }%
)}\boldsymbol{D}_{\widehat{\boldsymbol{p}}}^{\lambda+1}\boldsymbol{D}%
_{\boldsymbol{p}(\widehat{\boldsymbol{\theta}}_{\phi_{\lambda}}^{\left(
t\right)  })}^{-(\lambda+2)}\boldsymbol{\Sigma}_{\boldsymbol{p}%
(\widehat{\boldsymbol{\theta}}_{\phi_{\lambda}}^{\left(  t\right)  }%
)}\boldsymbol{W}\right)  ^{-1}\nonumber\\
&  \times\boldsymbol{W}^{T}\boldsymbol{\Sigma}_{\boldsymbol{p}%
(\widehat{\boldsymbol{\theta}}_{\phi_{\lambda}}^{\left(  t\right)  }%
)}\boldsymbol{D}_{\boldsymbol{p}(\widehat{\boldsymbol{\theta}}_{\phi_{\lambda
}}^{\left(  t\right)  })}^{-(\lambda+1)}(\widehat{\boldsymbol{p}}^{\lambda
+1}-\boldsymbol{p}^{\lambda+1}(\widehat{\boldsymbol{\theta}}_{\phi_{\lambda}%
}^{\left(  t\right)  })).\label{NR}%
\end{align}
Under no model assumption, the two parameters of interest, $\boldsymbol{\beta
}$, are estimated through the saturated log-linear model%
\[
\widehat{\boldsymbol{\beta}}_{\widehat{n}^{\ast},N}=%
\begin{pmatrix}
\widetilde{\rho}_{\widehat{n}^{\ast},N}^{2}\\
\widehat{\boldsymbol{p}}%
\end{pmatrix}
\qquad\text{or}\qquad\widehat{\boldsymbol{\beta}}_{\widehat{n}^{\ast
},N,\bullet}=%
\begin{pmatrix}
\widetilde{\rho}_{\widehat{n}^{\ast},N,\bullet}^{2}\\
\widehat{\boldsymbol{p}}%
\end{pmatrix}
.
\]
Under the independence model assumption as well as no model assumption, the
estimates of $\boldsymbol{\beta}$\ are shown for $\lambda\in
\{-0.5,0,2/3,1,2\}$ in Table \ref{t00}. The intracluster correlation
coefficient exhibits the smallest value under no model assumption and under
the independence model assumption a set of quite different values is obtained.
In Section \ref{sec6}, through a simulation study, some guidance is given for
selecting the most appropriate estimate. The standard errors are also shown
based on the square root of the diagonal elements of
\begin{align*}
\widehat{\mathrm{Var}}[\widehat{\boldsymbol{p}}] &  =\tfrac
{\widetilde{\vartheta}_{\widehat{n}^{\ast},N}}{N\widehat{\bar{n}}%
}\boldsymbol{\Sigma\boldsymbol{_{\widehat{\boldsymbol{p}}}}}\text{\quad
(Brier's non-parametric),}\\
\widehat{\mathrm{Var}}[\widehat{\boldsymbol{p}}] &  =\tfrac
{\widetilde{\vartheta}_{\widehat{n}^{\ast},N,\bullet}}{N\widehat{\bar{n}}%
}\boldsymbol{\Sigma\boldsymbol{_{\widehat{\boldsymbol{p}}}}}\text{\quad
(improved Brier's non-parametric),}\\
\widehat{\mathrm{Var}}[\boldsymbol{p}(\widehat{\boldsymbol{\theta}}%
_{\phi_{\lambda}})] &  =\tfrac{\widetilde{\vartheta}_{\widehat{n}^{\ast
},N,\phi_{\lambda}}}{N\widehat{\bar{n}}}\boldsymbol{\Sigma
\boldsymbol{_{\boldsymbol{p}(\widehat{\boldsymbol{\theta}}_{\phi_{\lambda}})}%
}W}\left(  \boldsymbol{\boldsymbol{W}}^{T}\boldsymbol{\Sigma
\boldsymbol{_{\boldsymbol{p}(\widehat{\boldsymbol{\theta}}_{\phi_{\lambda}})}%
}W}\right)  ^{-1}\boldsymbol{W}^{T}\boldsymbol{\Sigma}%
_{p(\widehat{\boldsymbol{\theta}}_{\phi_{\lambda}})}.
\end{align*}
Taking into account the asymptotic normality of $\widehat{\boldsymbol{p}}%
$\ and $\boldsymbol{p}(\widehat{\boldsymbol{\theta}}_{\phi_{\lambda}})$, their
corresponding confidence intervals, with $1-\alpha$ level, could be calculated
as $\widehat{p}_{r}\mp z_{\alpha/2}\widehat{\mathrm{Var}}[\widehat{p}_{r}]$ or
$p_{r}(\widehat{\boldsymbol{\theta}}_{\phi_{\lambda}})\mp z_{\alpha
/2}\widehat{\mathrm{Var}}[p_{r}(\widehat{\boldsymbol{\theta}}_{\phi_{\lambda}%
})]$, $r=1,...,M$.%

\begin{table}[htbp]  \tabcolsep2.8pt  \centering
$%
\begin{tabular}
[c]{ccccccccccc}\hline
\multicolumn{11}{c}{No model (Brier's non-parametric)}\\\hline
& $\overset{}{\hat{p}_{1}}$ & $\hat{p}_{2}$ & $\hat{p}_{3}$ & $\hat{p}_{4}$ &
$\hat{p}_{5}$ & $\hat{p}_{6}$ & $\hat{p}_{7}$ & $\hat{p}_{8}$ & $\hat{p}_{9}$
& $\widetilde{\rho}_{\widehat{n}^{\ast},N}^{2}$\\\hline
& $0.1875$ & $0.0625$ & $0.0000$ & $0.2917$ & $0.2917$ & $0.0313$ & $0.0417$ &
$0.0521$ & $0.0417$ & $0.0172$\\
& $(0.0411)$ & $(0.0255)$ & $(0.0000)$ & $(0.0479)$ & $(0.0479)$ & $(0.0183)$
& $(0.0210)$ & $(0.0234)$ & $(0.0210)$ & \\\hline
\multicolumn{11}{c}{No model (improved Brier's non-parametric)}\\\hline
& $\overset{}{\hat{p}_{1}}$ & $\hat{p}_{2}$ & $\hat{p}_{3}$ & $\hat{p}_{4}$ &
$\hat{p}_{5}$ & $\hat{p}_{6}$ & $\hat{p}_{7}$ & $\hat{p}_{8}$ & $\hat{p}_{9}$
& $\widetilde{\rho}_{\widehat{n}^{\ast},N,\bullet}^{2}$\\\hline
& $0.1875$ & $0.0625$ & $0.0000$ & $0.2917$ & $0.2917$ & $0.0313$ & $0.0417$ &
$0.0521$ & $0.0417$ & $0.0199$\\\hline
& $(0.0413)$ & $(0.0256)$ & $(0.0000)$ & $(0.0481)$ & $(0.0481)$ & $(0.0184)$
& $(0.0212)$ & $(0.0235)$ & $(0.0212)$ & \\\hline
\multicolumn{11}{c}{Independence model}\\\hline
$\lambda$ & $\overset{}{p_{1}(\widehat{\boldsymbol{\theta}}_{\phi_{\lambda}}%
)}$ & $p_{2}(\widehat{\boldsymbol{\theta}}_{\phi_{\lambda}})$ & $p_{3}%
(\widehat{\boldsymbol{\theta}}_{\phi_{\lambda}})$ & $p_{4}%
(\widehat{\boldsymbol{\theta}}_{\phi_{\lambda}})$ & $p_{5}%
(\widehat{\boldsymbol{\theta}}_{\phi_{\lambda}})$ & $p_{6}%
(\widehat{\boldsymbol{\theta}}_{\phi_{\lambda}})$ & $p_{7}%
(\widehat{\boldsymbol{\theta}}_{\phi_{\lambda}})$ & $p_{8}%
(\widehat{\boldsymbol{\theta}}_{\phi_{\lambda}})$ & $p_{9}%
(\widehat{\boldsymbol{\theta}}_{\phi_{\lambda}})$ & $\widetilde{\rho
}_{\widehat{n}^{\ast},N,\phi_{\lambda}}^{2}$\\\hline
$-0.5$ & $0.1274$ & $0.1001$ & $0.0113$ & $0.3412$ & $0.2682$ & $0.0302$ &
$0.0649$ & $0.0510$ & $0.0057$ & $0.3109$\\
& $(0.0387)$ & $(0.0323)$ & $(0.0082)$ & $(0.0617)$ & $(0.0564)$ & $(0.0207)$
& $(0.0278)$ & $(0.0226)$ & $(0.0045)$ & \\
$0$ & $0.1302$ & $0.1016$ & $0.0182$ & $0.3201$ & $0.2497$ & $0.0448$ &
$0.0705$ & $0.0550$ & $0.0099$ & $0.1545$\\
& $(0.0331)$ & $(0.0276)$ & $(0.0093)$ & $(0.0512)$ & $(0.0464)$ & $(0.0210)$
& $(0.0245)$ & $(0.0198)$ & $(0.0055)$ & \\
$2/3$ & $0.1316$ & $0.1027$ & $0.0252$ & $0.3004$ & $0.2345$ & $0.0575$ &
$0.0751$ & $0.0586$ & $0.0144$ & $0.0872$\\
& $(0.0303)$ & $(0.0253)$ & $(0.0103)$ & $(0.0456)$ & $(0.0411)$ & $(0.0214)$
& $(0.0229)$ & $(0.0186)$ & $(0.0066)$ & \\
$1$ & $0.1319$ & $0.1033$ & $0.0280$ & $0.2931$ & $0.2296$ & $0.0622$ &
$0.0761$ & $0.0596$ & $0.0162$ & $0.0712$\\
& $(0.0296)$ & $(0.0248)$ & $(0.0108)$ & $(0.0440)$ & $(0.0397)$ & $(0.0216)$
& $(0.0225)$ & $(0.0183)$ & $(0.0070)$ & \\
$2$ & $0.1322$ & $0.1054$ & $0.0346$ & $0.2771$ & $0.2209$ & $0.0725$ &
$0.0765$ & $0.0610$ & $0.0200$ & $0.0477$\\
& $(0.0283)$ & $(0.0241)$ & $(0.0118)$ & $(0.0414)$ & $(0.0374)$ & $(0.0222)$
& $(0.0215)$ & $(0.0178)$ & $(0.0078)$ & \\\hline
\end{tabular}
\ \ \ \ \ \ $%
\caption
{Estimates and standard errors (in brackets) of probabilities and intracluster correlation coefficient: non-paramatric and semiparametric versions for the independence model.\label
{t00}}%
\end{table}%

\subsection{Study on FBI data (Weir and Hill, 2002)\label{FBI}}

In an FBI Laboratory Division Publication, article by Budowle and Moretti
(1999), genotype profile data were electronically published. Based on six US
subpopulations, allele frequencies were reported for $13$ commonly-used
forensic loci in the Combined DNA Index System (CODIS): D3S1358, vWA, FGA,
D8S1179, D21S11, D18S51, D5S818, D13S317, D7S820, CSF1PO, TPOX, THO1 and
D16S539. For the first four loci, allele frequencies are summarized in Tables
\ref{t1}-\ref{t4}, based on six clusters, African Americans ($1$), U.S.
Caucasians ($2$), Hispanics ($3$), Bahamians ($4$), Jamaicans ($5$), and
Trinidadians ($6$).%

\begin{table}[htbp]  \tabcolsep2.8pt  \centering
$%
\begin{tabular}
[c]{cccccccccc}\hline
$\ell$ & $Y_{1}^{(\ell)}$ & $Y_{2}^{(\ell)}$ & $Y_{3}^{(\ell)}$ &
$Y_{4}^{(\ell)}$ & $Y_{5}^{(\ell)}$ & $Y_{6}^{(\ell)}$ & $Y_{7}^{(\ell)}$ &
$Y_{8}^{(\ell)}$ & $n_{\ell}$\\\hline
$1$ & $1$ & $5$ & $37$ & $86$ & $99$ & $62$ & $19$ & $2$ & $311$\\
$2$ & $0$ & $0$ & $53$ & $85$ & $85$ & $79$ & $63$ & $2$ & $367$\\
$3$ & $0$ & $1$ & $28$ & $150$ & $100$ & $49$ & $33$ & $6$ & $367$\\
$4$ & $0$ & $0$ & $21$ & $88$ & $96$ & $59$ & $19$ & $1$ & $284$\\
$5$ & $2$ & $5$ & $19$ & $95$ & $81$ & $64$ & $15$ & $2$ & $283$\\
$6$ & $0$ & $0$ & $8$ & $48$ & $42$ & $32$ & $17$ & $0$ & $147$\\\hline
\end{tabular}
\ \ \ $%
\caption{Integer-valued alleles for D3S1358 loci desagregated by US subpopulations: 12, 13, 14, 15, 16, 17, 18, 19 (M=8).\label{t1}}%
\end{table}%
%

\begin{table}[htbp]  \tabcolsep2.8pt  \centering
$%
\begin{tabular}
[c]{cccccccccccc}\hline
$\ell$ & $Y_{1}^{(\ell)}$ & $Y_{2}^{(\ell)}$ & $Y_{3}^{(\ell)}$ &
$Y_{4}^{(\ell)}$ & $Y_{5}^{(\ell)}$ & $Y_{6}^{(\ell)}$ & $Y_{7}^{(\ell)}$ &
$Y_{8}^{(\ell)}$ & $Y_{9}^{(\ell)}$ & $Y_{10}^{(\ell)}$ & $n_{\ell}$\\\hline
$1$ & $0$ & $2$ & $21$ & $76$ & $84$ & $60$ & $37$ & $22$ & $9$ & $0$ &
$311$\\
$2$ & $0$ & $1$ & $35$ & $41$ & $78$ & $97$ & $79$ & $32$ & $4$ & $0$ &
$367$\\
$3$ & $1$ & $19$ & $25$ & $127$ & $89$ & $73$ & $28$ & $5$ & $0$ & $0$ &
$367$\\
$4$ & $3$ & $8$ & $16$ & $43$ & $74$ & $59$ & $51$ & $23$ & $7$ & $0$ &
$284$\\
$5$ & $1$ & $1$ & $19$ & $62$ & $81$ & $53$ & $42$ & $15$ & $7$ & $2$ &
$283$\\
$6$ & $1$ & $1$ & $13$ & $18$ & $44$ & $39$ & $21$ & $7$ & $3$ & $0$ &
$147$\\\hline
\end{tabular}
\ \ $%
\caption{Integer-valued alleles for vWA loci desagregated by US subpopulations: 11, 13, 14, 15, 16, 17, 18, 19, 20, 21 (M=10).\label{t2}}%
\end{table}%
%

\begin{table}[htbp]  \tabcolsep2.8pt  \centering
$%
\begin{tabular}
[c]{ccccccccccccccc}\hline
$\ell$ & $Y_{1}^{(\ell)}$ & $Y_{2}^{(\ell)}$ & $Y_{3}^{(\ell)}$ &
$Y_{4}^{(\ell)}$ & $Y_{5}^{(\ell)}$ & $Y_{6}^{(\ell)}$ & $Y_{7}^{(\ell)}$ &
$Y_{8}^{(\ell)}$ & $Y_{9}^{(\ell)}$ & $Y_{10}^{(\ell)}$ & $Y_{11}^{(\ell)}$ &
$Y_{12}^{(\ell)}$ & $Y_{13}^{(\ell)}$ & $n_{\ell}$\\\hline
$1$ & $3$ & $16$ & $25$ & $38$ & $74$ & $36$ & $59$ & $33$ & $12$ & $7$ & $6$
& $1$ & $1$ & $311$\\
$2$ & $12$ & $19$ & $54$ & $65$ & $68$ & $58$ & $54$ & $26$ & $7$ & $4$ & $0$
& $0$ & $0$ & $367$\\
$3$ & $1$ & $30$ & $27$ & $45$ & $67$ & $52$ & $44$ & $55$ & $32$ & $13$ & $1$
& $0$ & $0$ & $367$\\
$4$ & $0$ & $17$ & $22$ & $31$ & $42$ & $51$ & $60$ & $30$ & $10$ & $16$ & $2$
& $2$ & $0$ & $283$\\
$5$ & $1$ & $19$ & $15$ & $18$ & $61$ & $61$ & $42$ & $32$ & $9$ & $16$ & $6$
& $3$ & $0$ & $283$\\
$6$ & $2$ & $8$ & $14$ & $15$ & $25$ & $23$ & $30$ & $17$ & $6$ & $3$ & $2$ &
$1$ & $1$ & $147$\\\hline
\end{tabular}
\ \ $%
\caption{Integer-valued alleles for FGA loci desagregated by US subpopulations: 18, 19, 20, 21, 22, 23, 24, 25, 26, 27, 28, 29, 30 (M=13).\label{t3}}%
\end{table}%
%

\begin{table}[htbp]  \tabcolsep2.8pt  \centering
$%
\begin{tabular}
[c]{ccccccccccccc}\hline
$\ell$ & $Y_{1}^{(\ell)}$ & $Y_{2}^{(\ell)}$ & $Y_{3}^{(\ell)}$ &
$Y_{4}^{(\ell)}$ & $Y_{5}^{(\ell)}$ & $Y_{6}^{(\ell)}$ & $Y_{7}^{(\ell)}$ &
$Y_{8}^{(\ell)}$ & $Y_{9}^{(\ell)}$ & $Y_{10}^{(\ell)}$ & $Y_{11}^{(\ell)}$ &
$n_{\ell}$\\\hline
$1$ & $1$ & $2$ & $6$ & $12$ & $32$ & $72$ & $104$ & $65$ & $14$ & $3$ & $0$ &
$311$\\
$2$ & $7$ & $4$ & $38$ & $19$ & $53$ & $127$ & $75$ & $40$ & $3$ & $1$ & $0$ &
$367$\\
$3$ & $1$ & $1$ & $34$ & $24$ & $41$ & $117$ & $90$ & $46$ & $10$ & $3$ & $0$
& $367$\\
$4$ & $0$ & $1$ & $6$ & $16$ & $33$ & $54$ & $93$ & $55$ & $19$ & $7$ & $0$ &
$284$\\
$5$ & $0$ & $2$ & $3$ & $11$ & $32$ & $60$ & $89$ & $59$ & $25$ & $1$ & $1$ &
$283$\\
$6$ & $1$ & $0$ & $7$ & $11$ & $22$ & $35$ & $36$ & $26$ & $9$ & $0$ & $0$ &
$147$\\\hline
\end{tabular}
\ \ $%
\caption{Integer-valued alleles for D8S1179 loci desagregated by US subpopulations: 8, 9, 10, 11, 12, 13, 14, 15, 16, 17, 18 (M=11).\label{t4}}%
\end{table}%

Weir and Hill (2002) proposed estimating $\rho^{2}$ from just the first and
second moments of the allele frequency distribution, and this is the essence
of their so-called method of moments%
\begin{equation}
\overline{\rho}^{2}=\frac{\sum\limits_{r=1}^{M}(MSP_{r}-MSG_{r})}%
{\sum\limits_{r=1}^{M}\left(  MSP_{r}+(\bar{\eta}-1)MSG_{r}\right)
},\label{roWeir}%
\end{equation}
where%
\begin{align*}
\bar{\eta} &  =\frac{1}{(N-1)n}\left(  \left(  \sum_{\ell=1}^{N}n_{\ell
}\right)  ^{2}-\sum_{\ell=1}^{N}n_{\ell}^{2}\right)  ,\\
MSP_{r} &  =\frac{1}{N-1}\sum_{\ell=1}^{N}n_{\ell}(\widehat{p}_{r}^{(\ell
)}-\widehat{p}_{r})^{2},\\
MSG_{r} &  =\frac{1}{\sum_{\ell=1}^{N}n_{\ell}-N}\sum_{\ell=1}^{N}n_{\ell
}\widehat{p}_{r}^{(\ell)}(1-\widehat{p}_{r}^{(\ell)}).
\end{align*}
In Table \ref{t3} the estimates of $\boldsymbol{p}$\ and $\rho^{2}$\ are shown
for loci\ D3S1358, vWA, FGA and D8S1179. The intracluster correlation
coefficient exhibits a similar value for both methods, the new proposed
estimation of Section \ref{Sec:new2} ($\widehat{\rho}^{2}$) and Weir and Hill
estimation ($\overline{\rho}^{2}$). The standard errors are also shown based
on the square root of the diagonal elements of
\begin{align*}
\widehat{\mathrm{Var}}(\widehat{\boldsymbol{p}}) &  =\tfrac{\widehat{\rho}%
^{2}}{N^{2}}\boldsymbol{\Sigma\boldsymbol{_{\widehat{\boldsymbol{p}}}}%
}\text{\quad(with overdispersion),}\\
\widehat{\mathrm{Var}}(\widehat{\boldsymbol{p}}) &  =\tfrac{1}{\sum_{\ell
=1}^{N}n_{\ell}}\boldsymbol{\Sigma\boldsymbol{_{\widehat{\boldsymbol{p}}}}%
}\text{\quad(without overdispersion).}%
\end{align*}
In the case with overdispersion the standard errors have bigger values than in
the case without overdispersion (e1%
$>$%
e2). The explanation to this difference is based on the correction that
$\widehat{\rho}^{2}$\ inherits for having large cluster sizes and in the case
without overdispersion the formula does not inherits the assumption that the
cluster sizes are large.%

\begin{table}[htbp]  \small\tabcolsep1.5pt  \centering
$%
\begin{tabular}
[c]{ccccccccccccccc}\hline
loci &  & $\overset{}{\hat{p}_{1}}$ & $\hat{p}_{2}$ & $\hat{p}_{3}$ & $\hat
{p}_{4}$ & $\hat{p}_{5}$ & $\hat{p}_{6}$ & $\hat{p}_{7}$ & $\hat{p}_{8}$ &
$\hat{p}_{9}$ & $\hat{p}_{10}$ & $\hat{p}_{11}$ & $\hat{p}_{12}$ & $\hat
{p}_{13}$\\\hline
\multicolumn{1}{l}{D3S1358} & est & $0.0017$ & $0.0063$ & $0.0944$ & $0.3138$
& $0.2860$ & $0.1961$ & $0.0944$ & $0.0074$ & $-$ & $-$ & $-$ & $-$ & $-$\\
& e1 & $0.0010$ & $0.0019$ & $0.0070$ & $0.0111$ & $0.0108$ & $0.0095$ &
$0.0070$ & $0.0020$ & $-$ & $-$ & $-$ & $-$ & $-$\\
& e2 & $0.0007$ & $0.0014$ & $0.0051$ & $0.0081$ & $0.0079$ & $0.0069$ &
$0.0051$ & $0.0015$ & $-$ & $-$ & $-$ & $-$ & $-$\\
\multicolumn{1}{l}{vWA} & est & $0.0034$ & $0.0182$ & $0.0733$ & $0.2086$ &
$0.2558$ & $0.2166$ & $0.1467$ & $0.0591$ & $0.0171$ & $0.0011$ & $-$ & $-$ &
$-$\\
& e1 & $0.0014$ & $0.0032$ & $0.0062$ & $0.0097$ & $0.0104$ & $0.0098$ &
$0.0084$ & $0.0056$ & $0.0031$ & $0.0008$ & $-$ & $-$ & $-$\\
& e2 & $0.0011$ & $0.0026$ & $0.0050$ & $0.0078$ & $0.0084$ & $0.0079$ &
$0.0068$ & $0.0045$ & $0.0025$ & $0.0006$ & $-$ & $-$ & $-$\\
\multicolumn{1}{l}{FGA} & est & $0.0108$ & $0.0620$ & $0.0893$ & $0.1206$ &
$0.1917$ & $0.1598$ & $0.1644$ & $0.1098$ & $0.0432$ & $0.0336$ & $0.0097$ &
$0.0040$ & $0.0011$\\
& e1 & $0.0025$ & $0.0058$ & $0.0068$ & $0.0078$ & $0.0094$ & $0.0087$ &
$0.0088$ & $0.0075$ & $0.0049$ & $0.0043$ & $0.0023$ & $0.0015$ & $0.0008$\\
& e2 & $0.0016$ & $0.0038$ & $0.0045$ & $0.0052$ & $0.0062$ & $0.0058$ &
$0.0059$ & $0.0050$ & $0.0032$ & $0.0029$ & $0.0016$ & $0.0010$ & $0.0005$\\
\multicolumn{1}{l}{D8S1179} & est & $0.0057$ & $0.0057$ & $0.0534$ & $0.0529$
& $0.1211$ & $0.2644$ & $0.2769$ & $0.1654$ & $0.0455$ & $0.0085$ & $0.0006$ &
$-$ & $-$\\
& e1 & $0.0018$ & $0.0018$ & $0.0054$ & $0.0053$ & $0.0078$ & $0.0105$ &
$0.0107$ & $0.0089$ & $0.0050$ & $0.0022$ & $0.0006$ & $-$ & $-$\\
& e2 & $0.0014$ & $0.0014$ & $0.0040$ & $0.0040$ & $0.0059$ & $0.0079$ &
$0.0080$ & $0.0067$ & $0.0037$ & $0.0017$ & $0.0004$ & $-$ & $-$\\\hline
loci &  & $\overset{}{\widehat{\rho}^{2}}$ & $\overline{\rho}^{2}$ &  &  &  &
&  &  &  &  &  &  & \\\cline{1-4}%
D3S1358 & est & $0.0109$ & $0.0109$ &  &  &  &  &  &  &  &  &  &  & \\
vWA & est & $0.0133$ & $0.0156$ &  &  &  &  &  &  &  &  &  &  & \\
FGA & est & $0.0090$ & $0.0065$ &  &  &  &  &  &  &  &  &  &  & \\
D8S1179 & est & $0.0116$ & $0.0129$ &  &  &  &  &  &  &  &  &  &  &
\\\cline{1-4}%
\end{tabular}
\ \ \ \ \ \ \ \ $%
\caption{Estimates of $\boldsymbol{p}$ and $\rho^{2}$ for loci D3S1358, vWA, FGA and D8S1179 and standard errors of probabilities without (e1) and with (e2) overdispersion.\label{t3}}%
\end{table}%
\newpage

\section{Simulation study\label{sec6}}

The major issue of interest of this section is to investigate, through Monte
Carlo simulations, the improvement of the new estimators of the intracluster
correlation coefficient, $\rho^{2}$, based on $X^{2}(\widetilde{\boldsymbol{Y}%
}_{g},\widetilde{\boldsymbol{Y}})$ or $X^{2}(\widetilde{\boldsymbol{Y}}%
_{g},\widehat{\boldsymbol{\theta}}_{\phi_{\lambda}})$, in comparison with
either the Brier's classical one, based on $X^{2}(\widetilde{\boldsymbol{Y}%
}_{g})$ (see Section \ref{sec4}), or the Weir and Hill's proposal (see Section
\ref{FBI}). Such an improvement is measured through $R=15,000$ replications,
in terms of the root of the mean square error ($\mathrm{RMSE}$) and bias. The
estimates are truncated at $0$ or $1$, to restrict the parameter space of
$\rho^{2}$ to $(0,1)$. As underlying unknown distributions, three scenarios
are taken into account: the Dirichlet-multinomial (DM), the n-inflated
multinomial (NI) and the random clumped (RC) distributions. In Appendix
\ref{Alg} the algorithms to generate observations from these distributions are
provided. Initially, we tried to use the \texttt{drnbet fortran IMSL
subroutine}\ to generate the DM distributions and we saw that it does not
generate observations correctly from the beta distribution. Later, we
discovered that Ahn and James (1995) had the same problem, and for this reason
we have used the \texttt{G05FEF fortran NAG subroutine}.\newpage

\subsection{Simulation: study on housing satisfaction}

Based on a mild modification of the study of housing satisfaction (Section
\ref{Housing}), $N_{1}=18$, $N_{2}=2$, $N_{3}=5$ clusters are considered with
$G=3$ different cluster sizes, $n_{1}=5$, $n_{2}=3$, $n_{3}=7$. In this way,
the experiment can be evaluated for a value $G$ not so close to $G=1$ (equal
cluster sizes). With theoretical values for the vector of unknown parameters
$\boldsymbol{\theta}=(\theta_{1(1)},\theta_{1(2)},\theta_{2(1)},\theta
_{2(2)})^{T}=(0.1,0.2,0.4,0.3)^{T}$, the clustered multinomial distributions
are simulated under the independence log-linear model of Section \ref{Housing}.

In Figure \ref{fig1}, the plots on left hand side exhibit a greater value
going up, for the three distribution, which means that $\mathrm{RMSE}%
(\widetilde{\rho}_{\widehat{n}^{\ast},N,\phi_{\lambda}}^{2})<\mathrm{RMSE}%
(\widetilde{\rho}_{\widehat{n}^{\ast},N,\bullet}^{2})<\mathrm{RMSE}%
(\widetilde{\rho}_{\widehat{n}^{\ast},N}^{2})$ with $\lambda=\frac{2}{3}$. A
big part of the $\mathrm{RMSE}$ is due to bias, in fact $\mathrm{bias}%
(\widetilde{\rho}_{\widehat{n}^{\ast},N,\phi_{\lambda}}^{2})<\mathrm{bias}%
(\widetilde{\rho}_{\widehat{n}^{\ast},N,\bullet}^{2})<\mathrm{bias}%
(\widetilde{\rho}_{\widehat{n}^{\ast},N}^{2})$ with $\lambda=\frac{2}{3}$ and
for $\widetilde{\rho}_{\widehat{n}^{\ast},N,\phi_{\lambda}}^{2}$ with
$\lambda=\frac{2}{3}$ and $\widetilde{\rho}_{\widehat{n}^{\ast},N,\bullet}%
^{2}$ the negative bias is becoming greater as $\rho^{2}$\ increases.
Identifying the proper log-linear model makes the bias of $\widetilde{\rho
}_{\widehat{n}^{\ast},N,\phi_{\lambda}}^{2}$ with $\lambda=\frac{2}{3}$
clearly smaller and stable as $\rho^{2}$\ increases. The estimators were
constructed under no distributional assumption but from the simulation study,
but the behaviour of the estimators are appreciated to be quite different
depending on the distributional assumption. It is also worth of being
mentioned that the $\mathrm{RMSE}$ and the bias of the estimors of $\rho^{2}%
$\ tends to be smaller with the DM and RC distributions in comparison with the
NI distribution. The estimators with the DM distribution seem to be more
precise and the estimators with the RC distribution less biased. In Figure
\ref{fig2}, density plots based on the $15,000$ replications are shown for
$\rho^{2}=0.5$, and from them the same conclusions about the bias are
obtained.\ By following the results of Figure \ref{fig3}, where $\mathrm{RMSE}%
$ and the bias of $\widetilde{\rho}_{\widehat{n}^{\ast},N,\phi_{\lambda}}^{2}$
is compared for $\lambda\in\{-0.5,0,\frac{2}{3},1,2\}$, the QMPE with
$\lambda\in\{\frac{2}{3},1\}$\ tends to be more precise than the QMLE
($\lambda=0$), however the QMLE ($\lambda=0$) seems to be more unbiased. The
optimal choice of $\lambda$ for $\widetilde{\rho}_{\widehat{n}^{\ast}%
,N,\phi_{\lambda}}^{2}$ seems to be very related with the optimal choice of of
$\lambda$ for for the QMPE of $\boldsymbol{\theta}$.%

\begin{figure}[htbp]  \centering
\begin{tabular}
[c]{cc}%
{\includegraphics[
height=2.6489in,
width=3.5284in
]%
{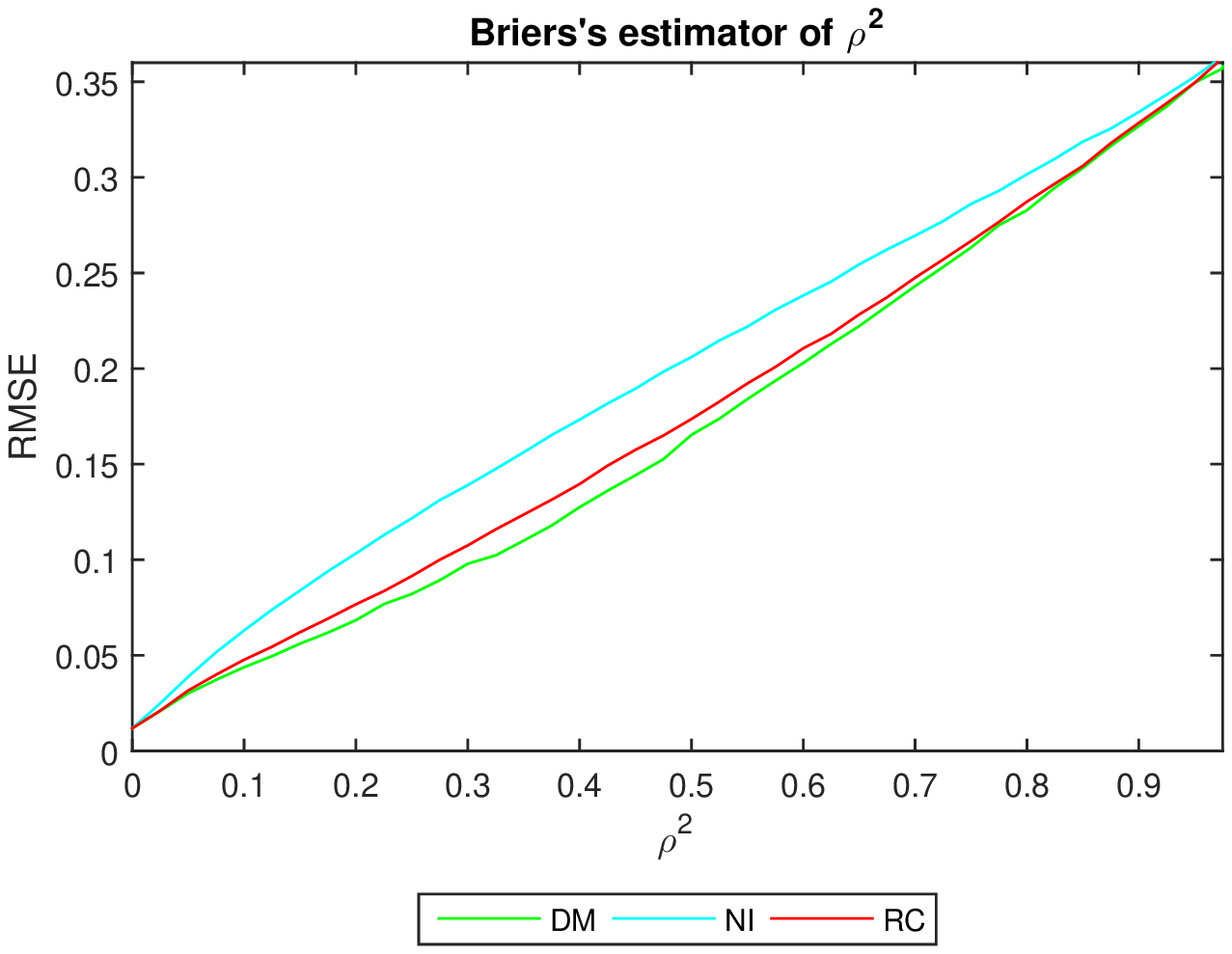}%
}
&
{\includegraphics[
height=2.6489in,
width=3.5284in
]%
{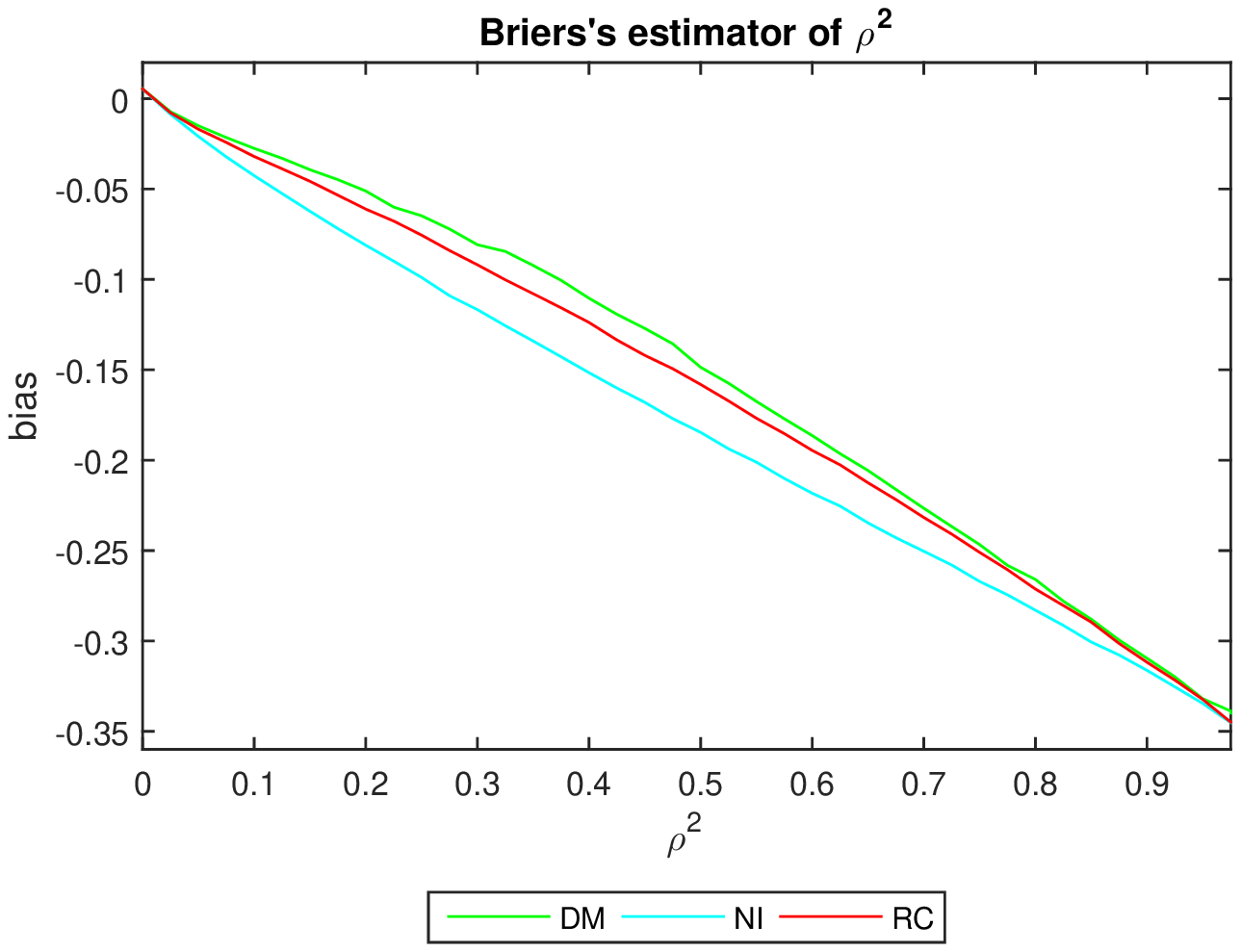}%
}
\\%
{\includegraphics[
height=2.6489in,
width=3.5284in
]%
{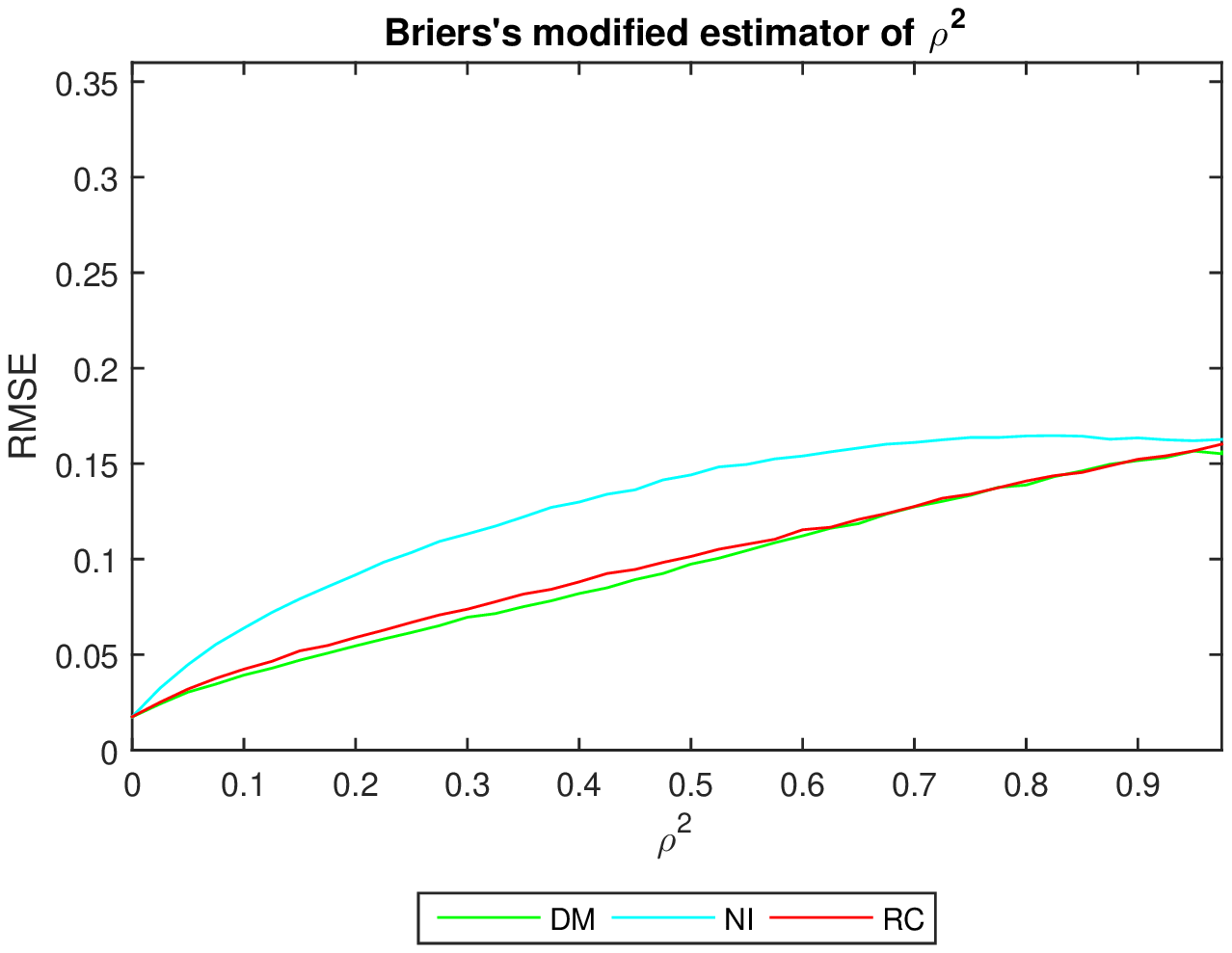}%
}
&
{\includegraphics[
height=2.6489in,
width=3.5284in
]%
{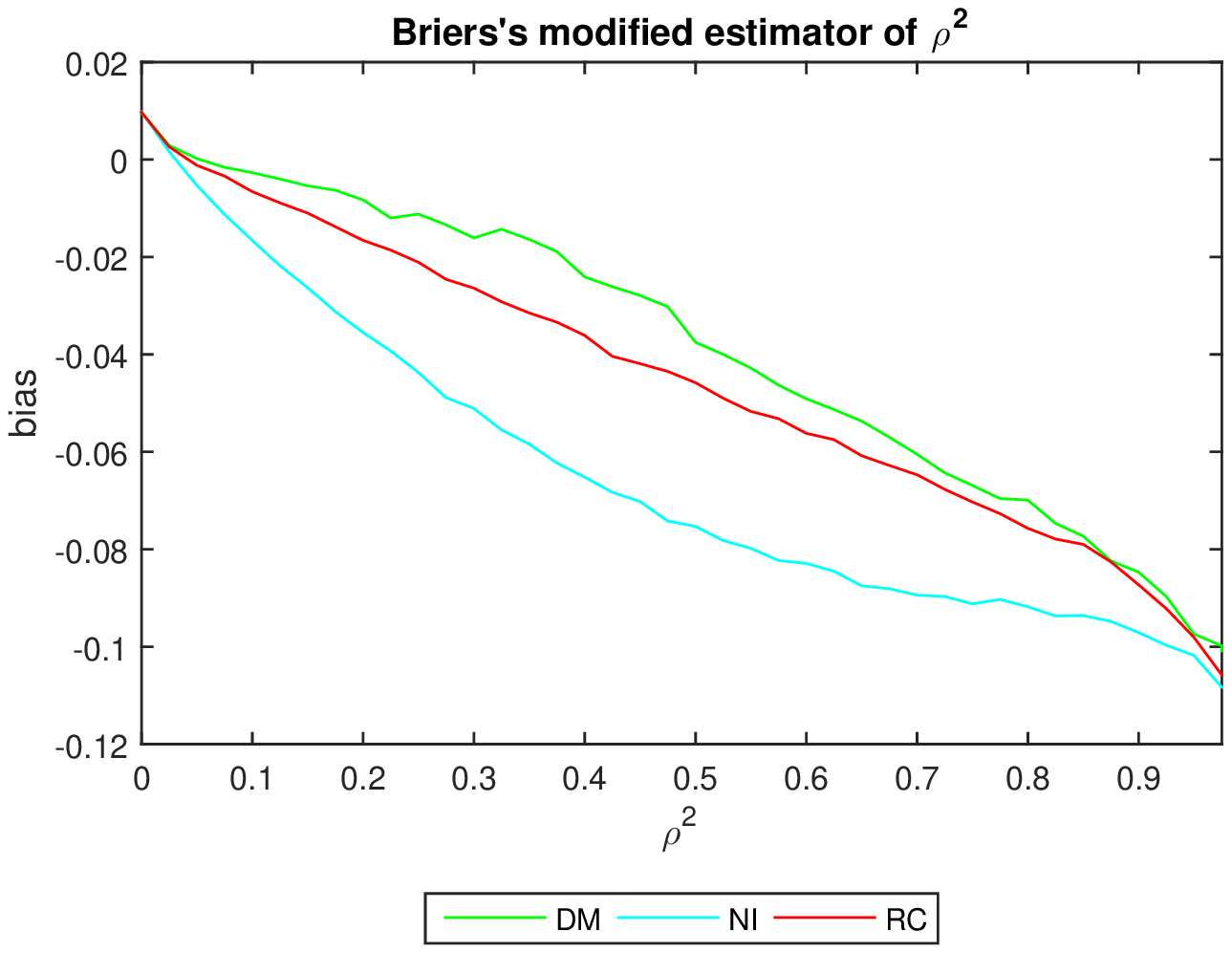}%
}
\\%
{\includegraphics[
height=2.6489in,
width=3.5284in
]%
{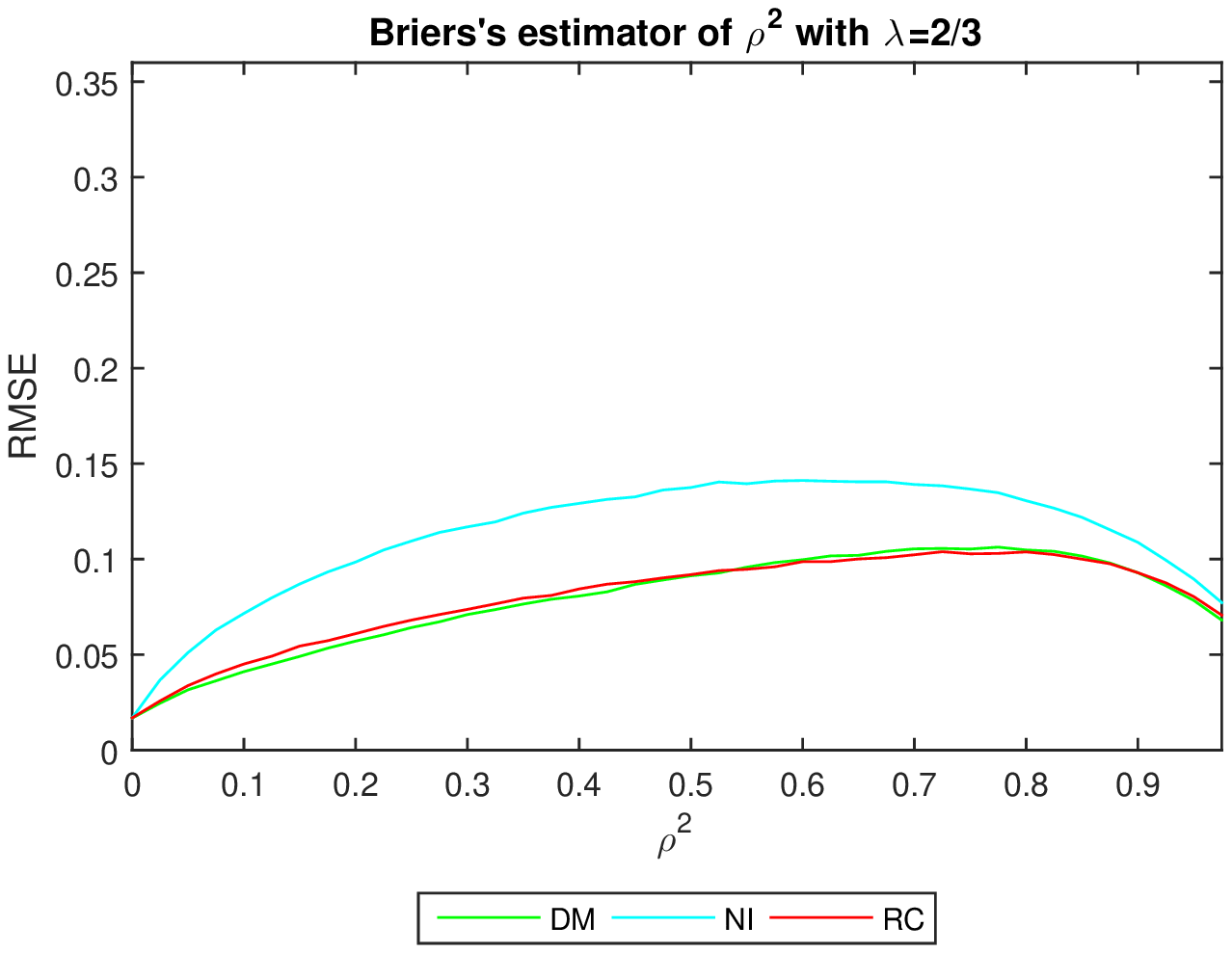}%
}
&
{\includegraphics[
height=2.6489in,
width=3.5284in
]%
{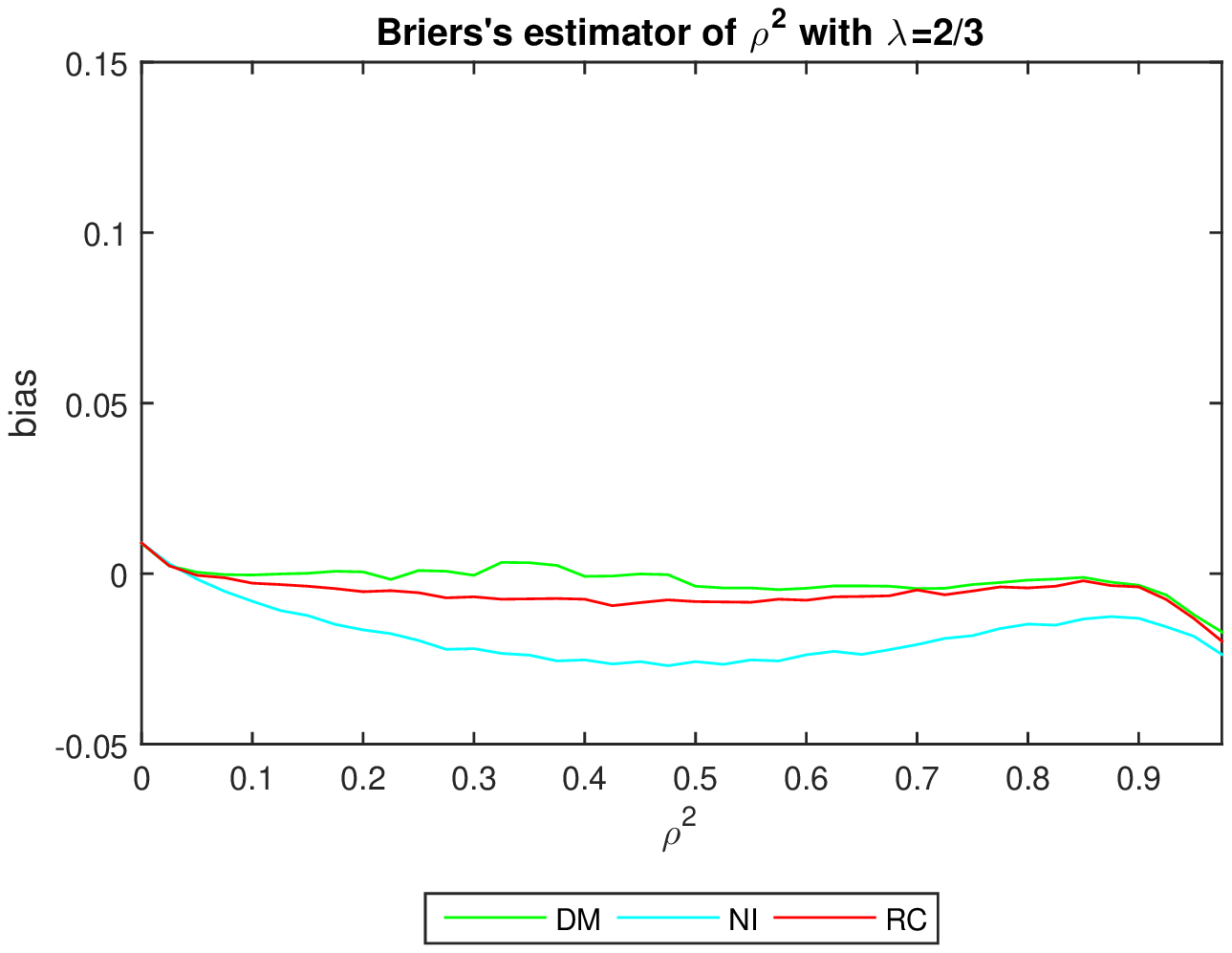}%
}
\end{tabular}
\caption{RMSE and bias for different estimators of $\rho^{2}$: $\widetilde{\rho}_{\widehat{n}^{\ast},N}^{2}$ (top), $\widetilde{\rho}_{\widehat{n}^{\ast},N,\bullet}^{2}$ (middle), $\widetilde{\rho}_{\widehat{n}^{\ast},N,\lambda}^{2}$ with $\lambda=2/3$ (bottom).\label{fig1}}%
\end{figure}%
%

\begin{figure}[htbp]  \centering
\begin{tabular}
[c]{c}%
{\includegraphics[
height=5.649in,
width=6.7654in
]%
{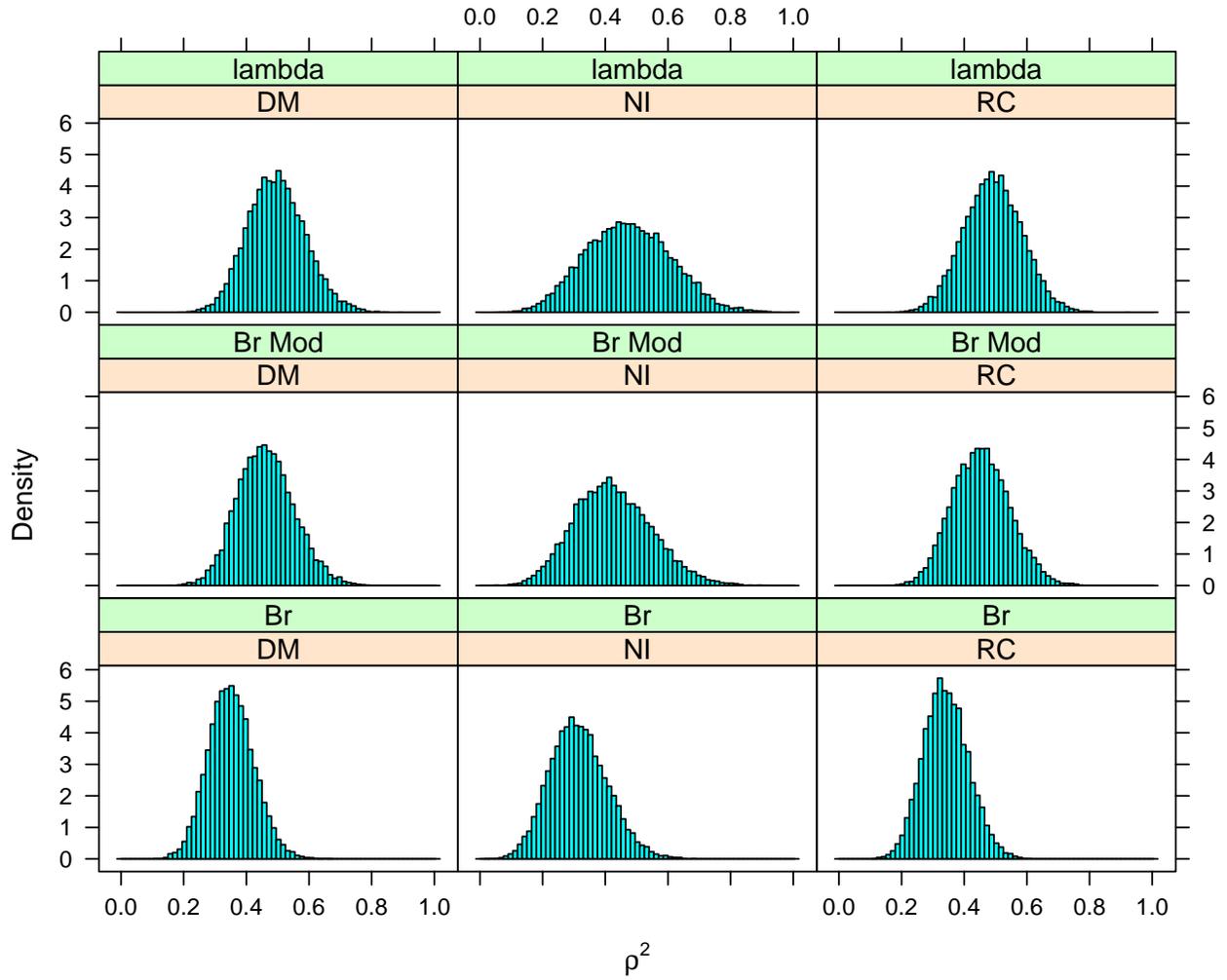}%
}
\end{tabular}
\caption{Density plots with estimates obtained from observations of three distributions, DM, NI, RC, when $\rho^{2}=0.5$: $\widetilde{\rho}_{\widehat{n}^{\ast},N}^{2}$ (below, Br), $\widetilde{\rho}_{\widehat{n}^{\ast},N,\bullet}^{2}$ (middle, Br Mod), $\widetilde{\rho}_{\widehat{n}^{\ast},N,\lambda}^{2}$ with $\lambda=2/3$ (top, lambda)\label{fig2}}%
\end{figure}%
%

\begin{figure}[htbp]  \centering
\begin{tabular}
[c]{cc}%
{\includegraphics[
height=2.6498in,
width=3.5284in
]%
{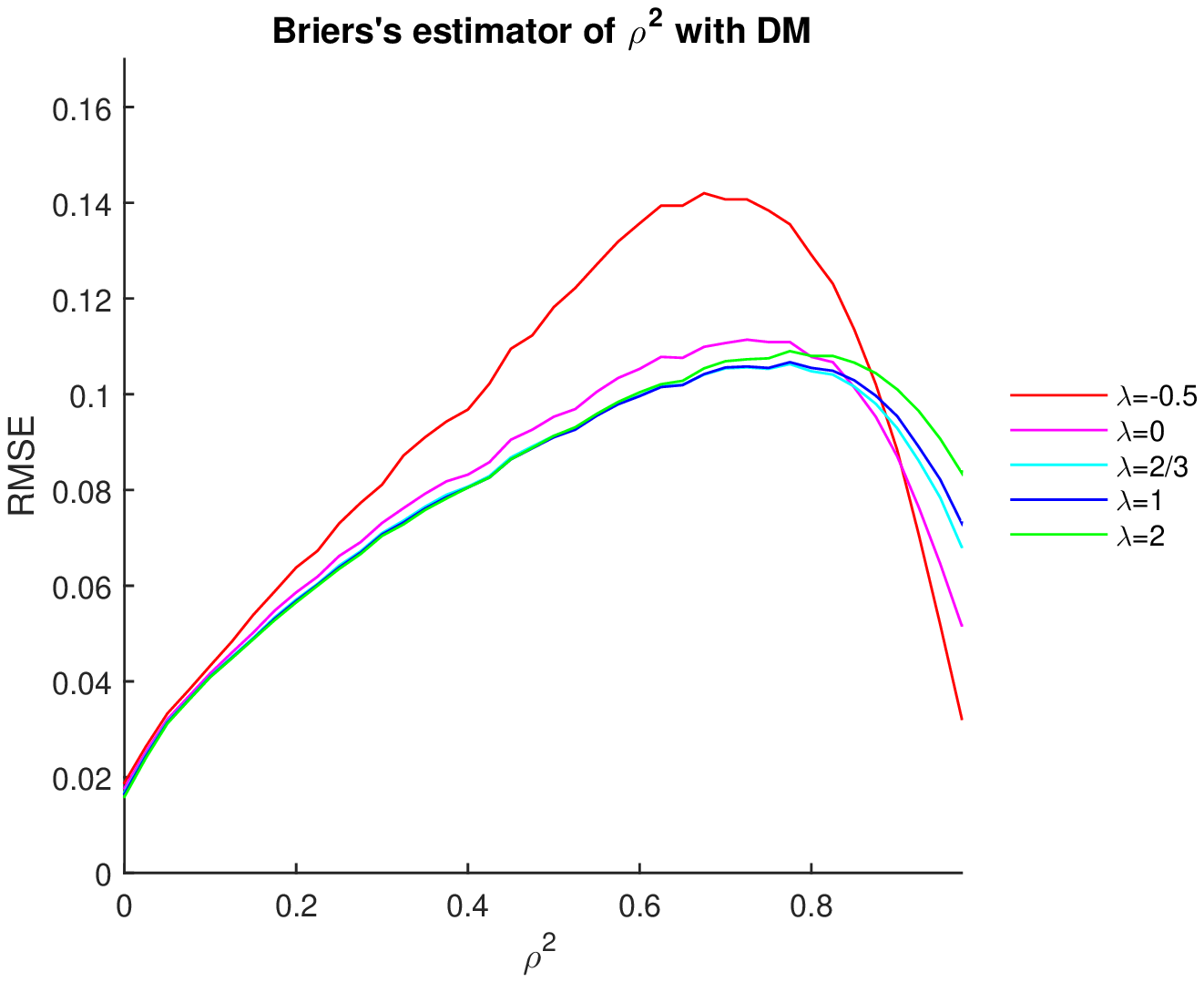}%
}
&
{\includegraphics[
height=2.6498in,
width=3.5284in
]%
{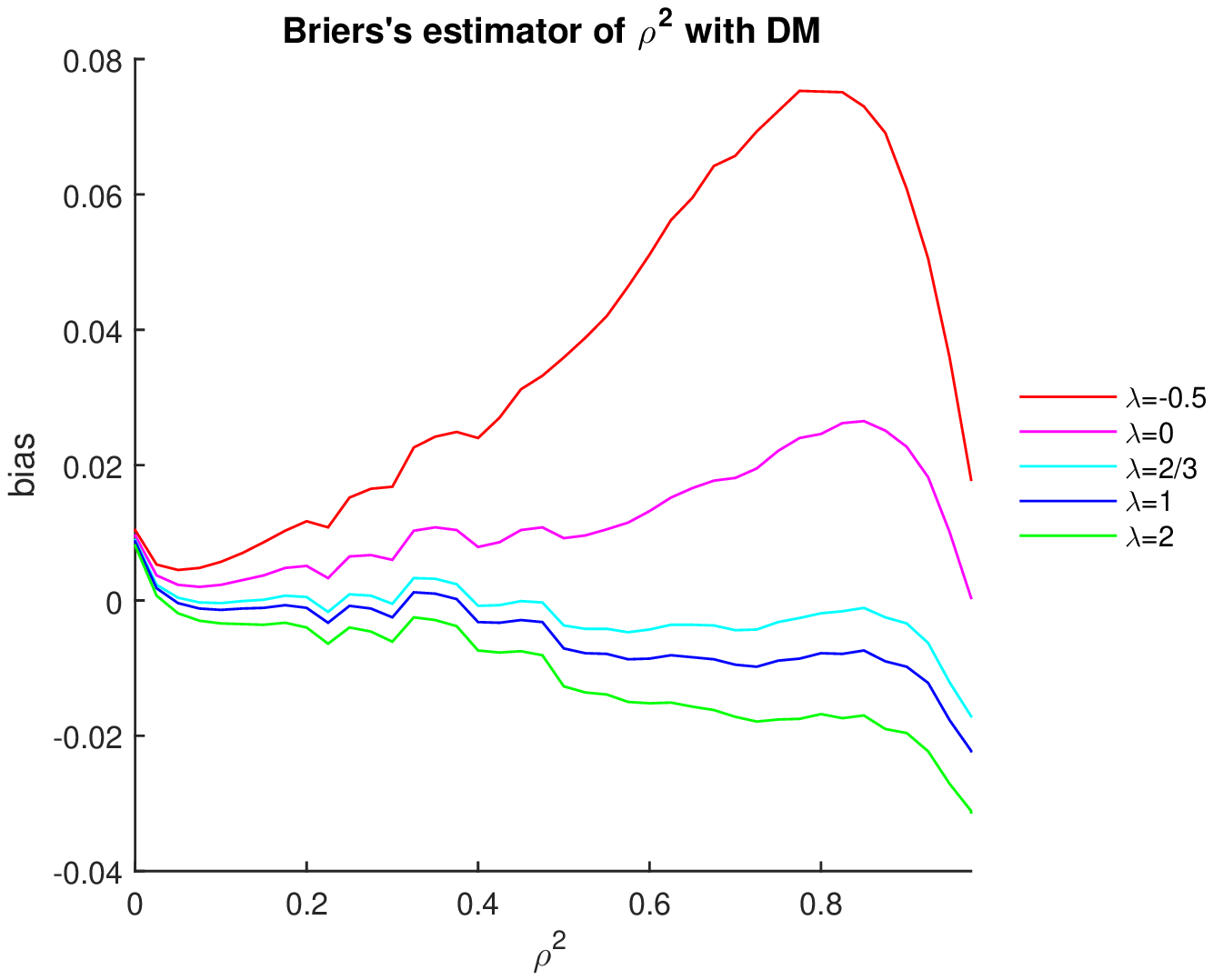}%
}
\\%
{\includegraphics[
height=2.6498in,
width=3.5284in
]%
{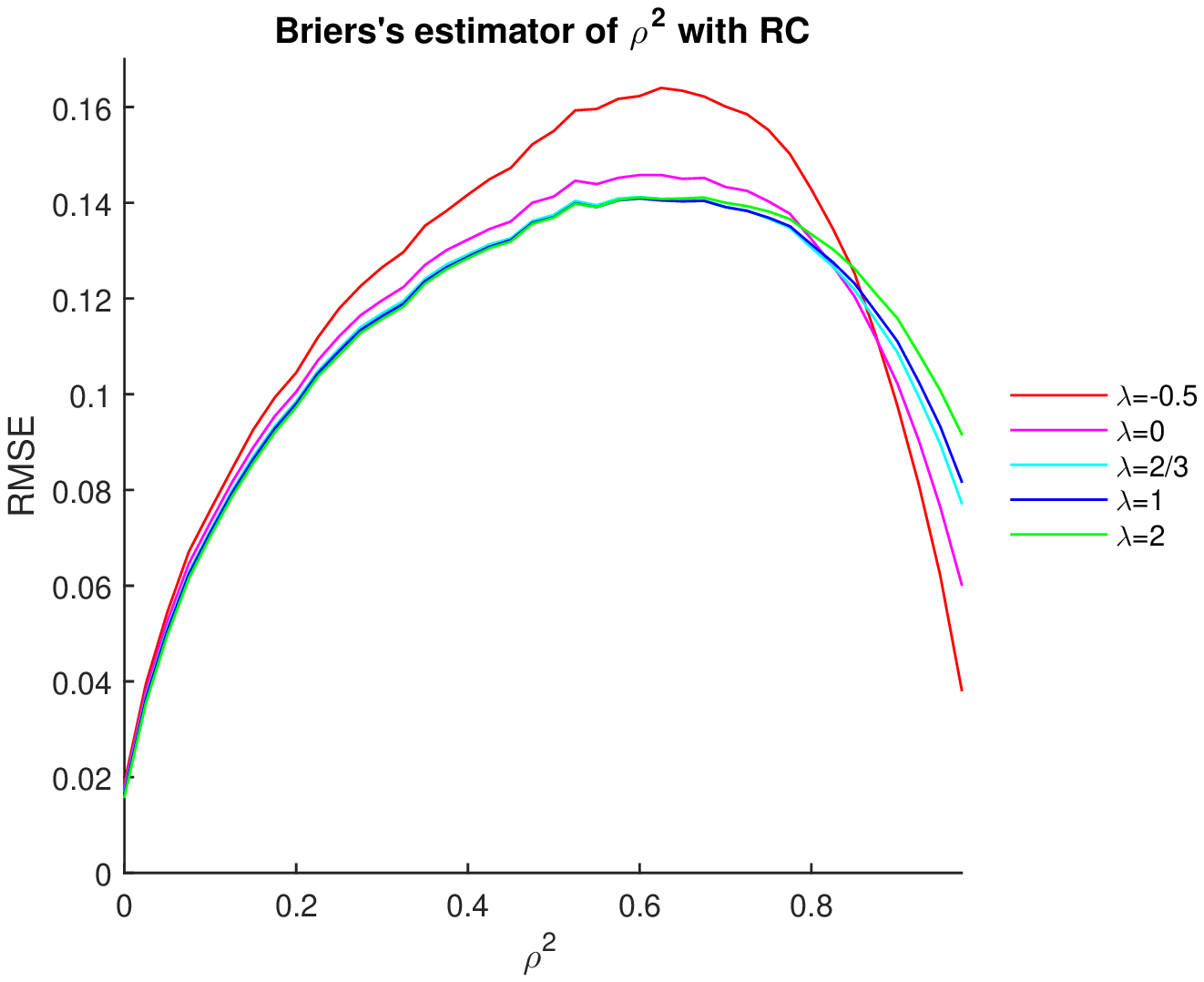}%
}
&
{\includegraphics[
height=2.6498in,
width=3.5284in
]%
{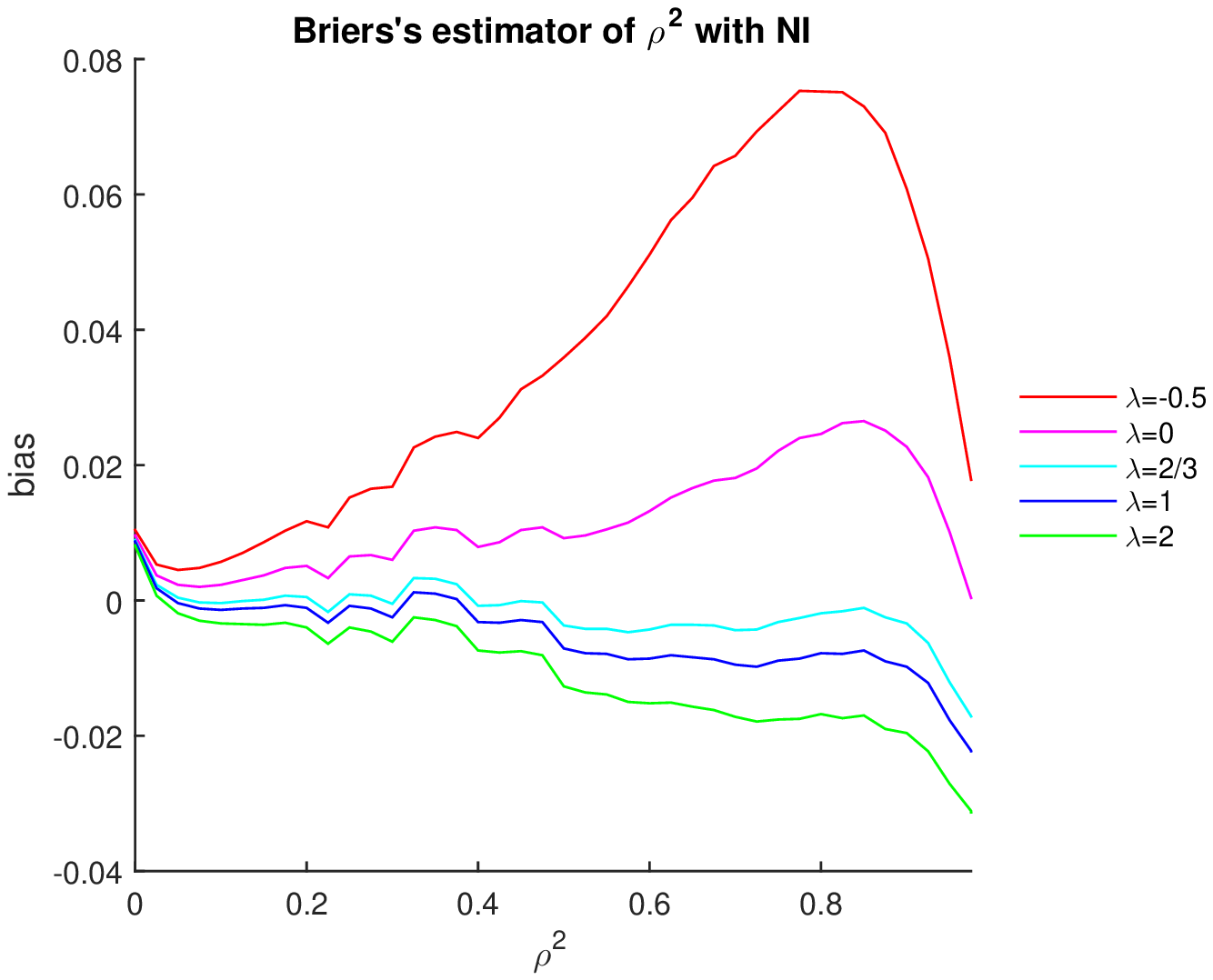}%
}
\\%
{\includegraphics[
height=2.6498in,
width=3.5284in
]%
{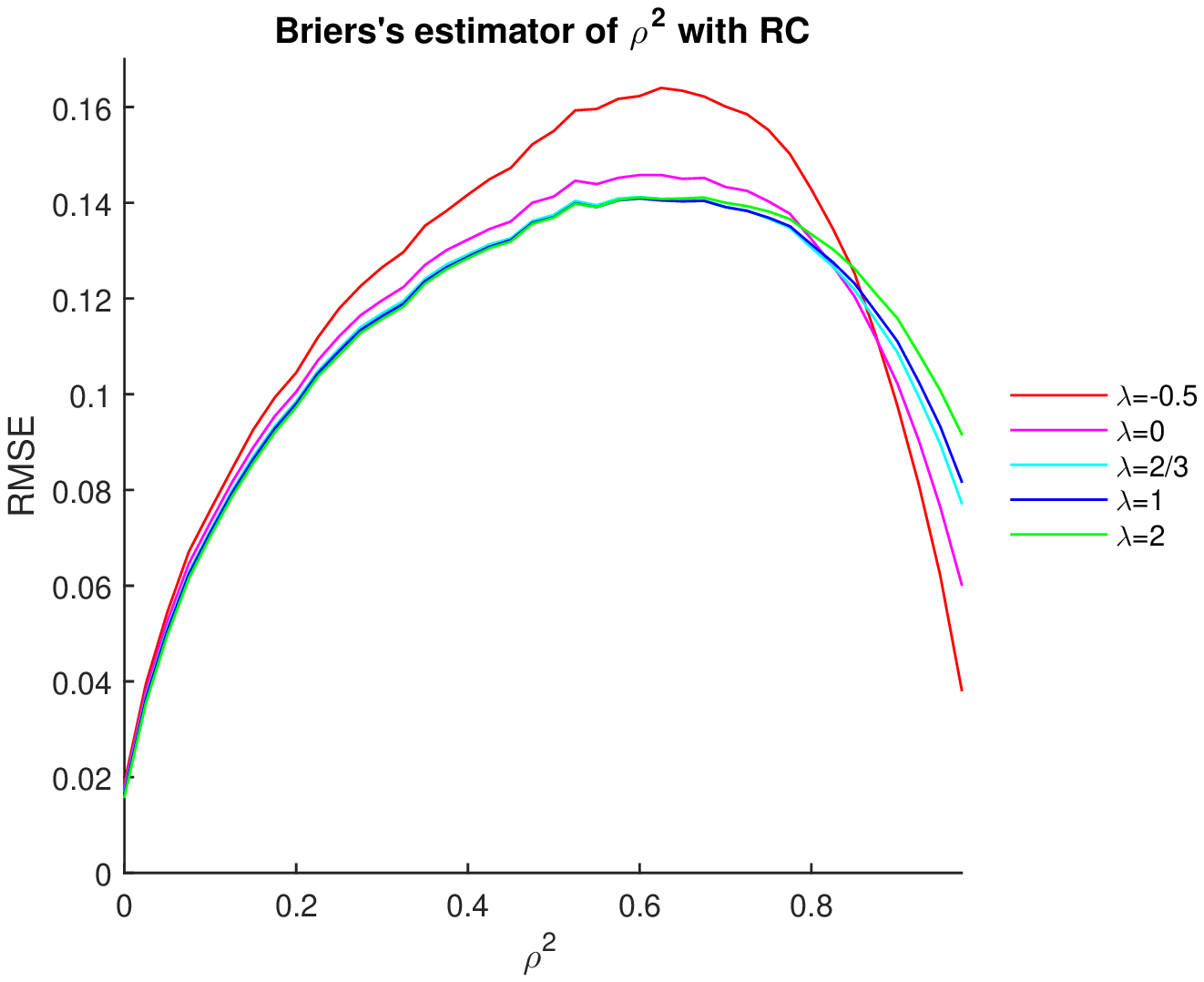}%
}
&
{\includegraphics[
height=2.6498in,
width=3.5284in
]%
{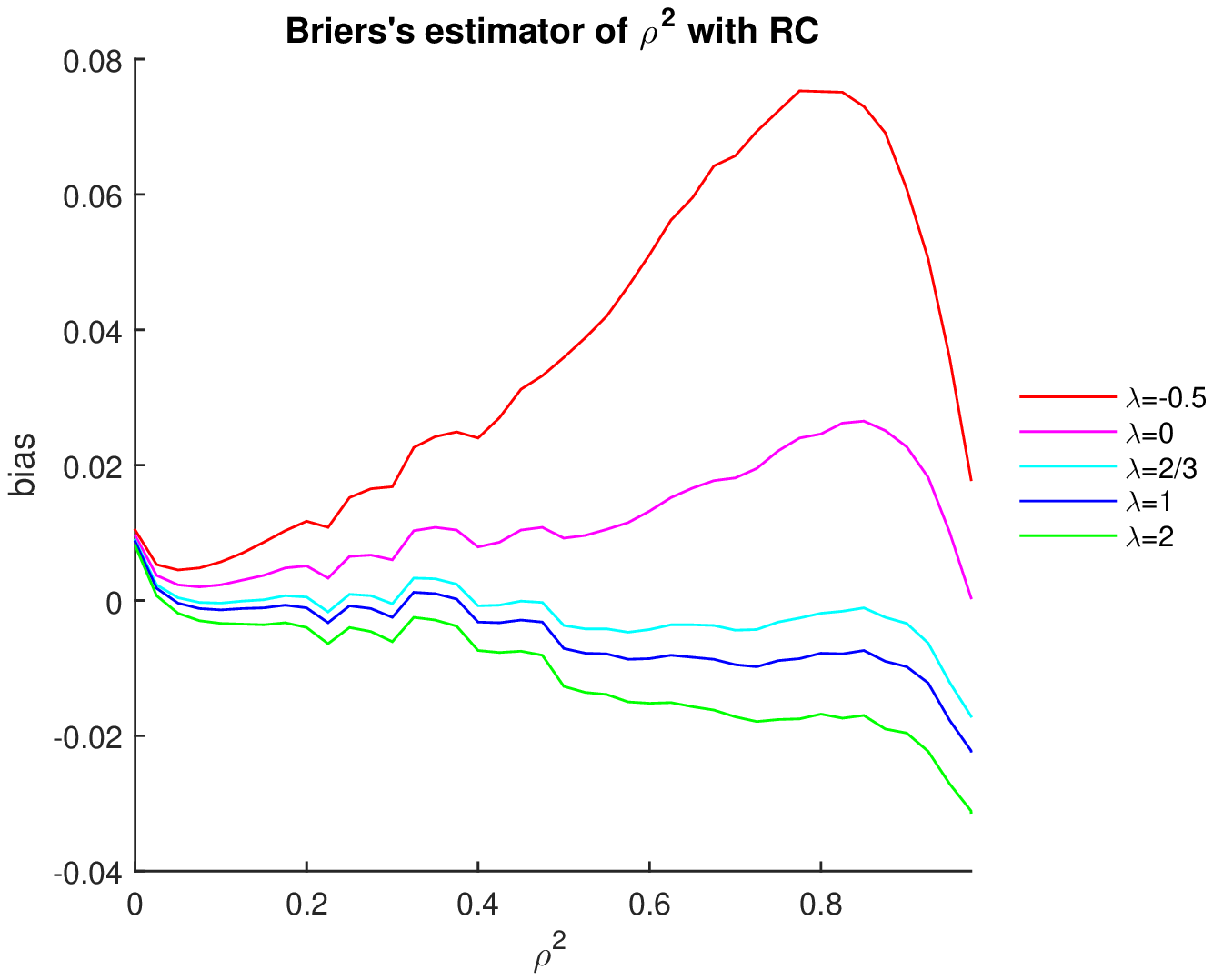}%
}
\end{tabular}
\caption{RMSE and bias of $\widetilde{\rho}_{\widehat{n}^{\ast},N,\lambda}^{2}$ with different values of $\lambda$ for DM (top), NI (middle), RC (bottom) distributions.\label{fig3}}%
\end{figure}%
\newpage

\subsection{Study on FBI data (Weir and Hill, 2002)}

Based on the FBI data study (Section \ref{FBI}), with theoretical values
obtained from the estimates of the probability vectors given in Table \ref{t3}
for loci D3S1358, vWA, FGA and D8S1179, the clustered multinomial
distributions are studied under no underlying assumption (saturated log-linear
model). Through Monte Carlo simulations, the \textrm{RMSE} and bias of the new
estimator proposed in Section \ref{Sec:new2} ($\widehat{\rho}^{2}$) and the
Weir and Hill estimator ($\overline{\rho}^{2}$) are compared in Figures
\ref{fig4.1}, \ref{fig4.2}, \ref{fig4.3}, \ref{fig4.4}, focused respectively
on the loci D3S1358, vWA, FGA and D8S1179. Since these kind of data have
usually small values of the intracluster correlation coefficient, $\rho^{2}$,
the study is\ only focussed on $\rho^{2}\in(0,0.1)$. Except for the RC
distribution, the bias of $\widehat{\rho}^{2}$ tends to be greater than the
bias of $\overline{\rho}^{2}$, however, the \textrm{RMSE}\ of $\widehat{\rho
}^{2}$ tends to be smaller than the \textrm{RMSE} of $\overline{\rho}^{2}$.
This weakness of the bias could be improved in case of being able to identify
an apropriate log-linear model.%

\begin{figure}[htbp]  \centering
\begin{tabular}
[c]{cc}%
{\includegraphics[
height=2.6498in,
width=3.5284in
]%
{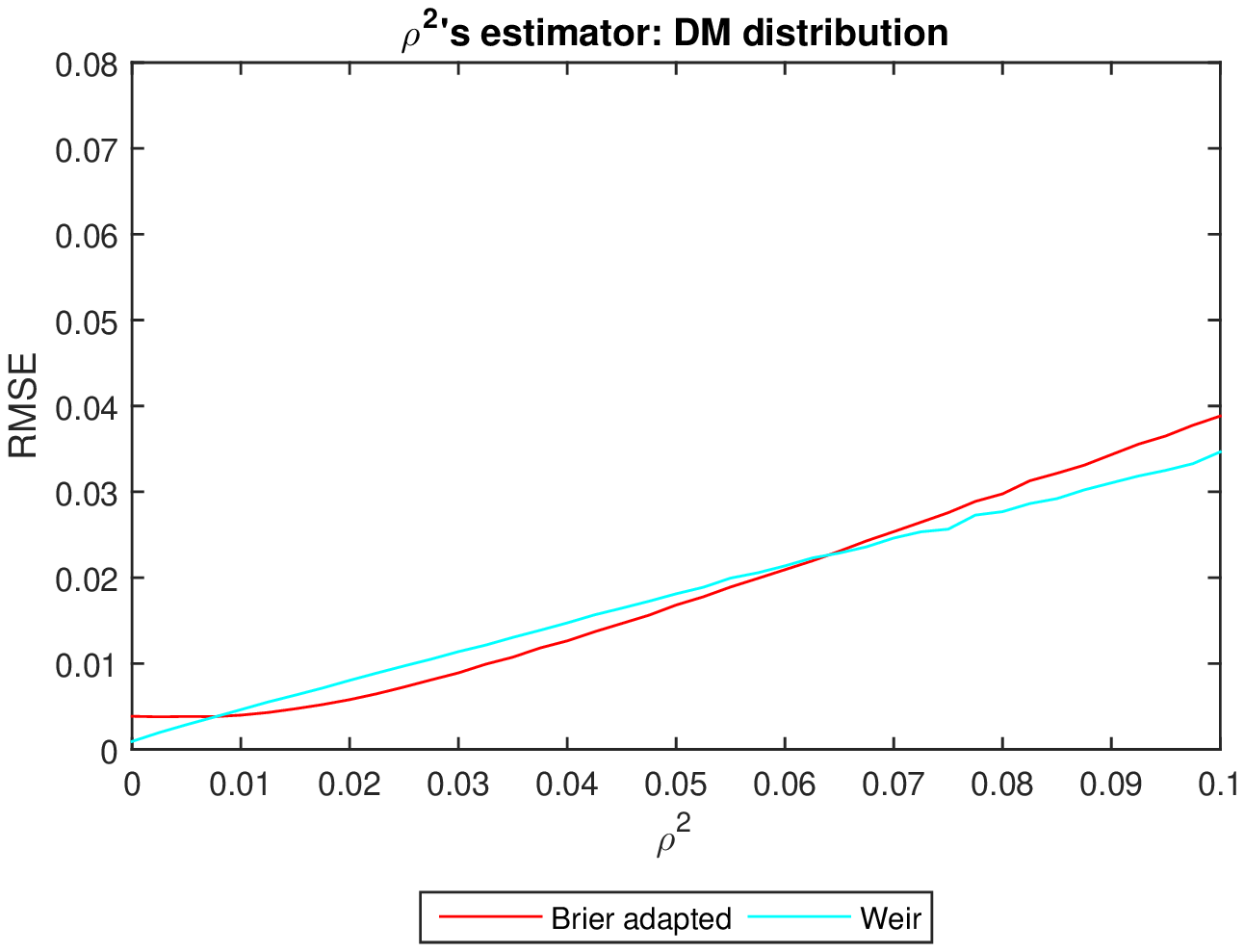}%
}
&
{\includegraphics[
height=2.6498in,
width=3.5284in
]%
{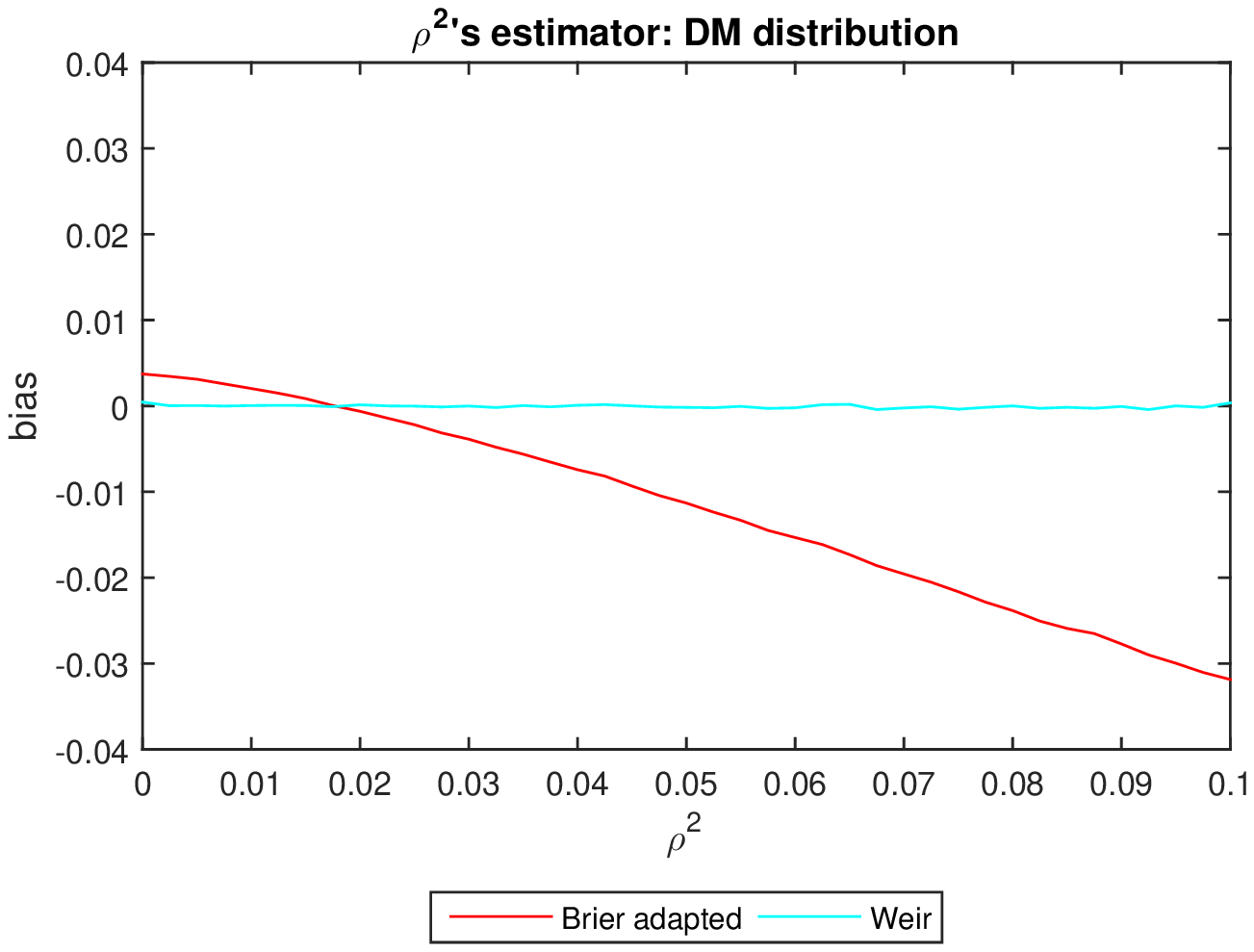}%
}
\\%
{\includegraphics[
height=2.6498in,
width=3.5284in
]%
{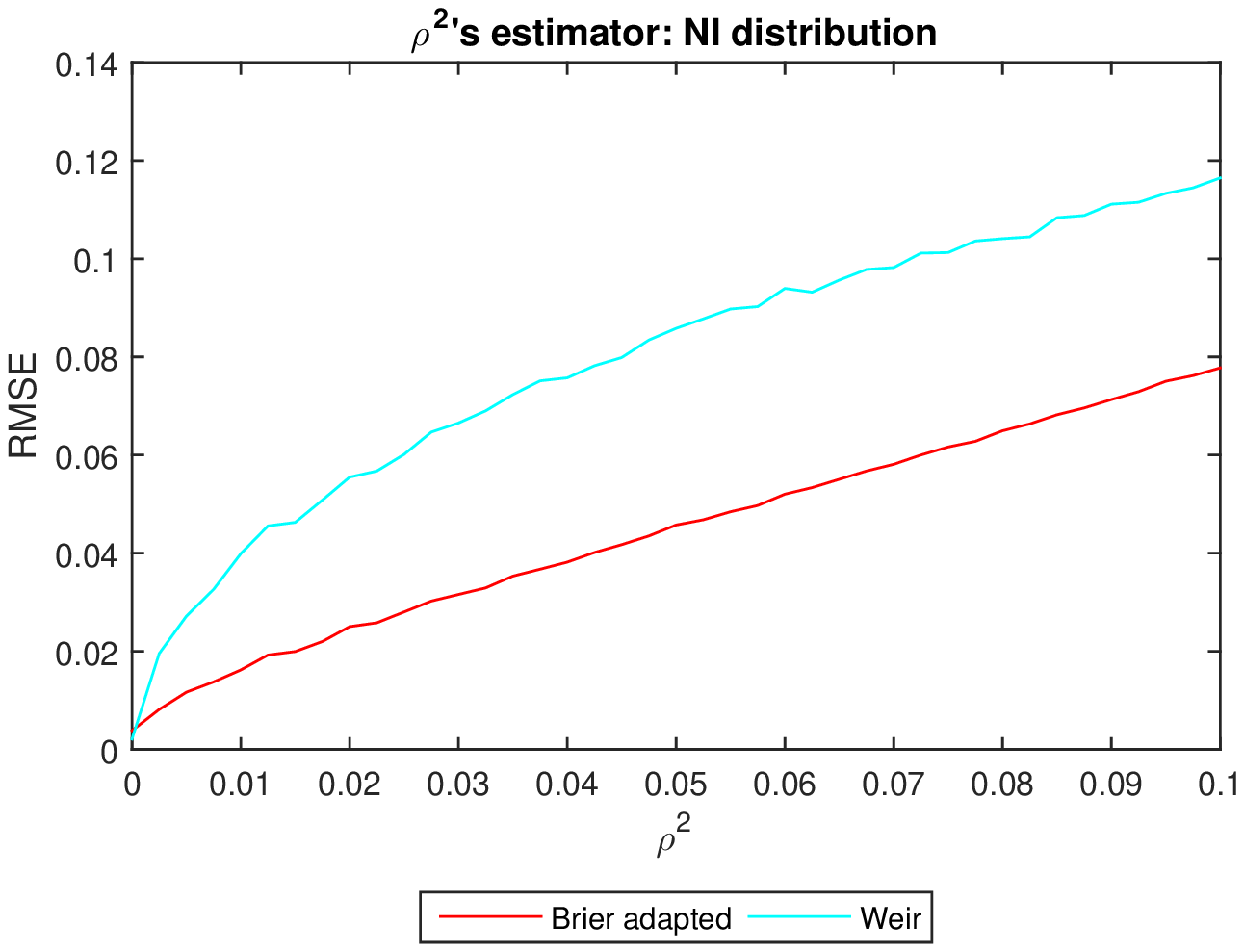}%
}
&
{\includegraphics[
height=2.6498in,
width=3.5284in
]%
{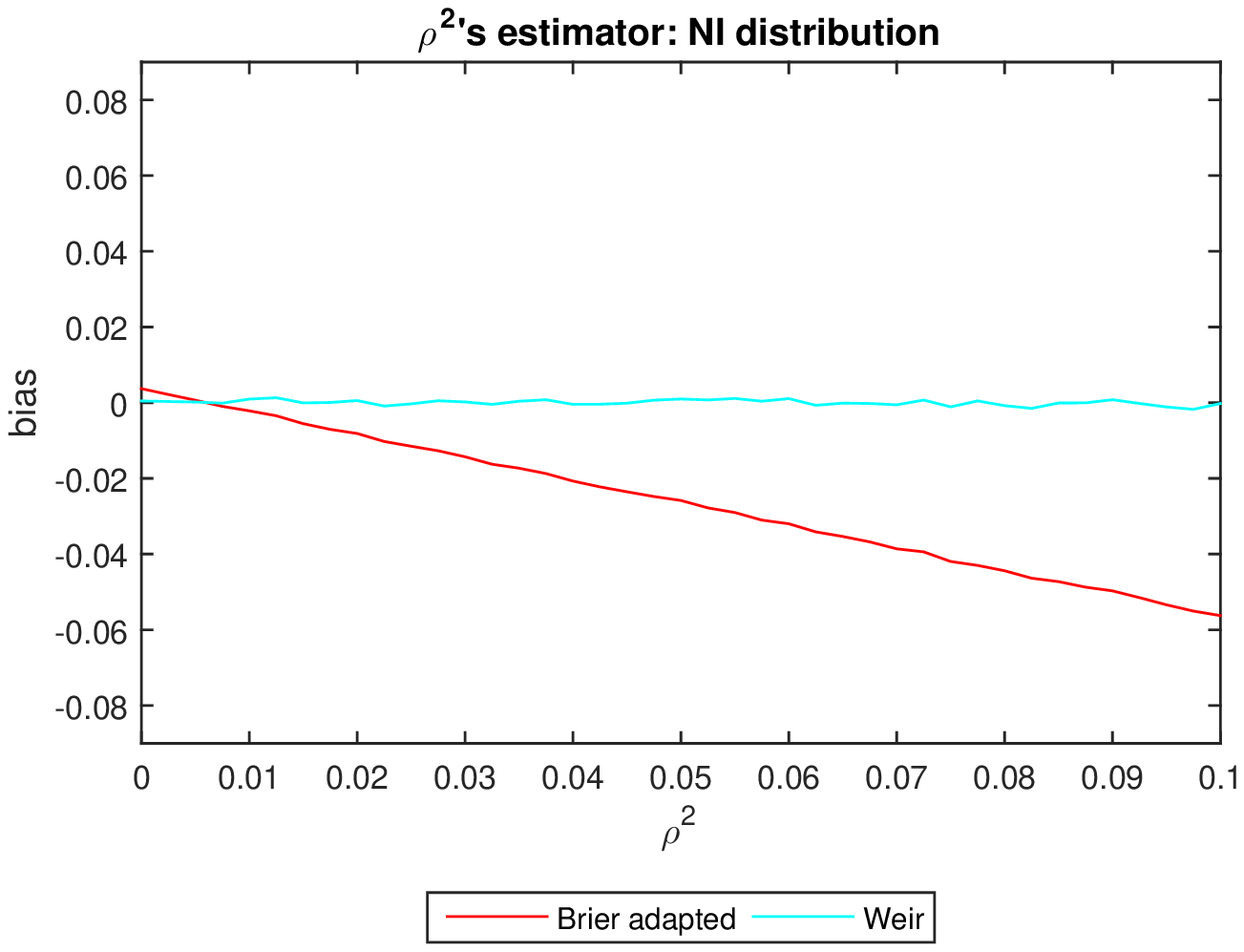}%
}
\\%
{\includegraphics[
height=2.6498in,
width=3.5284in
]%
{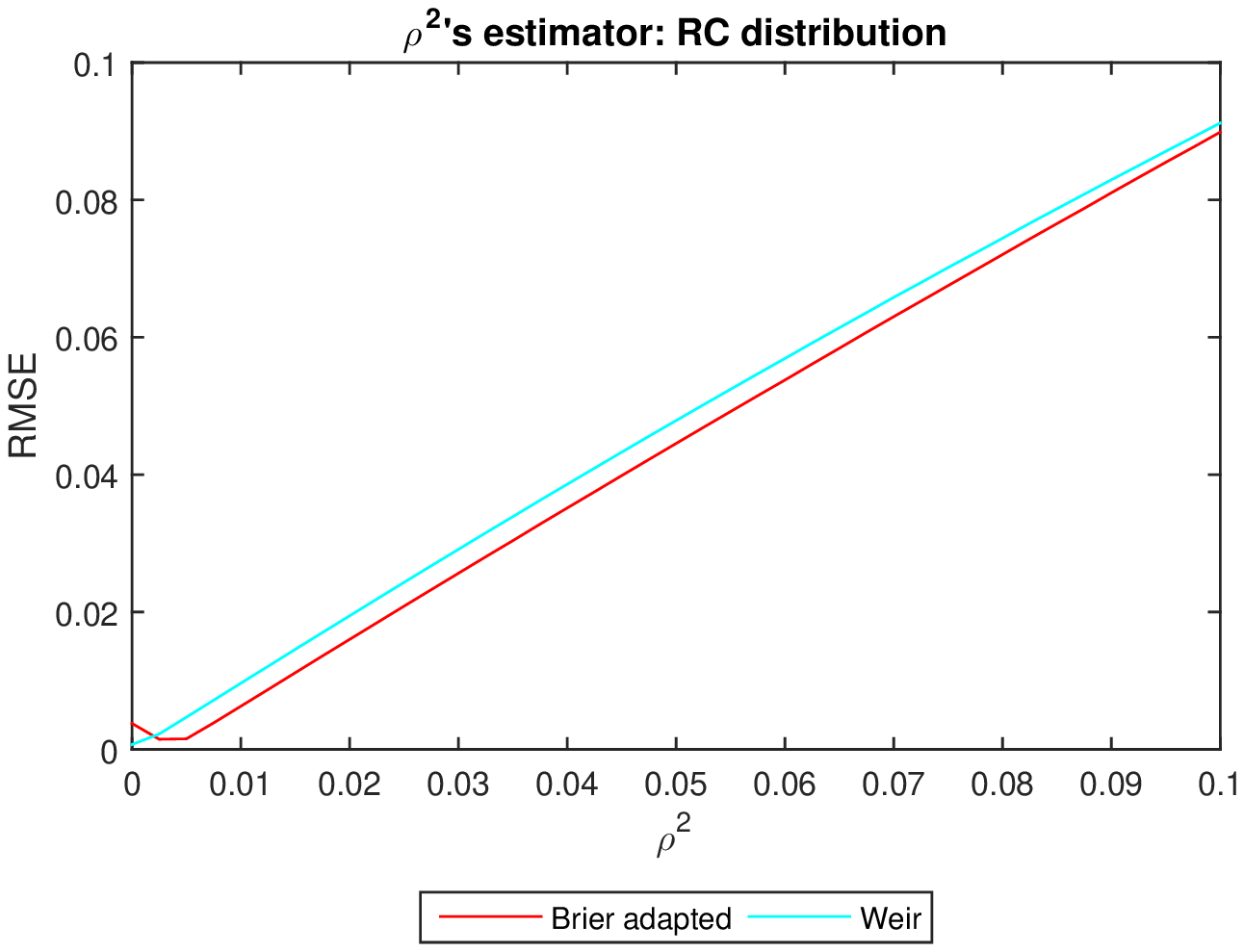}%
}
&
{\includegraphics[
height=2.6498in,
width=3.5284in
]%
{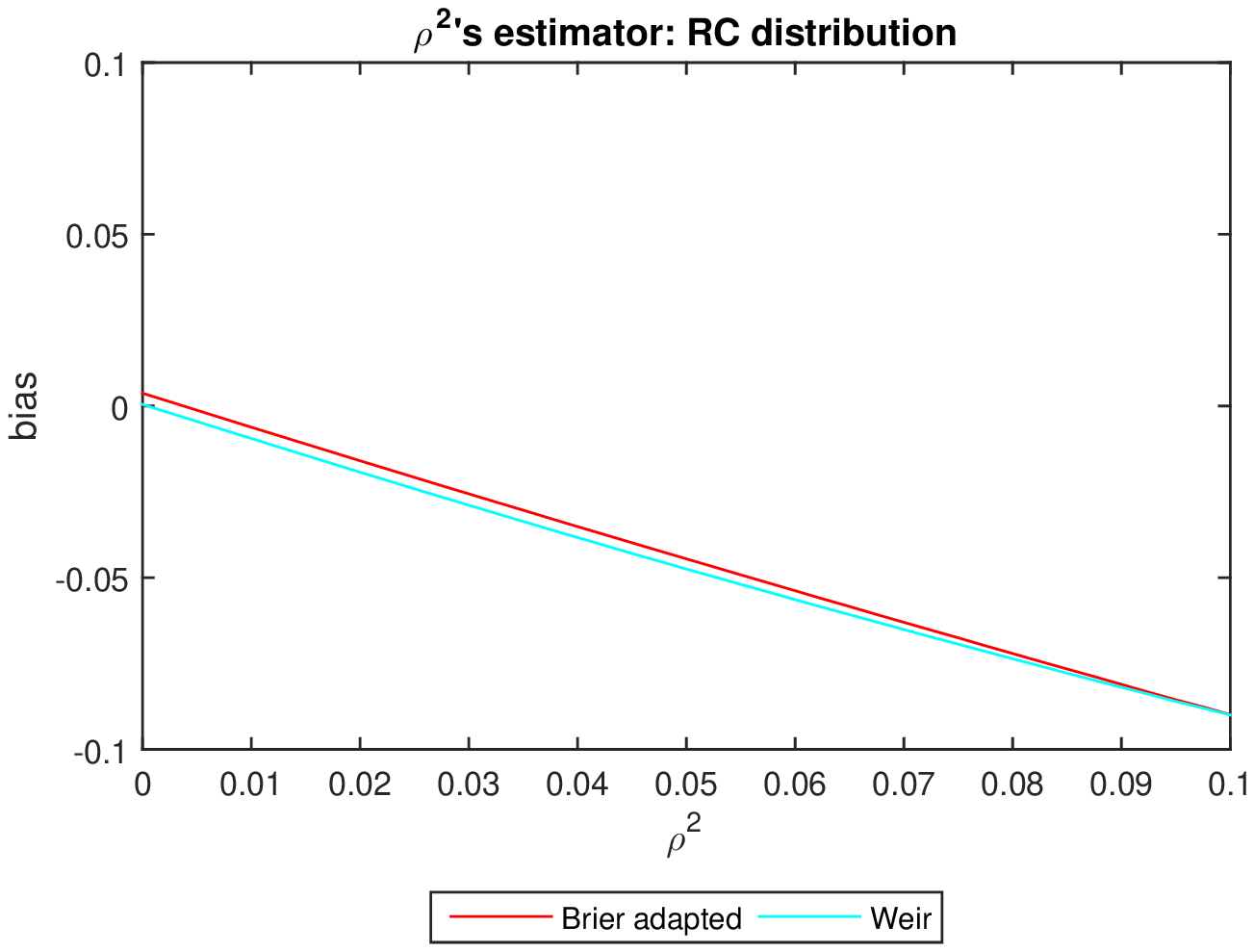}%
}
\end{tabular}
\caption{RMSE and bias of the Brier adapted $\widehat{\rho}^{2}$ and Weir's $\overline{\rho}^{2}$ for small values of $\rho^2$ when DM, NI and RC distributions are considered and the theoretical probabilities are equal to the estimates  for locus D3S1358.\label{fig4.1}}%
\end{figure}%
%

\begin{figure}[htbp]  \centering
\begin{tabular}
[c]{cc}%
{\includegraphics[
height=2.6498in,
width=3.5284in
]%
{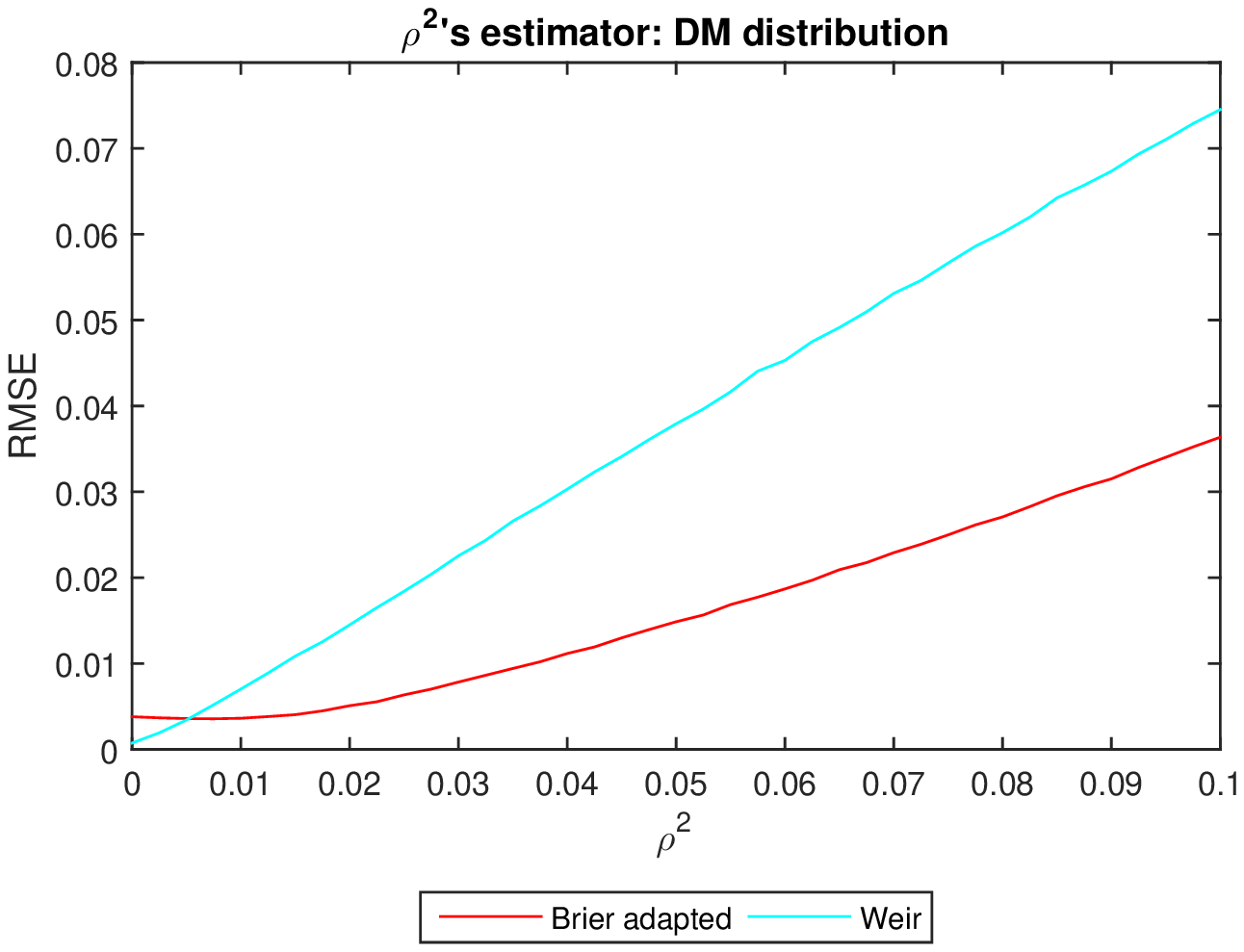}%
}
&
{\includegraphics[
height=2.6498in,
width=3.5284in
]%
{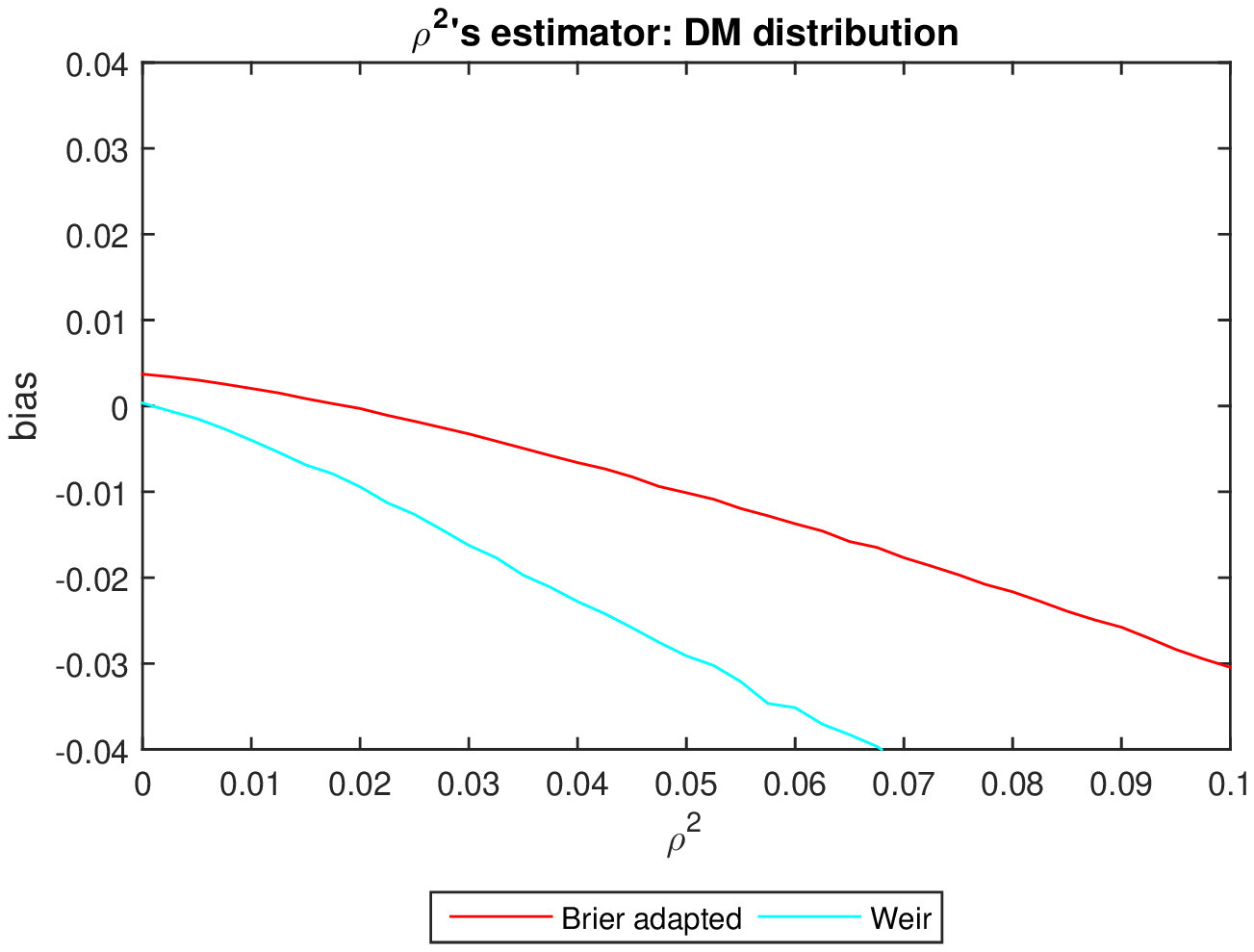}%
}
\\%
{\includegraphics[
height=2.6498in,
width=3.5284in
]%
{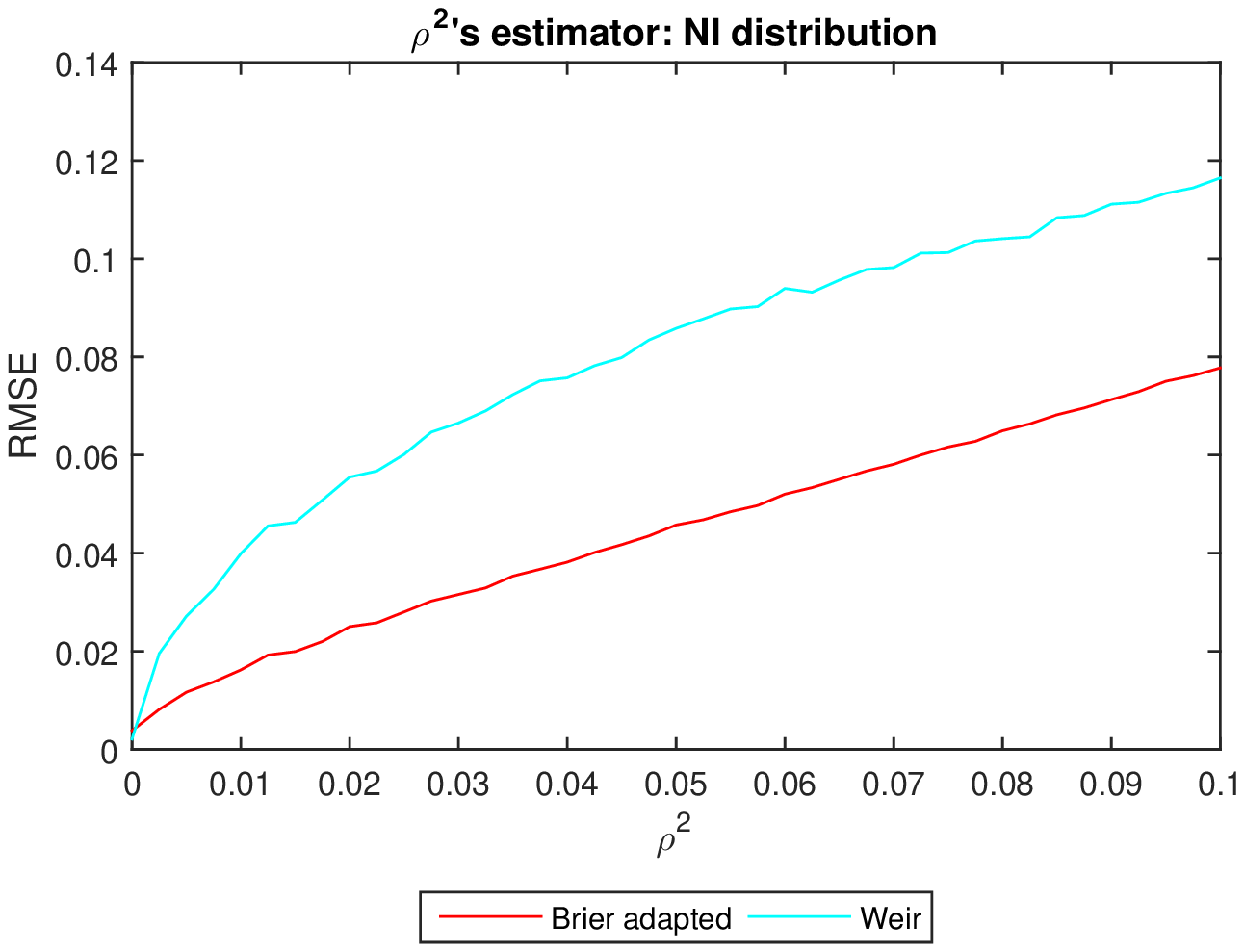}%
}
&
{\includegraphics[
height=2.6498in,
width=3.5284in
]%
{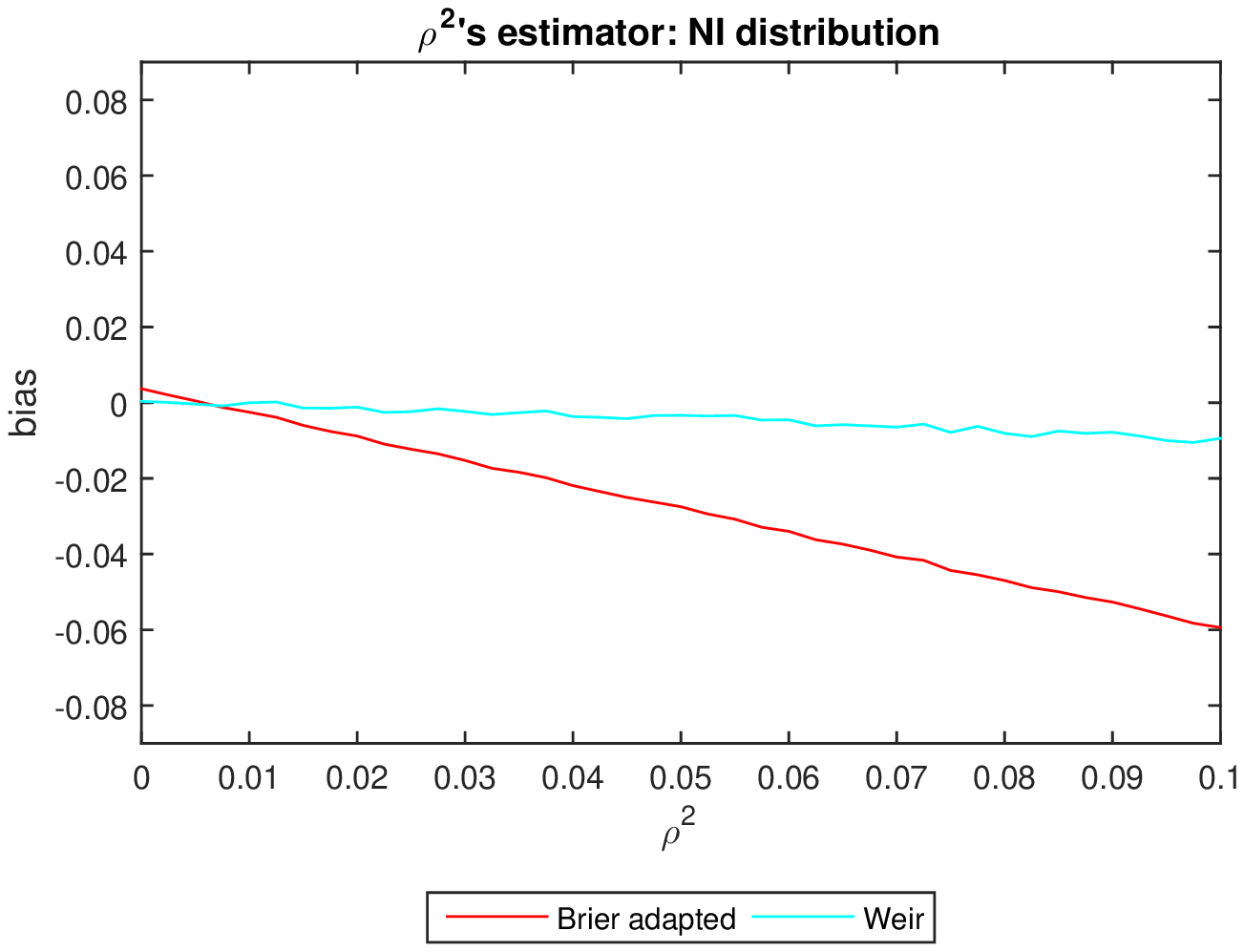}%
}
\\%
{\includegraphics[
height=2.6498in,
width=3.5284in
]%
{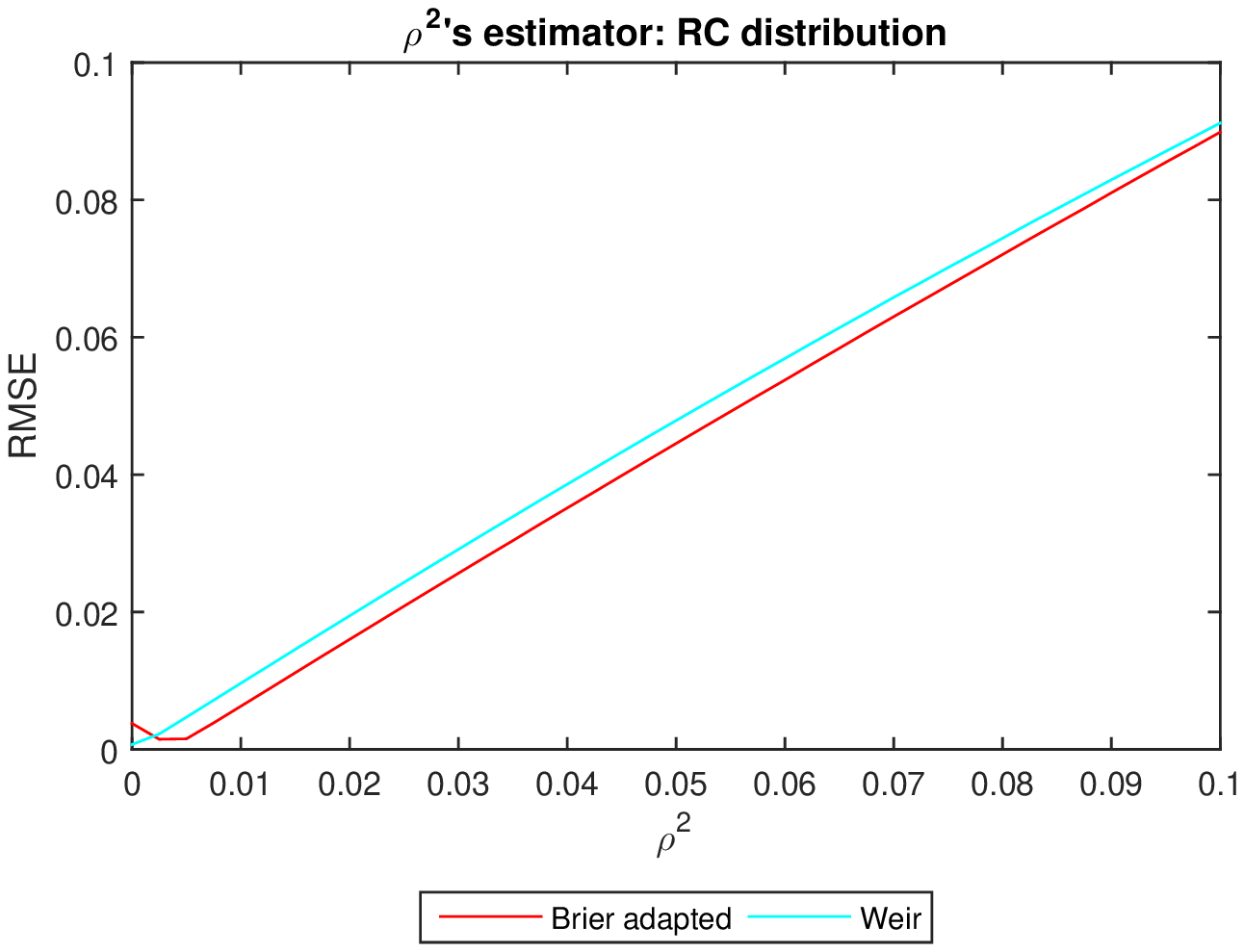}%
}
&
{\includegraphics[
height=2.6498in,
width=3.5284in
]%
{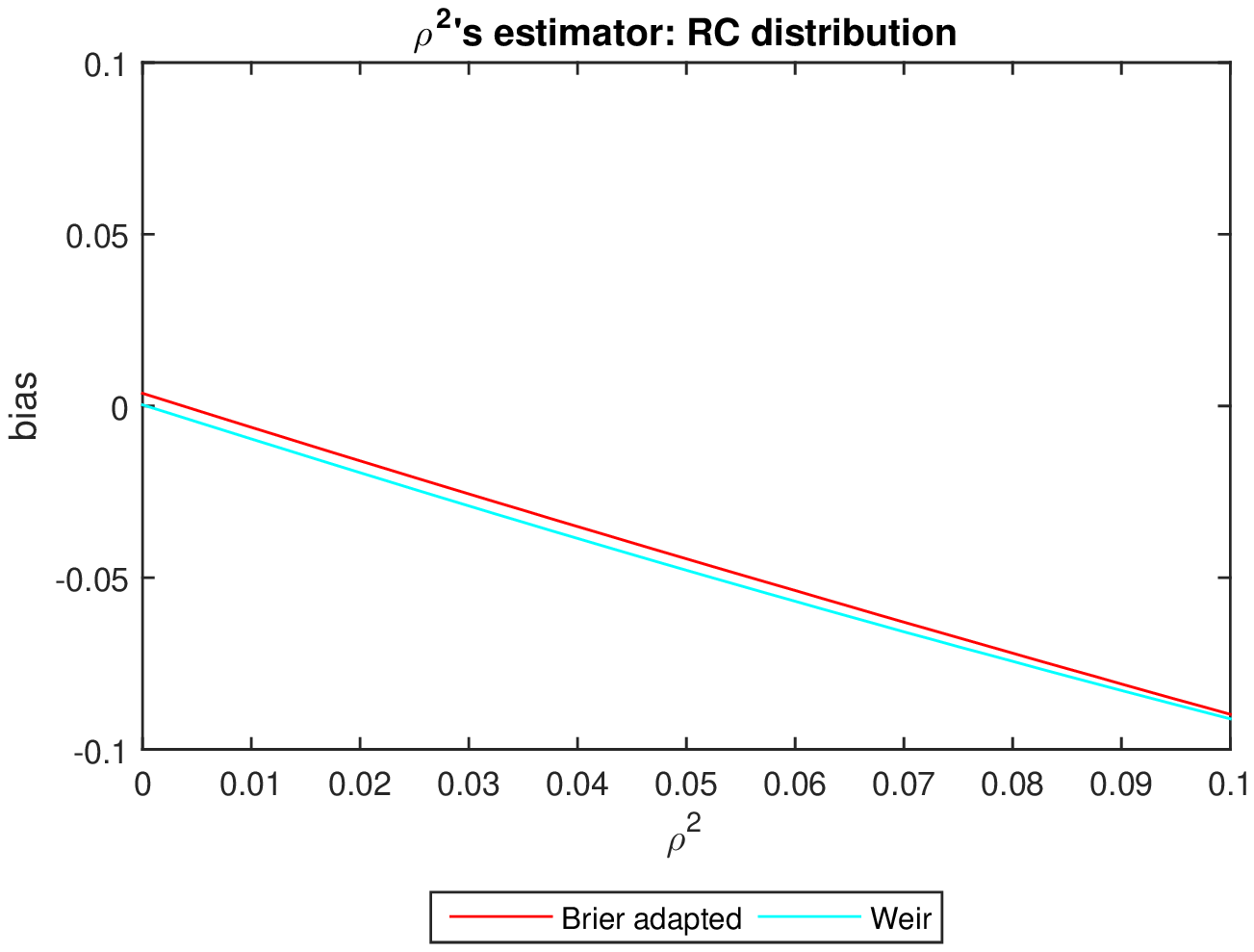}%
}
\end{tabular}
\caption{RMSE and bias of the Brier adapted $\widehat{\rho}^{2}$ and Weir's $\overline{\rho}^{2}$ for small values of $\rho^2$ when DM, NI and RC distributions are considered and the theoretical probabilities are equal to the estimates  for locus vWA.\label{fig4.2}}%
\end{figure}%
%

\begin{figure}[htbp]  \centering
\begin{tabular}
[c]{cc}%
{\includegraphics[
height=2.6498in,
width=3.5284in
]%
{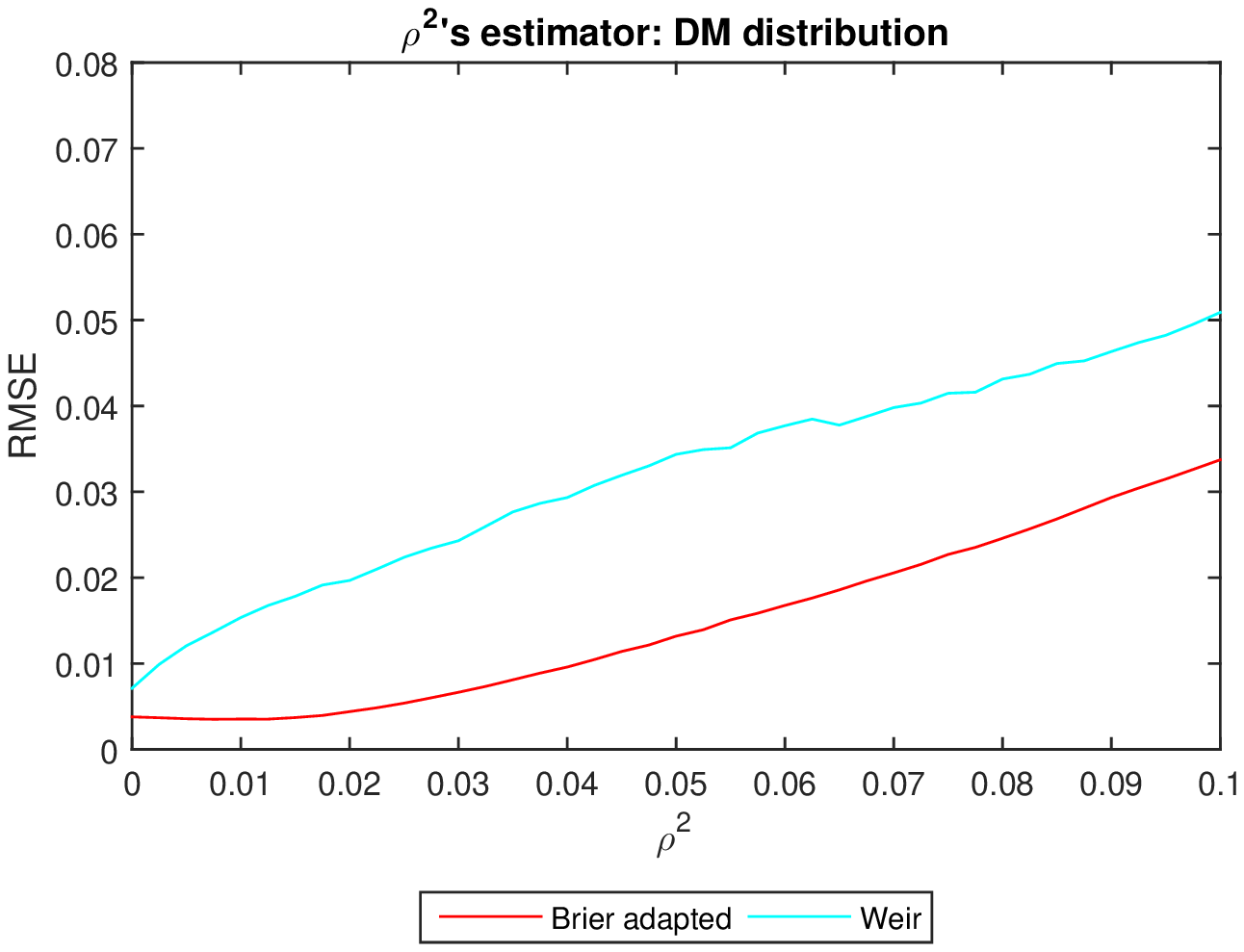}%
}
&
{\includegraphics[
height=2.6498in,
width=3.5284in
]%
{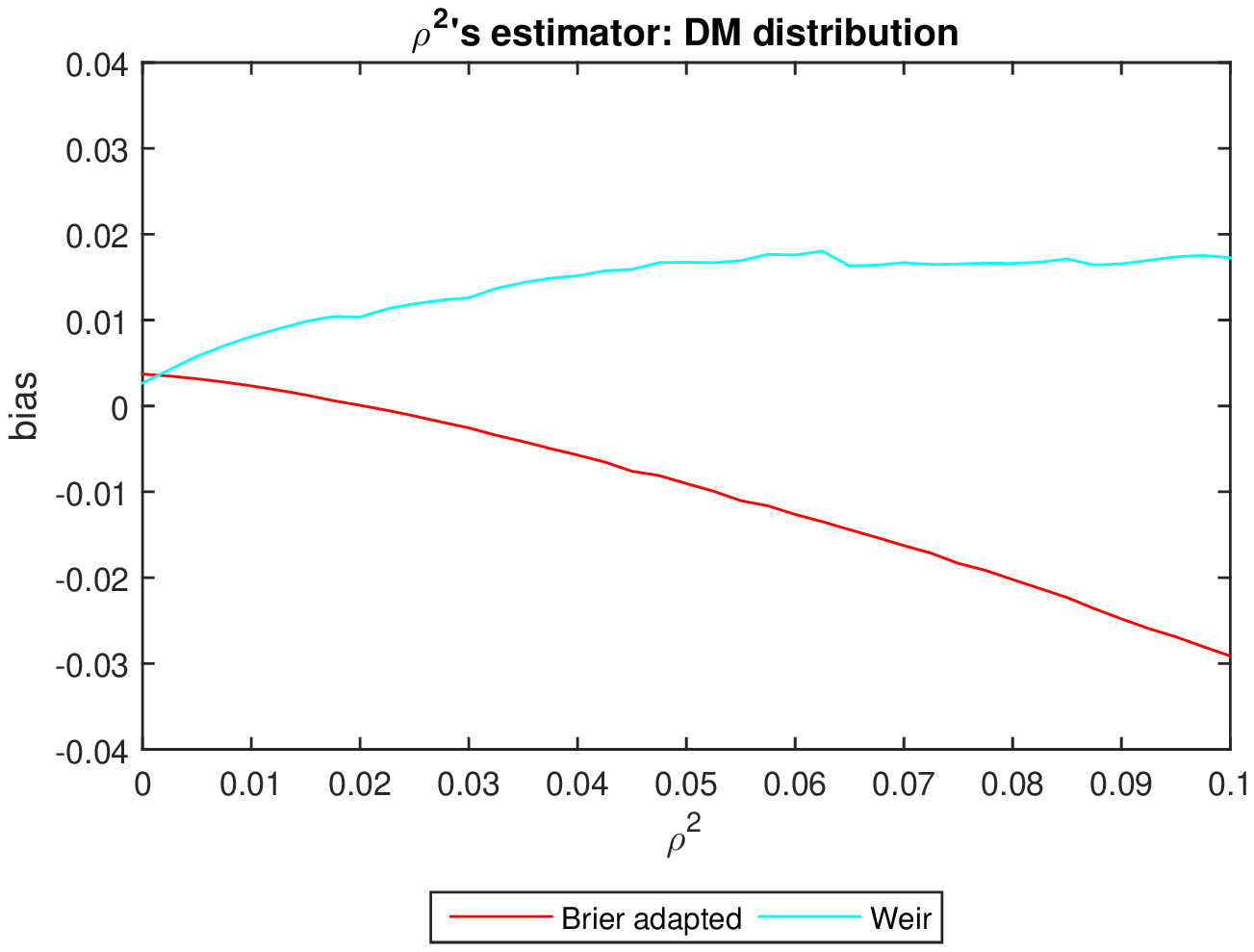}%
}
\\%
{\includegraphics[
height=2.6498in,
width=3.5284in
]%
{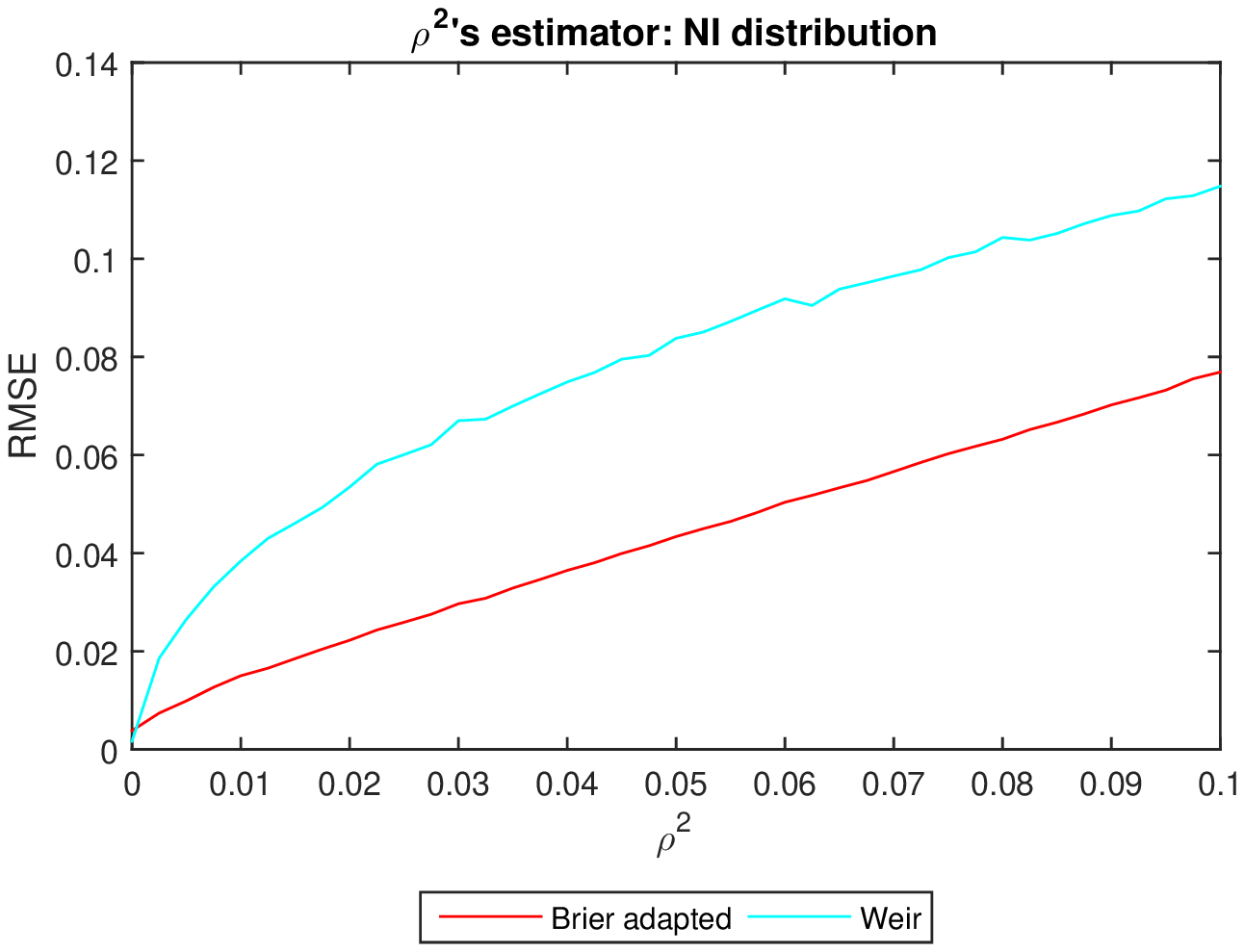}%
}
&
{\includegraphics[
height=2.6498in,
width=3.5284in
]%
{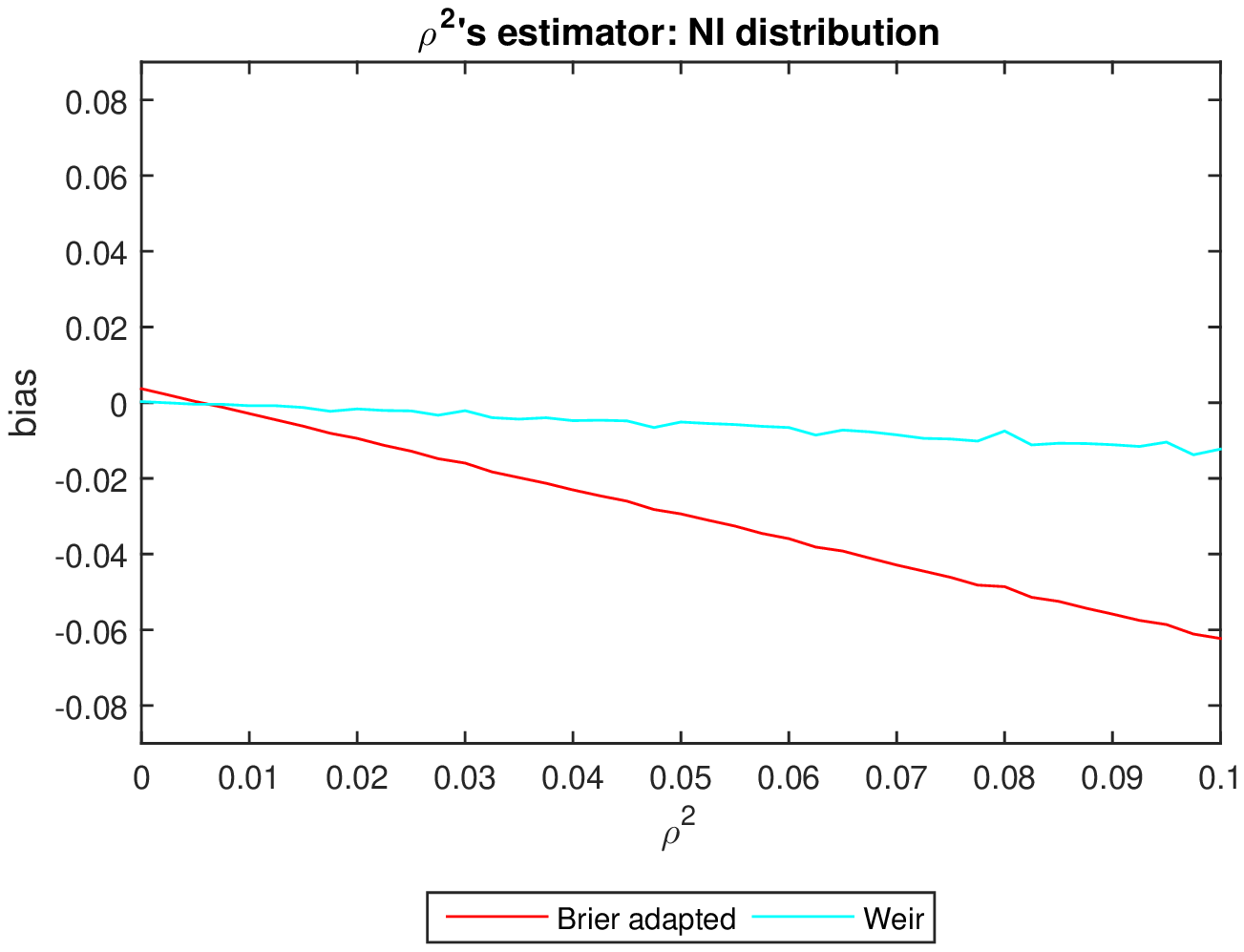}%
}
\\%
{\includegraphics[
height=2.6567in,
width=3.5284in
]%
{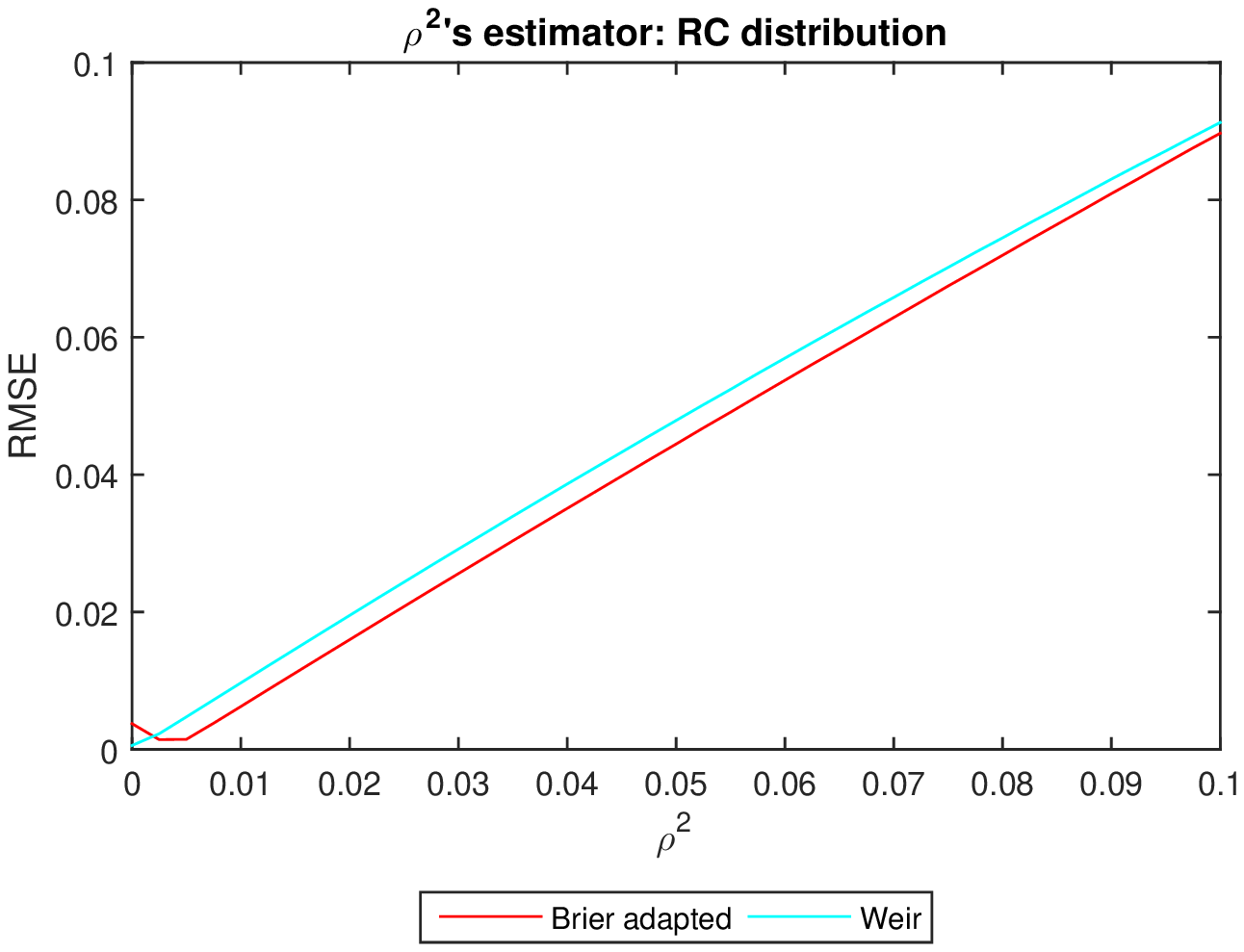}%
}
&
{\includegraphics[
height=2.6498in,
width=3.5284in
]%
{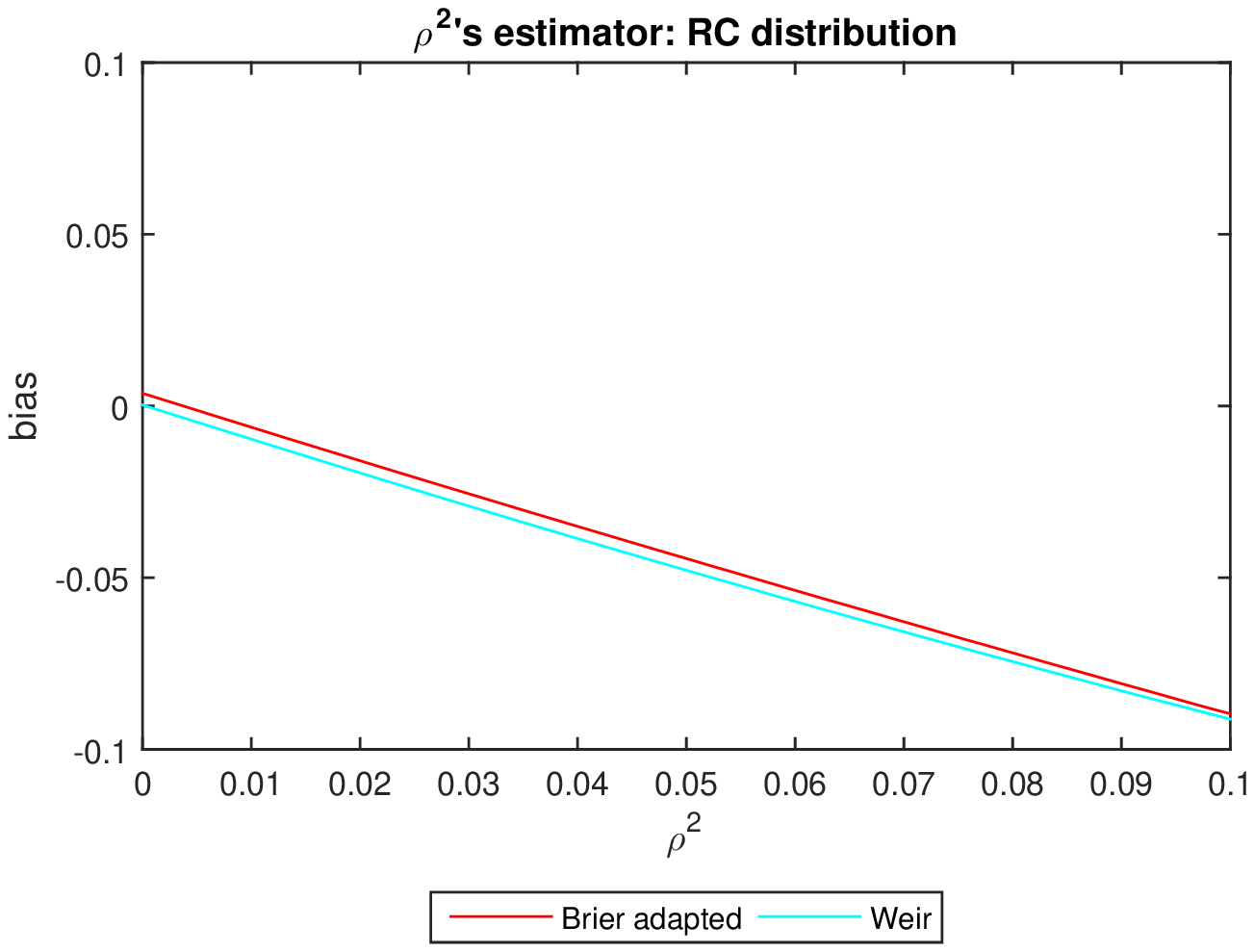}%
}
\end{tabular}
\caption{RMSE and bias of the Brier adapted $\widehat{\rho}^{2}$ and Weir's $\overline{\rho}^{2}$ for small values of $\rho^2$ when DM, NI and RC distributions are considered and the theoretical probabilities are equal to the estimates  for locus FGA.\label{fig4.3}}%
\end{figure}%
%

\begin{figure}[htbp]  \centering
\begin{tabular}
[c]{cc}%
{\includegraphics[
height=2.6498in,
width=3.5284in
]%
{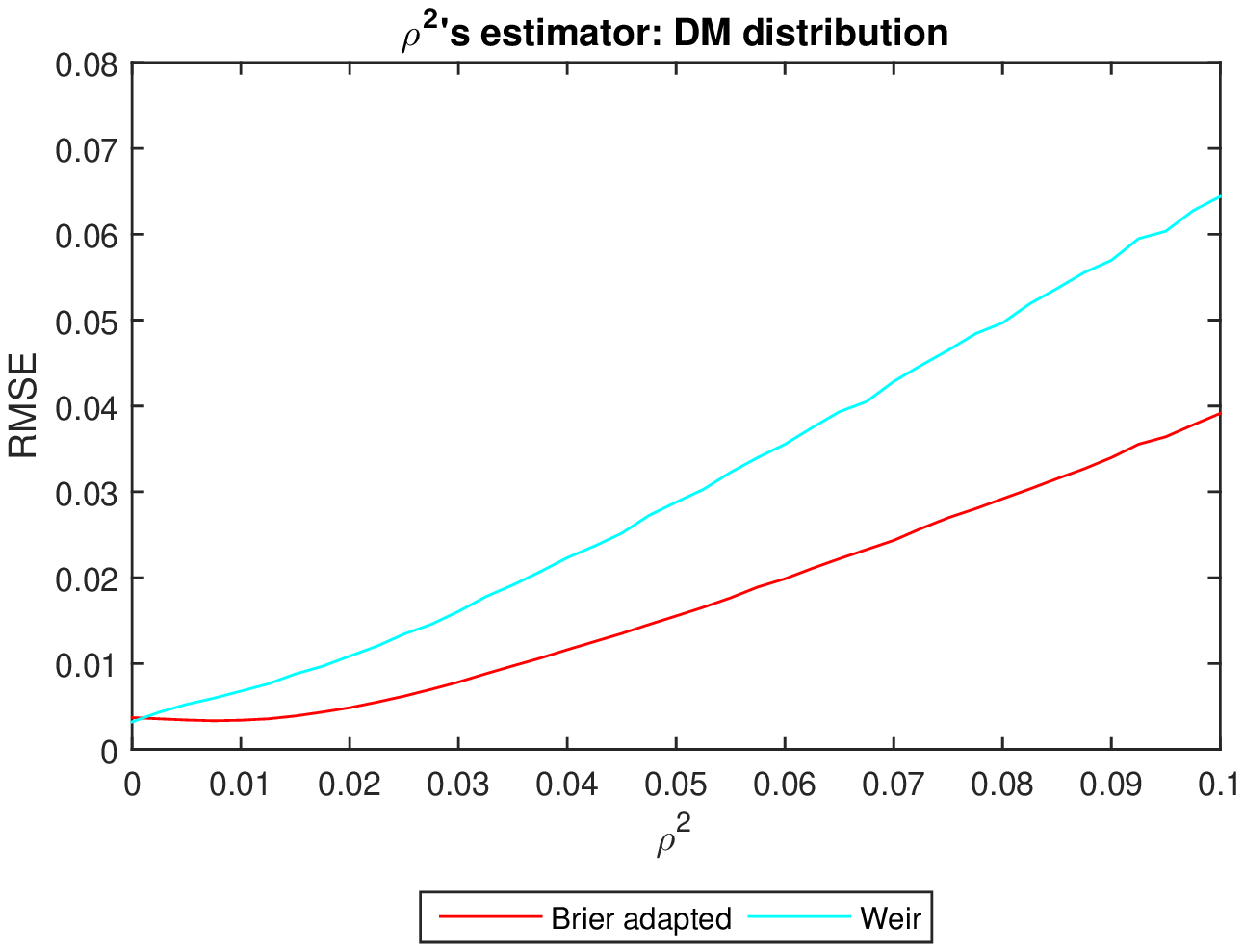}%
}
&
{\includegraphics[
height=2.6498in,
width=3.5284in
]%
{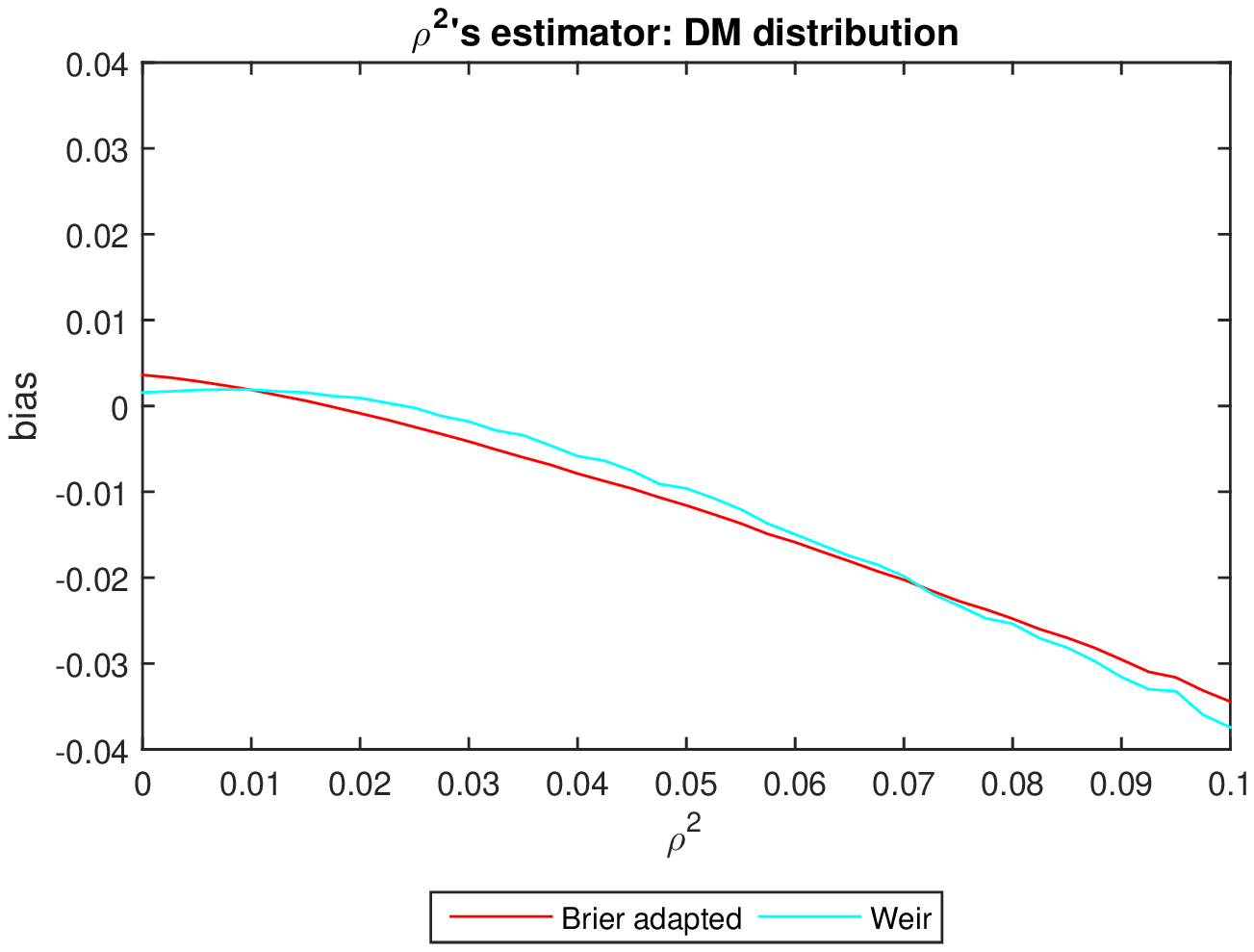}%
}
\\%
{\includegraphics[
height=2.6498in,
width=3.5284in
]%
{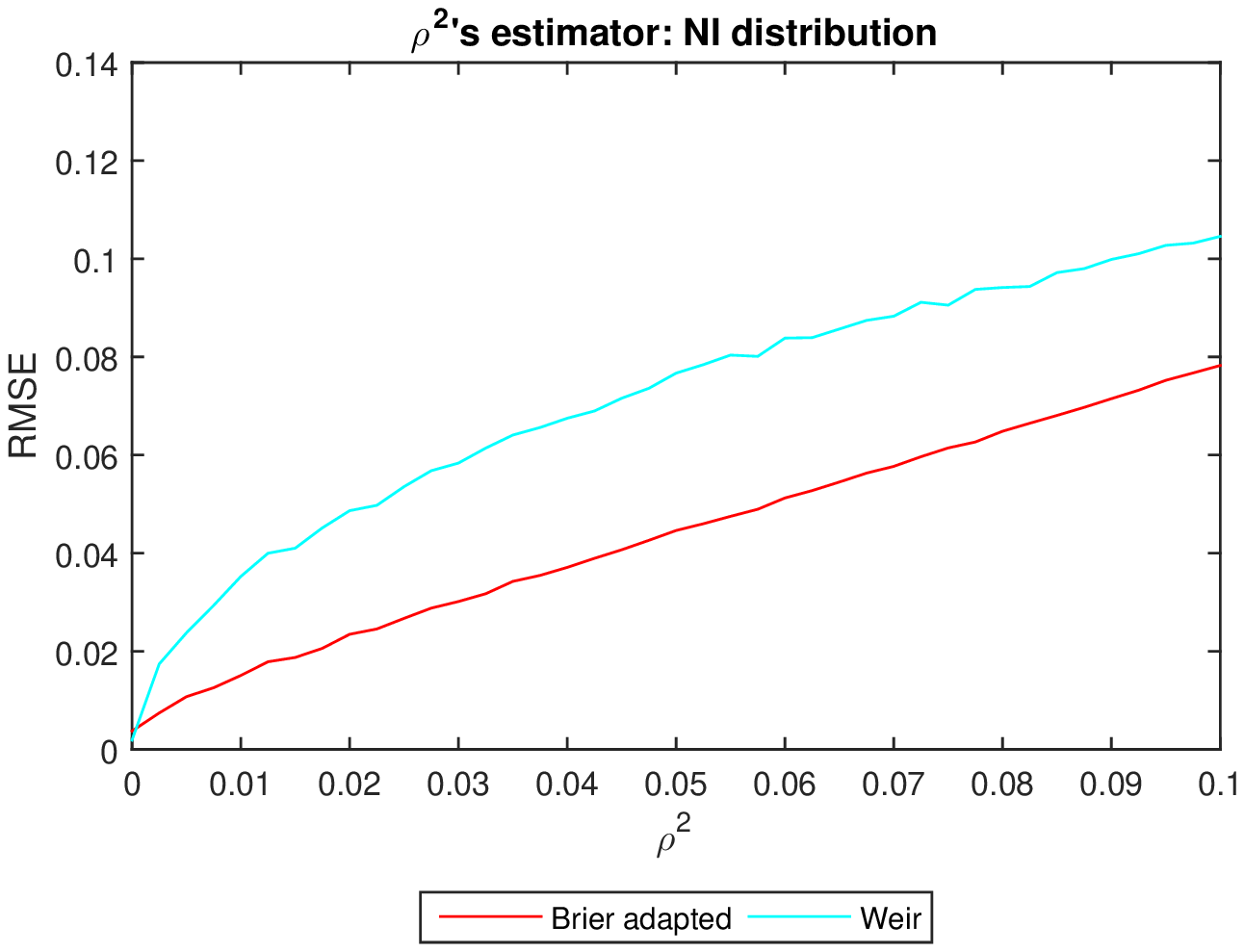}%
}
&
{\includegraphics[
height=2.6498in,
width=3.5284in
]%
{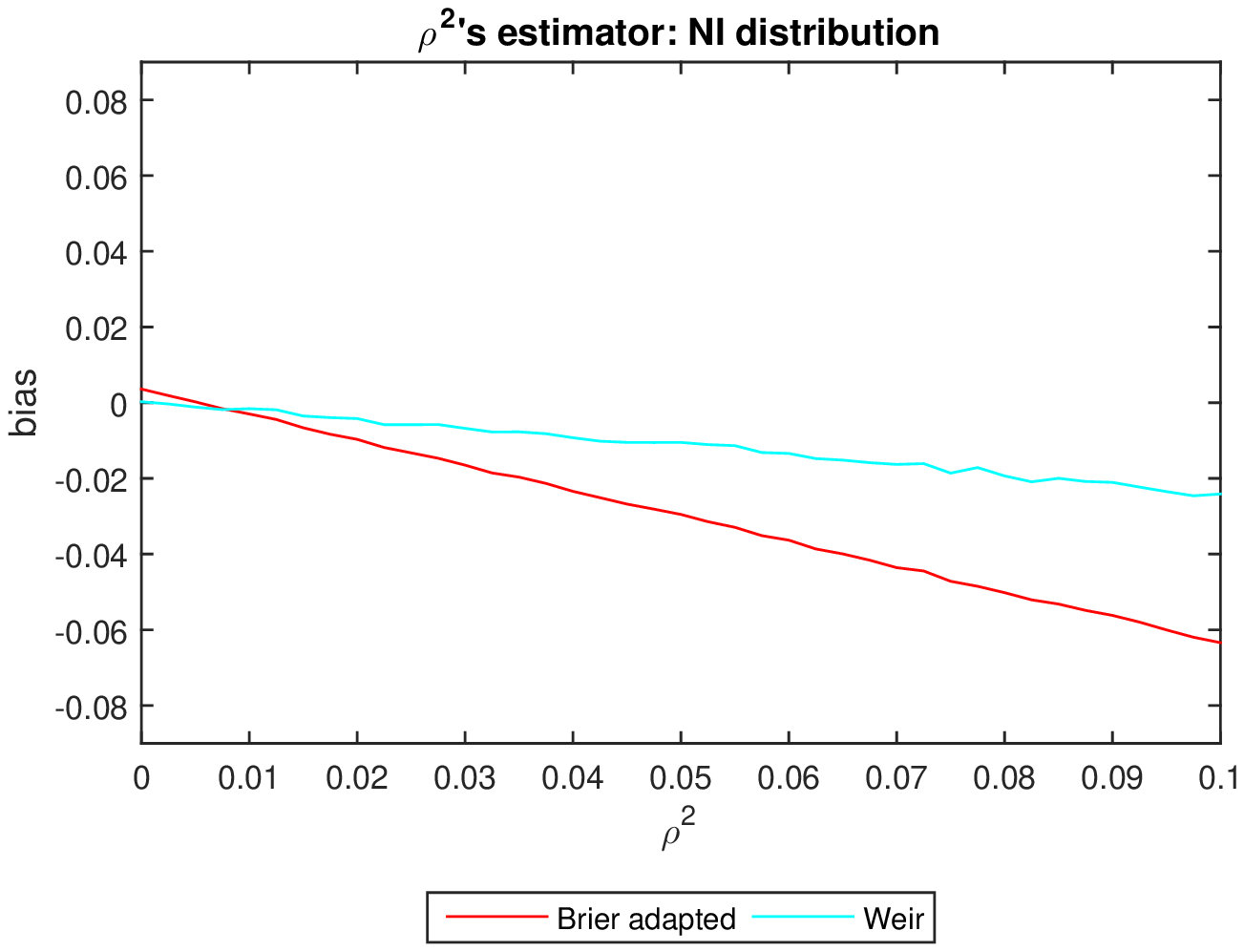}%
}
\\%
{\includegraphics[
height=2.6498in,
width=3.5284in
]%
{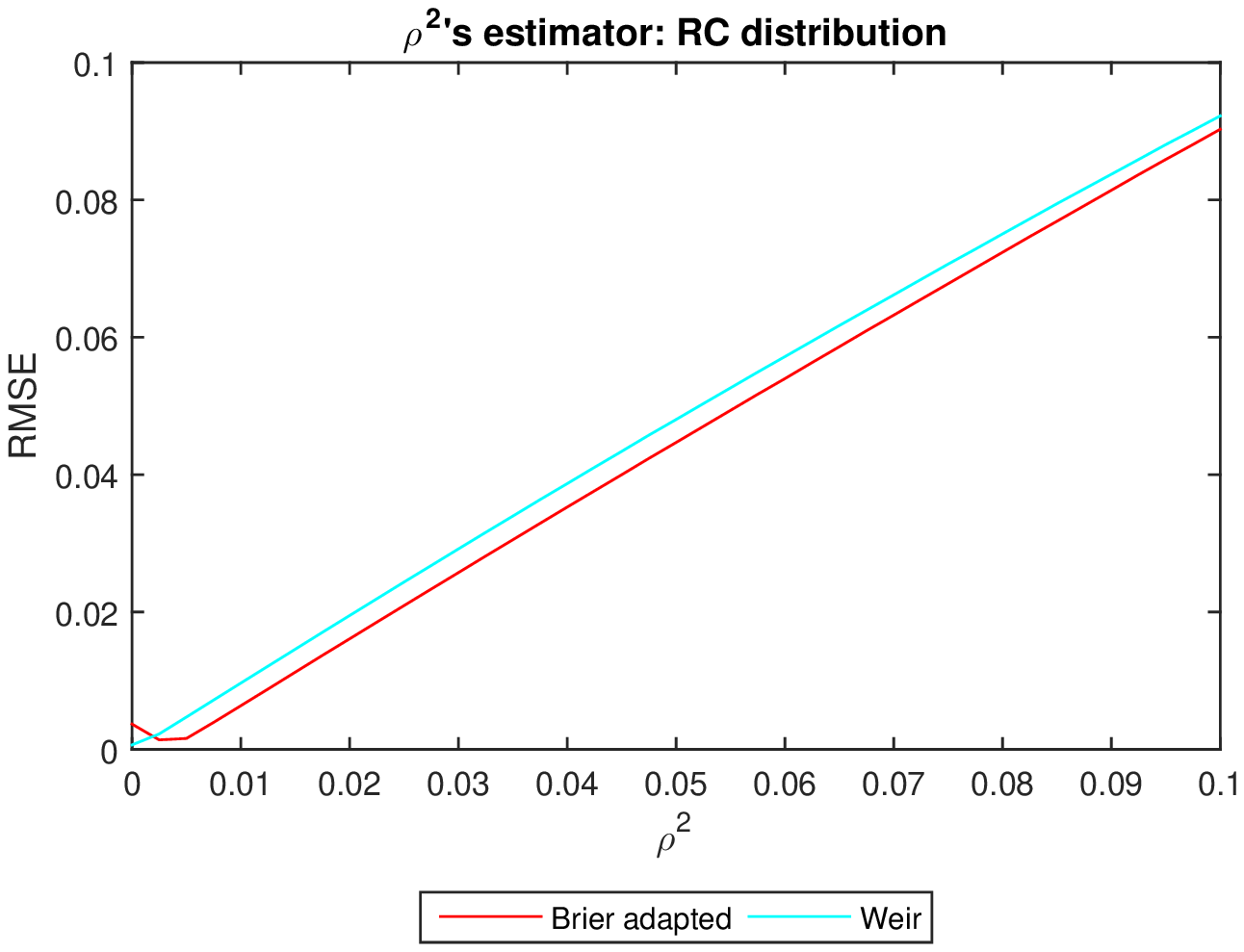}%
}
&
{\includegraphics[
height=2.6498in,
width=3.5284in
]%
{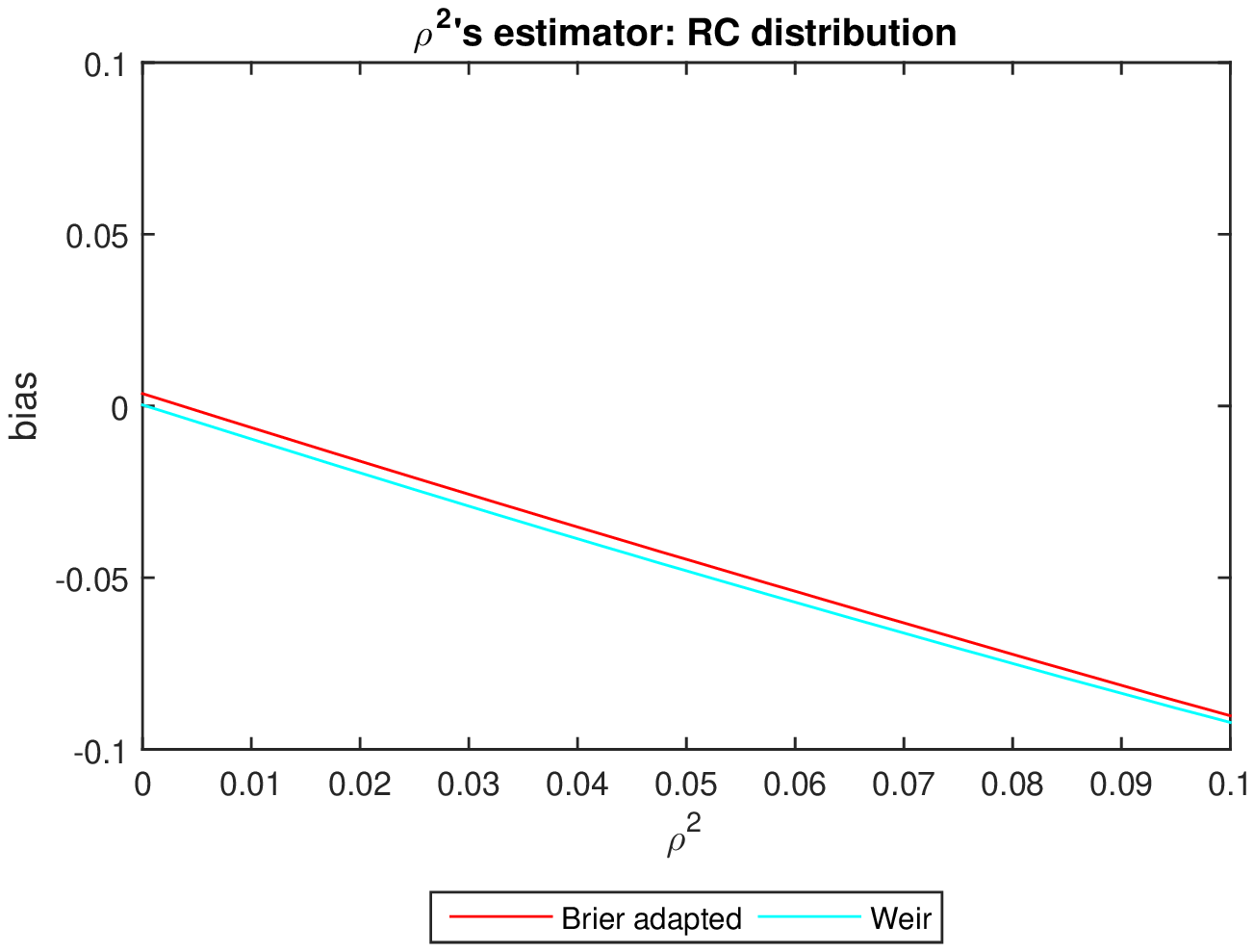}%
}
\end{tabular}
\caption{RMSE and bias of the Brier adapted $\widehat{\rho}^{2}$ and Weir's $\overline{\rho}^{2}$ for small values of $\rho^2$ when DM, NI and RC distributions are considered and  the theoretical probabilities are equal to the estimates  for locus D8S1179.\label{fig4.4}}%
\end{figure}%

\newpage

\section{Concluding remarks\label{sec7}}

This paper deals with log-linear models for studying the intracluster
correlation coefficient in clustered multinomial data. As no distributional
assumption is made, only the first two moment assumptions are considered,
quasi-likelihood methods are followed. With the saturated log-linear model the
non-parametric estimators of the intracluster correlation coefficient are
considered, and the semi-parametric estimators arise for general log-linear
models. New estimators are proposed for log-linear modeling in overdispersed
clustered multinomial data with unequal cluster sizes, valid either in a
non-paramateric and semi-parametric setting. Big differences are found in the
Monte-Carlo simulation study, when comparing the root of the mean square error
and the bias of the new estimators of the intracluster correlation coefficient
with the clasical ones. In addition, quasi minimum $\phi$-divergence
estimators are proposed and from the Monte Carlo experiments we saw that it is
possible to decrease the root of the mean square error in comparison with the
quasi-maximum likelihood estimators. These results of this paper could be
extended for any generalized linear model and the new estimators are promising
to improve the quality of the goodness-of-fit test statistics for log-linear
models in overdispersed clustered multinomial data.

The referees suggested to us to consider the interesting problems related to
mising values as well as to get the standard errors of $\widehat{\rho}%
_{n,N}^{2}.$ We know that the problem of missing values has been very well
solved in Chapter 2 of the PhD thesis of Raim (2014) for the random clumped
distribution. We think that the problem associated to missing data using the
modelization given in this paper, applying log-linear models, requires a
separate paper. The standard errors of $\widehat{\rho}_{n,N}^{2}$ requires
also a paper in the line of the paper of Weir and Hill (2002).

\textbf{Acknowledgement. }We would like to thank the referees for their
helpful comments and suggestions. This research is supported by the Spanish
Grant MTM2012-33740 from Ministerio de Economia y Competitividad.

%

\appendix

\section{Appendix}

\subsection{Zero-inflated binomial distribution\label{SecA.0}}

The binomial distribution with zero inflation in the first cell, i.e.
$n$-inflation in the second cell, is given by%
\[
\left(  \left.
\begin{pmatrix}
Y_{1}\\
Y_{2}%
\end{pmatrix}
\right\vert V=v\right)  =\left\{
\begin{array}
[c]{lll}%
\mathcal{M}\left(  n,%
\begin{pmatrix}
p_{1}(\boldsymbol{\theta})\\
p_{2}(\boldsymbol{\theta})
\end{pmatrix}
\right)  , & \text{if }v=1, & \text{with }\Pr(V=1)=w\\
n\boldsymbol{e}_{2}, & \text{if }v=0, & \text{with }\Pr(V=0)=1-w
\end{array}
\right.  .
\]
Its first order moment vector is given by%
\begin{align*}
E\left[
\begin{pmatrix}
Y_{1}\\
Y_{2}%
\end{pmatrix}
\right]   &  =E\left[  E\left[  \left.
\begin{pmatrix}
Y_{1}\\
Y_{2}%
\end{pmatrix}
\right\vert V\right]  \right]  \\
&  =E\left[  \left.
\begin{pmatrix}
Y_{1}\\
Y_{2}%
\end{pmatrix}
\right\vert V=1\right]  \Pr\left(  V=1\right)  +E\left[  \left.
n\boldsymbol{e}_{2}\right\vert V=0\right]  \Pr\left(  V=0\right)  \\
&  =n%
\begin{pmatrix}
wp_{1}(\boldsymbol{\theta})\\
1-wp_{1}(\boldsymbol{\theta})
\end{pmatrix}
.
\end{align*}
The derivation for the the second order moment matrix calculation is given by%
\begin{align*}
E\left[  Var\left[  \left.
\begin{pmatrix}
Y_{1}\\
Y_{2}%
\end{pmatrix}
\right\vert V\right]  \right]   &  =Var\left[  \left.
\begin{pmatrix}
Y_{1}\\
Y_{2}%
\end{pmatrix}
\right\vert V=1\right]  \Pr\left(  V=1\right)  +Var\left[  \left.
n\boldsymbol{e}_{2}\right\vert V=0\right]  \Pr\left(  V=0\right)  \\
&  =Var\left[  \mathcal{M}\left(  n,%
\begin{pmatrix}
p_{1}(\boldsymbol{\theta})\\
p_{2}(\boldsymbol{\theta})
\end{pmatrix}
\right)  \right]  w\\
&  =nwp_{1}(\boldsymbol{\theta})\left(  1-p_{1}(\boldsymbol{\theta})\right)
\begin{pmatrix}
1 & -1\\
-1 & 1
\end{pmatrix}
,
\end{align*}%
\begin{align*}
&  Var\left[  E\left[  \left.
\begin{pmatrix}
Y_{1}\\
Y_{2}%
\end{pmatrix}
\right\vert V\right]  \right]  \\
&  =E\left[  E\left[  \left.
\begin{pmatrix}
Y_{1}\\
Y_{2}%
\end{pmatrix}
\right\vert V\right]  E^{T}\left[  \left.
\begin{pmatrix}
Y_{1}\\
Y_{2}%
\end{pmatrix}
\right\vert V\right]  \right]  -E\left[  E\left[  \left.
\begin{pmatrix}
Y_{1}\\
Y_{2}%
\end{pmatrix}
\right\vert V\right]  \right]  E^{T}\left[  E\left[  \left.
\begin{pmatrix}
Y_{1}\\
Y_{2}%
\end{pmatrix}
\right\vert V\right]  \right]  \\
&  =E\left[  \left.
\begin{pmatrix}
Y_{1}\\
Y_{2}%
\end{pmatrix}
\right\vert V=1\right]  E^{T}\left[  \left.
\begin{pmatrix}
Y_{1}\\
Y_{2}%
\end{pmatrix}
\right\vert V=1\right]  w+E\left[  \left.
\begin{pmatrix}
Y_{1}\\
Y_{2}%
\end{pmatrix}
\right\vert V=0\right]  E^{T}\left[  \left.
\begin{pmatrix}
Y_{1}\\
Y_{2}%
\end{pmatrix}
\right\vert V=0\right]  (1-w)\\
&  -n^{2}%
\begin{pmatrix}
wp_{1}(\boldsymbol{\theta})\\
1-wp_{1}(\boldsymbol{\theta})
\end{pmatrix}%
\begin{pmatrix}
wp_{1}(\boldsymbol{\theta}) & 1-wp_{1}(\boldsymbol{\theta})
\end{pmatrix}
\\
&  =n^{2}(1-w)wp_{1}^{2}(\boldsymbol{\theta})%
\begin{pmatrix}
1 & -1\\
-1 & 1
\end{pmatrix}
,
\end{align*}
and hence%
\begin{align*}
Var\left[
\begin{pmatrix}
Y_{1}\\
Y_{2}%
\end{pmatrix}
\right]   &  =E\left[  Var\left[  \left.
\begin{pmatrix}
Y_{1}\\
Y_{2}%
\end{pmatrix}
\right\vert V\right]  \right]  +Var\left[  E\left[  \left.
\begin{pmatrix}
Y_{1}\\
Y_{2}%
\end{pmatrix}
\right\vert V\right]  \right]  \\
&  =nwp_{1}(\boldsymbol{\theta})\left[  \left(  1-p_{1}(\boldsymbol{\theta
})\right)  +n(1-w)p_{1}(\boldsymbol{\theta})\right]
\begin{pmatrix}
1 & -1\\
-1 & 1
\end{pmatrix}
\\
&  =nwp_{1}(\boldsymbol{\theta})(1-wp_{1}(\boldsymbol{\theta}))(1+\rho
^{2}(n-1))%
\begin{pmatrix}
1 & -1\\
-1 & 1
\end{pmatrix}
,
\end{align*}
where%
\[
\rho^{2}=\frac{(1-w)p_{1}(\boldsymbol{\theta})}{1-wp_{1}(\boldsymbol{\theta}%
)},\quad\text{for any }w\in(0,1).
\]
This result matches the one given in Morel and Neerchal (2012, page 83). Let
\[
\left(  \left.  \boldsymbol{Y}\right\vert V=v\right)  =\left\{
\begin{array}
[c]{lll}%
\mathcal{M}\left(  n,\boldsymbol{p}(\boldsymbol{\theta})\right)  , & \text{if
}v=1, & \text{with }\Pr(V=1)=w\\
n\boldsymbol{e}_{M}, & \text{if }v=0, & \text{with }\Pr(V=0)=1-w
\end{array}
\right.
\]
be the multinomial distribution with zero inflation in the first $M-1$ cells,
i.e. $n$-inflation in the $M$-th cell.

For $M\geq3$, a univariate homogeneous intraclass correlation coefficient,
$\rho^{2}$, seems not to be an appropriate measure to characterize the
variability of this distribution, since the intraclass correlation along the
cells seems to be heterogeous. The reason for this is that for $M\geq3$ there
is not an expression for the variance-covariance matrix of the multinomial
distribution defined as a matrix not depending on parameters multiplied by a
scalar with all the information about the parameters of the distribution.

\subsection{Proof of Theorem \ref{Dist}\label{SecA.1}}

Let%
\[
\boldsymbol{S}_{\boldsymbol{Y}}=\frac{1}{N-1}\sum_{\ell=1}^{N}\left(
\boldsymbol{Y}^{(\ell)}-n\widehat{\boldsymbol{p}}\right)  \left(
\boldsymbol{Y}^{(\ell)}-n\widehat{\boldsymbol{p}}\right)  ^{T},
\]
the matrix of quasi-variances and quasi-covariances of the simple random
sample $\boldsymbol{Y}^{(1)},...,\boldsymbol{Y}^{(N)}$ and
\begin{align*}
\overline{\boldsymbol{S}}_{\boldsymbol{Y}} &  =\mathrm{diag}(\boldsymbol{S}%
_{\boldsymbol{Y}})=%
\begin{pmatrix}
S_{Y_{1}}^{2} &  & \\
& \ddots & \\
&  & S_{Y_{M}}^{2}%
\end{pmatrix}
,\\
S_{Y_{r}}^{2} &  =\frac{1}{N-1}\sum_{\ell=1}^{N}(Y^{(\ell,r)}-n\widehat{p}%
_{r})^{2}.
\end{align*}
It is well-known that each diagonal element of $\overline{\boldsymbol{S}%
}_{\boldsymbol{Y}}$\ is a consistent estimator of each diagonal element of
$\vartheta_{n}n\boldsymbol{\Sigma}_{\boldsymbol{p}(\boldsymbol{\theta})}$,
i.e.%
\[
\mathrm{E}\left[  \overline{\boldsymbol{S}}_{\boldsymbol{Y}}\right]
=\mathrm{diag}\{\mathrm{E}\left[  \boldsymbol{S}_{\boldsymbol{Y}}\right]
\}=\mathrm{diag}\{\mathrm{Var}[\boldsymbol{Y}^{(\ell)}]\}=\mathrm{diag}%
\{\vartheta_{n}n\boldsymbol{\Sigma}_{\boldsymbol{p}(\boldsymbol{\theta})}\},
\]
and%
\begin{align}
&  S_{Y_{r}}^{2}\overset{P}{\underset{N\rightarrow\infty}{\longrightarrow}%
}\vartheta_{n}np_{r}(\boldsymbol{\theta})\left(  1-p_{r}(\boldsymbol{\theta
})\right)  ,\quad r=1,...,M,\label{cons}\\
&  \text{or}\quad\overline{\boldsymbol{S}}_{\boldsymbol{Y}}%
\overset{P}{\underset{N\rightarrow\infty}{\longrightarrow}}\mathrm{diag}%
(\vartheta_{n}n\boldsymbol{\Sigma}_{\boldsymbol{p}(\boldsymbol{\theta}%
)}).\nonumber
\end{align}
It is not difficult to establish that%
\begin{equation}
\mathrm{trace}(\overline{\boldsymbol{S}}_{\boldsymbol{Y}})=\sum_{r=1}%
^{M}S_{Y_{r}}^{2}=\mathrm{trace}(\boldsymbol{S}_{\boldsymbol{Y}})=\frac
{1}{N-1}\sum_{\ell=1}^{N}\left(  \boldsymbol{Y}^{(\ell)}%
-n\widehat{\boldsymbol{p}}\right)  ^{T}\left(  \boldsymbol{Y}^{(\ell
)}-n\widehat{\boldsymbol{p}}\right)  ,\label{trace}%
\end{equation}
which is consistent for $\mathrm{trace}(\vartheta_{n}n\boldsymbol{\Sigma
}_{\boldsymbol{p}(\boldsymbol{\theta})})=\vartheta_{n}n\sum_{r=1}^{M}%
p_{r}(\boldsymbol{\theta})\left(  1-p_{r}(\boldsymbol{\theta})\right)  $. We
know that the chi-square test-statistic $X^{2}(\widetilde{\boldsymbol{Y}})$,
given in (\ref{chi}), has an asymptotic $\mathcal{\chi}_{(N-1)(M-1)}^{2}$
distribution for fixed values of number of clusters $N$ and an increasing
cluster size, $n$, under the assumption of inter-cluster level homogeneity.
However, this distribution is not a useful device for the proof. Based on the
expression of the chi-square test-statistic, $X^{2}(\widetilde{\boldsymbol{Y}%
})$, in terms of the variance-covariance matrix, as well as the same steps to
obtain the expression and consistency of (\ref{trace}), we are going to
establish (\ref{cons2}). We have%
\[
\mathrm{trace}(\overline{\boldsymbol{S}}_{\boldsymbol{Y}}\tfrac{1}%
{n}\boldsymbol{D}_{\boldsymbol{p}(\boldsymbol{\theta})}^{-1})=\frac{1}%
{N-1}\sum_{\ell=1}^{N}\left(  \boldsymbol{Y}^{(\ell)}-n\widehat{\boldsymbol{p}%
}\right)  ^{T}\tfrac{1}{n}\boldsymbol{D}_{\boldsymbol{p}(\boldsymbol{\theta}%
)}^{-1}\left(  \boldsymbol{Y}^{(\ell)}-n\widehat{\boldsymbol{p}}\right)
\]
and%
\begin{align*}
\mathrm{E}\left[  \mathrm{trace}(\overline{\boldsymbol{S}}_{\boldsymbol{Y}%
}\tfrac{1}{n}\boldsymbol{D}_{\boldsymbol{p}(\boldsymbol{\theta})}%
^{-1})\right]   &  =\mathrm{traceE}\left[  \overline{\boldsymbol{S}%
}_{\boldsymbol{Y}}\tfrac{1}{n}\boldsymbol{D}_{\boldsymbol{p}%
(\boldsymbol{\theta})}^{-1}\right]  =\mathrm{trace}\left(  \mathrm{E}\left[
\overline{\boldsymbol{S}}_{\boldsymbol{Y}}\right]  \tfrac{1}{n}\boldsymbol{D}%
_{\boldsymbol{p}(\boldsymbol{\theta})}^{-1}\right)  =\mathrm{trace}\left(
\vartheta_{n}n\boldsymbol{\Sigma}_{\boldsymbol{p}(\boldsymbol{\theta})}%
\tfrac{1}{n}\boldsymbol{D}_{\boldsymbol{p}(\boldsymbol{\theta})}^{-1}\right)
\\
&  =\vartheta_{n}\mathrm{trace}\left(  \boldsymbol{\Sigma}_{\boldsymbol{p}%
(\boldsymbol{\theta})}\boldsymbol{D}_{\boldsymbol{p}(\boldsymbol{\theta}%
)}^{-1}\right)  =\vartheta_{n}\mathrm{trace}\left(  \left(  \boldsymbol{D}%
_{\boldsymbol{p}(\boldsymbol{\theta})}-\boldsymbol{p}(\boldsymbol{\theta
})\boldsymbol{p}^{T}(\boldsymbol{\theta})\right)  \boldsymbol{D}%
_{\boldsymbol{p}(\boldsymbol{\theta})}^{-1}\right)  \\
&  =\vartheta_{n}\left[  \mathrm{trace}(\boldsymbol{I}_{M})-\mathrm{trace}%
(\boldsymbol{p}(\boldsymbol{\theta})\boldsymbol{1}_{M}^{T})\right]
=\vartheta_{n}(M-1).
\end{align*}
Hence,%
\[
\mathrm{E}\left[  \frac{1}{M-1}\mathrm{trace}(\overline{\boldsymbol{S}%
}_{\boldsymbol{Y}}\tfrac{1}{n}\boldsymbol{D}_{\boldsymbol{p}%
(\boldsymbol{\theta})}^{-1})\right]  =\mathrm{E}\left[  \frac{1}%
{(N-1)(M-1)}\sum_{\ell=1}^{N}\left(  \boldsymbol{Y}^{(\ell)}%
-n\widehat{\boldsymbol{p}}\right)  ^{T}\tfrac{1}{n}\boldsymbol{D}%
_{\boldsymbol{p}(\boldsymbol{\theta})}^{-1}\left(  \boldsymbol{Y}^{(\ell
)}-n\widehat{\boldsymbol{p}}\right)  \right]  =\vartheta_{n},
\]
and taking into account that $\widehat{\boldsymbol{p}}$ is a consistent
estimator of $\boldsymbol{p}(\boldsymbol{\theta})$, as $N\rightarrow\infty$,
as well as (\ref{cons}),
\[
\frac{1}{M-1}\mathrm{trace}(\overline{\boldsymbol{S}}_{\boldsymbol{Y}}%
\tfrac{1}{n}\boldsymbol{D}_{\widehat{\boldsymbol{p}}}^{-1})=\frac
{1}{(N-1)(M-1)}\sum_{\ell=1}^{N}\left(  \boldsymbol{Y}^{(\ell)}%
-n\widehat{\boldsymbol{p}}\right)  ^{T}\tfrac{1}{n}\boldsymbol{D}%
_{\widehat{\boldsymbol{p}}}^{-1}\left(  \boldsymbol{Y}^{(\ell)}%
-n\widehat{\boldsymbol{p}}\right)  =\frac{X^{2}(\widetilde{\boldsymbol{Y}}%
)}{(N-1)(M-1)}%
\]
tends in probability to $\vartheta_{n}$, as $N\rightarrow\infty$. In other
words,%
\[
\frac{X^{2}(\widetilde{\boldsymbol{Y}})}{(N-1)(M-1)}=\frac{1}{(M-1)n}%
\sum_{r=1}^{M}\frac{1}{\widehat{p}_{r}}S_{Y_{r}}^{2}%
\overset{P}{\underset{N\rightarrow\infty}{\longrightarrow}}\frac{\vartheta
_{n}n}{(M-1)n}\sum_{r=1}^{M}\frac{p_{r}(\boldsymbol{\theta})}{p_{r}%
(\boldsymbol{\theta})}\left(  1-p_{r}(\boldsymbol{\theta})\right)
=\vartheta_{n}.
\]
In addition, taking into account (\ref{DF}), the right hand size of
(\ref{cons2}) follows. Finally, we like to mention that even though
$X^{2}(\widetilde{\boldsymbol{Y}})$ and $\vartheta_{n}(N-1)(M-1)$ have the
same expectation for a fixed value of $N$, this proof is not trivial since
$\vartheta_{n}(N-1)(M-1)$ as well as $X^{2}(\widetilde{\boldsymbol{Y}})$ tend
to infinite as $N\rightarrow\infty$.

\subsection{Proof of Theorem \ref{Th1}\label{SecA.2}}

By applying the Central Limit Theorem it holds (\ref{CLT}). Hence, from Pardo
(2006, formula (7.10)), for the minimum phi-divergence estimator of
$\boldsymbol{\theta}$ of a log-linear model it holds%
\begin{equation}
\sqrt{N}(\widehat{\boldsymbol{\theta}}_{\phi}-\boldsymbol{\theta}_{0})=\left(
\boldsymbol{\boldsymbol{W}}^{T}\boldsymbol{\Sigma\boldsymbol{_{\boldsymbol{p}%
\left(  \theta_{0}\right)  }}W}\right)  ^{-1}\boldsymbol{W}^{T}%
\boldsymbol{\Sigma}_{p\left(  \boldsymbol{\theta}_{0}\right)  }\boldsymbol{D}%
_{\boldsymbol{p}\left(  \theta_{0}\right)  }^{-1}\sqrt{N}\left(
\widehat{\boldsymbol{p}}-\boldsymbol{p}\left(  \boldsymbol{\theta}_{0}\right)
\right)  +o_{p}\left(  \boldsymbol{1}_{M_{0}}\right)  , \label{8}%
\end{equation}
and the variance-covariance matrix of $\sqrt{N}(\widehat{\boldsymbol{\theta}%
}_{\phi}-\boldsymbol{\theta}_{0})$ is%
\begin{align}
&  \tfrac{\vartheta_{n}}{n}\left(  \boldsymbol{\boldsymbol{W}}^{T}%
\boldsymbol{\Sigma\boldsymbol{_{\boldsymbol{p}\left(  \theta_{0}\right)  }}%
W}\right)  ^{-1}\boldsymbol{W}^{T}\boldsymbol{\Sigma}_{p\left(
\boldsymbol{\theta}_{0}\right)  }\boldsymbol{D}_{\boldsymbol{p}\left(
\theta_{0}\right)  }^{-1}\boldsymbol{\Sigma}_{\boldsymbol{p}\left(
\boldsymbol{\theta}_{0}\right)  }\boldsymbol{D}_{\boldsymbol{p}\left(
\theta_{0}\right)  }^{-1}\boldsymbol{\Sigma}_{\boldsymbol{p}\left(
\boldsymbol{\theta}_{0}\right)  }\boldsymbol{W}\left(
\boldsymbol{\boldsymbol{W}}^{T}\boldsymbol{\Sigma\boldsymbol{_{\boldsymbol{p}%
\left(  \theta_{0}\right)  }}W}\right)  ^{-1}\nonumber\\
&  =\tfrac{\vartheta_{n}}{n}\left(  \boldsymbol{\boldsymbol{W}}^{T}%
\boldsymbol{\Sigma\boldsymbol{_{\boldsymbol{p}\left(  \theta_{0}\right)  }}%
W}\right)  ^{-1}. \label{8Var}%
\end{align}
The last equality comes from%
\[
\boldsymbol{\Sigma}_{\boldsymbol{p}\left(  \boldsymbol{\theta}_{0}\right)
}\boldsymbol{D}_{\boldsymbol{p}\left(  \theta_{0}\right)  }^{-1}%
\boldsymbol{\Sigma}_{\boldsymbol{p}\left(  \boldsymbol{\theta}_{0}\right)
}=\boldsymbol{\Sigma}_{\boldsymbol{p}\left(  \boldsymbol{\theta}_{0}\right)
}.
\]
From the Taylor expansion of $\boldsymbol{p}(\widehat{\boldsymbol{\theta}%
}_{\phi})$ around $\boldsymbol{p}(\boldsymbol{\theta}_{0})$ we obtain%
\begin{equation}
\sqrt{N}(\boldsymbol{p}(\widehat{\boldsymbol{\theta}}_{\phi})-\boldsymbol{p}%
(\boldsymbol{\theta}_{0}))=\boldsymbol{\Sigma\boldsymbol{_{\boldsymbol{p}%
\left(  \theta_{0}\right)  }}W}\sqrt{N}(\widehat{\boldsymbol{\theta}}_{\phi
}-\boldsymbol{\theta}_{0})+o_{p}\left(  \boldsymbol{1}_{M}\right)  ,
\label{8b}%
\end{equation}
and the variance-covariance matrix of $\sqrt{N}(\boldsymbol{p}%
(\widehat{\boldsymbol{\theta}}_{\phi})-\boldsymbol{p}(\boldsymbol{\theta}%
_{0}))$ is%
\begin{equation}
\tfrac{\vartheta_{n}}{n}\boldsymbol{\Sigma\boldsymbol{_{\boldsymbol{p}\left(
\theta_{0}\right)  }}W}\left(  \boldsymbol{\boldsymbol{W}}^{T}%
\boldsymbol{\Sigma\boldsymbol{_{\boldsymbol{p}\left(  \theta_{0}\right)  }}%
W}\right)  ^{-1}\boldsymbol{W}^{T}\boldsymbol{\Sigma}_{p\left(
\boldsymbol{\theta}_{0}\right)  }. \label{8Varb}%
\end{equation}
Since $\sqrt{N}\left(  \widehat{\boldsymbol{p}}-\boldsymbol{p}\left(
\boldsymbol{\theta}_{0}\right)  \right)  $ is normal and centred, from
(\ref{8}) and (\ref{8Var}), (\ref{9}) is obtained. Similarly, since $\sqrt
{N}(\widehat{\boldsymbol{\theta}}_{\phi}-\boldsymbol{\theta}_{0})$ is normal
and centred, from (\ref{8b}) and (\ref{8Varb}), (\ref{10}) is obtained.

\subsection{Derivation of Formula (\ref{CLT2})\label{SecA.3}}

Multiplying (\ref{TCL0}) by $\left.  \sqrt{N_{g}}n_{g}\right/  \sum
\limits_{h=1}^{G}n_{h}N_{h}$%
\[
w_{g}(\widehat{\boldsymbol{p}}^{(g)}-\boldsymbol{p}\left(  \boldsymbol{\theta
}_{0}\right)  )\overset{\mathcal{L}}{\underset{N_{g}\rightarrow\infty
}{\longrightarrow}}\mathcal{N}\left(  \boldsymbol{0}_{M},\tfrac{n_{g}%
N_{g}\vartheta_{n_{g}}}{\left(  \sum\nolimits_{h=1}^{G}n_{h}N_{h}\right)
^{2}}\boldsymbol{\Sigma}_{\boldsymbol{p}\left(  \boldsymbol{\theta}%
_{0}\right)  }\right)  ,
\]
hence summing up from $g=1$ to $G$\ and by the independence of clusters%
\[
\sum\limits_{g=1}^{G}w_{g}(\widehat{\boldsymbol{p}}^{\left(  g\right)
}-\boldsymbol{p}\left(  \boldsymbol{\theta}_{0}\right)  )=\left(
\widehat{\boldsymbol{p}}-\boldsymbol{p}\left(  \boldsymbol{\theta}_{0}\right)
\right)  \overset{\mathcal{L}}{\underset{N_{g}\rightarrow\infty
,\;g=1,...,G}{\longrightarrow}}\mathcal{N}\left(  \boldsymbol{0}_{M}%
,\tfrac{\sum\nolimits_{g=1}^{G}n_{g}N_{g}\vartheta_{n_{g}}}{\left(
\sum\nolimits_{h=1}^{G}n_{h}N_{h}\right)  ^{2}}\boldsymbol{\Sigma
}_{\boldsymbol{p}\left(  \boldsymbol{\theta}_{0}\right)  }\right)  .
\]
Finally multiplying the previous expression by $\left.  \sum\nolimits_{h=1}%
^{G}n_{h}N_{h}\right/  \sqrt{\sum\nolimits_{g=1}^{G}n_{g}N_{g}\vartheta
_{n_{g}}}$, the desired expression is obtained.

\subsection{Algorithms for Dirichlet-multinomial, n-inflated and
random-clumped distributions\label{Alg}}

The usual parameters of the $M$-dimensional random variable $\boldsymbol{Y}%
=(Y_{1},...,Y_{M})^{T}$\ with Dirichlet-multinomial distribution are
$\boldsymbol{\alpha}=\left(  \alpha_{11},...,\alpha_{M1}\right)  ^{T}$, where
$\alpha_{r1}=\frac{1-\rho^{2}}{\rho^{2}}p_{r}\left(  \boldsymbol{\theta
}\right)  $, $r=1,...,M$. For convenience it is considered with parameters
$\boldsymbol{\beta}=%
\begin{pmatrix}
\rho^{2}\\
\boldsymbol{p}(\boldsymbol{\theta})
\end{pmatrix}
$, $\boldsymbol{p}\left(  \boldsymbol{\theta}\right)  =\left(  p_{1}\left(
\boldsymbol{\theta}\right)  ,...,p_{M}\left(  \boldsymbol{\theta}\right)
\right)  ^{T}$, and is generated as follows:\medskip\texttt{\newline\noindent
STEP 1. Generate }$B_{1}\sim Beta(\alpha_{11},\alpha_{12})$, with $\alpha
_{11}=\frac{1-\rho^{2}}{\rho^{2}}p_{1}\left(  \boldsymbol{\theta}\right)  $,
$\alpha_{12}=\frac{1-\rho^{2}}{\rho^{2}}(1-p_{1}\left(  \boldsymbol{\theta
}\right)  )$.\smallskip\texttt{\newline STEP 2. Generate }$\left(  Y_{1}%
|B_{1}=b_{1}\right)  \sim Bin(n,b_{1})$.\texttt{\smallskip\newline STEP 3. For
}$r=2,...,M-1$\texttt{ do:\smallskip\newline\hspace*{1.5cm}Generate }%
$B_{r}\sim Beta(\alpha_{r1},\alpha_{r2})$\texttt{, }with $\alpha_{r1}%
=\frac{1-\rho^{2}}{\rho^{2}}p_{r}\left(  \boldsymbol{\theta}\right)  $,
$\alpha_{r2}=\frac{1-\rho^{2}}{\rho^{2}}\left(  1-\sum\limits_{h=1}^{r}%
p_{h}\left(  \boldsymbol{\theta}\right)  \right)  $.\texttt{\smallskip
\newline\hspace*{1.5cm}Generate }$\left(  Y_{r}|Y_{1}=y_{1},...,Y_{r-1}%
=y_{r-1},B_{r}=b_{r}\right)  \sim Bin\left(  n-\sum\limits_{h=1}^{r-1}%
y_{h},b_{r}\right)  $.\smallskip\texttt{\newline STEP 4. Do }$\left(
Y_{M}|Y_{1}=y_{1},...,Y_{M-1}=y_{M-1}\right)  =n-\sum\limits_{h=1}^{M-1}y_{h}%
$.\smallskip

The random variable $\boldsymbol{Y}=(Y_{1},...,Y_{M})^{T}$\ of the
$n$-inflated multinomial distribution with parameters $\boldsymbol{\beta}$,
$\boldsymbol{p}\left(  \boldsymbol{\theta}\right)  $, is generated as
follows:\medskip\texttt{\newline\noindent STEP 1. Generate }$V\sim
Ber(\rho^{2})$.\smallskip\texttt{\newline STEP 2. Generate}%
\[
\mathtt{\ }\left(  \boldsymbol{Y|}V=v\right)  =\left\{
\begin{array}
[c]{ll}%
\mathcal{M}(n,\boldsymbol{p}\left(  \boldsymbol{\theta}\right)  ), & \text{if
}v=0\\
n\mathcal{M}(1,\boldsymbol{p}\left(  \boldsymbol{\theta}\right)  ), & \text{if
}v=1
\end{array}
\right.  .
\]
\smallskip

The random variable $\boldsymbol{Y}=(Y_{1},...,Y_{M})^{T}$\ of the random
clumped distribution with parameters $\boldsymbol{\beta}$, $\boldsymbol{p}%
\left(  \boldsymbol{\theta}\right)  $, is generated as follows:\medskip
\texttt{\newline\noindent STEP 1. Generate }$\boldsymbol{Y}_{0}=(Y_{01}%
,...,Y_{0M})^{T}\sim\mathcal{M}(1,\boldsymbol{p}\left(  \boldsymbol{\theta
}\right)  )$.\smallskip\texttt{\newline STEP 2. Generate }$K_{1}\sim
Bin(n,\rho)$.\texttt{\smallskip\newline STEP 3. Generate }$\left(
\boldsymbol{Y}_{1}|K_{1}=k_{1}\right)  =\left(  (Y_{11},...,Y_{1M})^{T}%
|K_{1}=k_{1}\right)  \sim\mathcal{M}(n-k_{1},\boldsymbol{p}\left(
\boldsymbol{\theta}\right)  )$.\smallskip\texttt{\newline STEP 4. Do }$\left(
\boldsymbol{Y|}K_{1}=k_{1}\right)  \boldsymbol{=Y}_{0}k_{1}+\left(
\boldsymbol{Y}_{1}|K_{1}=k_{1}\right)  $.\smallskip\texttt{\newline}For the
details about the equivalence of this algorithm and (\ref{model2}), see Morel
and Nagaraj (1993).

It is interesting to note that there exists the package "Modeling
overdispersion in $R$" useful to generate the distributions considered in this
Appendix. For more details see Raim et al (2015).

\begin{thebibliography}{99}                                                                                               %


\bibitem {Ahn}Ahn, H. and James, J. C. (1995). Generation of Over-Dispersed
and Under-Dispersed Binomial Variates. \emph{Journal of Computational and
Graphical Statistics}, \textbf{4}, 55--64.

\bibitem {Altham}Altham, P. M. E. (1976). Discrete variable analysis for
individuals grouped into families. \emph{Biometrika}, \textbf{63,} 263--269.

\bibitem {Bokossa}Bokossa, M. (1999). \emph{Parameter estimation in
overdispersion models}. Unpublished Ph.D. thesis, University of Maryland.

\bibitem {Budo}Budowle, B. and Moretti, T. R. (1999). Genotype profiles for
six population groups at the 13 CODIS short tandem repeat core loci and other
PCR-based loci. \emph{Forensic Science Communications 1999}. Available at
\texttt{%
\hyperref{http://www.fbi.gov/about-us/lab/forensic-science-communications/fsc/july1999/budowle.htm}%
{}{}%
{http://www.fbi.gov/about-us/lab/forensic-science-communications/fsc/july1999/$\allowbreak
$budowle.htm}%
}.

\bibitem {Brier}Brier, S. S. (1980). Analysis of contingency tables under
cluster sampling. \emph{Biometrika}\textit{,}\textbf{ 67}, 591--596.

\bibitem {Cohen}Cohen, J. E. (1976). The distribution of the chi-squared
statistic under clustered sampling from contingency tables. \emph{J. Am.
Statist. Assoc.}, \textbf{71}, 665--670.

\bibitem {cr0}Cressie, N. and Pardo, L. (2000). Minimum $\phi$-divergence
estimator and hierarchical testing in loglinear models. \emph{Statistica
Sinica}, \textbf{10}, 867--884.

\bibitem {cr1}Cressie, N., Pardo, L. (2002). Model checking in loglinear
models using $\phi$-divergences and MLEs. \emph{Journal of Statistical
Planning and Inference}, \textbf{103}, 437--453.

\bibitem {cr2}Cressie, N., Pardo, L. and Pardo, M.C. (2003). Size and power
considerations for testing loglinear models using $\phi$-divergence test
statistics. \emph{Statistica Sinica}, \textbf{13}, 555--570.

\bibitem {Hall}Hall, D. B. (2000). Zero-Inflated Poisson and Binomial
Regression with Random Effects: A Case Study, Biometrics 56, 1030--1039.

\bibitem {martin}Mart\'{\i}n, N. and Pardo, L. (2008a). New families of
estimators and test statistics in log-linear models. \emph{Journal of
Multivariate Analysis}, \textbf{99}, 1590--1609.

\bibitem {martin2}Mart\'{\i}n, N. and Pardo, L. (2008b). Minimum
phi-divergence estimators for loglinear models with linear constraints and
multinomial sampling, \emph{Statistical Papers}, \textbf{49}, 15--36

\bibitem {marPar0}Mart\'{\i}n, N. and Pardo, L. (2010). A new measure of
leverage cells in multinomial loglinear models. \emph{Communications in
Statistics - Theory and Methods}, \textbf{39}, 517--530.

\bibitem {marPar}Mart\'{\i}n, N. and Pardo, L. (2011). Fitting DNA sequences
through log-linear modelling with linear constraints. \emph{Statistics},
\textbf{45}, 605--621.

\bibitem {MartinPardo}Mart\'{\i}n, N. and Pardo, L. (2012). Poisson loglinear
modeling with linear constraints on the expected cell frequencies.
\emph{Sankhya: The Indian Journal of Statistics}, \textbf{74B}, 238--267.

\bibitem {menendez}Men\'{e}ndez, M. L., Morales, D., Pardo, L. and Vajda, I.
(1995). Divergence-based estimation and testing of statistical models of
classification. \emph{Journal of Multivariate Analysis},\textbf{ 54}, 329--354

\bibitem {menedez2}Men\'{e}ndez, M. L., Morales, D., Pardo, L. and Vajda, I.
(1996). About divergence-based goodness-of-fit tests in the
Dirichlet-multinomial model. \emph{Communications in Statistics - Theory and
Methods}, \textbf{25}, 1119--1133.

\bibitem {morel0}Morel, J.G. and Nagaraj, N.K. (1993). A finite mixture
distribution for modelling multinomial extra variation. \emph{Biometrika},
\textbf{80}, 363--371.

\bibitem {morel2}Neerchal, N.K. and Morel, J.G. (1998). Large cluster results
for two parametric multinomial extra variation models. \emph{Journal of the
American Statistical Association}, \textbf{93}, 1078--1087.

\bibitem {Mosiman}Mosimann, J. E. (1962). On the compound multinomial
distributions, the multivariate $\beta$-distribution and correlation among
proportions.\textit{ }\emph{Biometrika},\textbf{ 49}, 65--82.

\bibitem {Pardo}Pardo, L. (2006). \emph{Statistical inference based on
divergence measures}. Chapman \& Hall/CRC, Boca Raton.

\bibitem {r1}Raim, A. M. (2014). Computational Methods for Finite Mixtures
using Approximate Information and Regression Linked to the Mixture Mean. PhD
Thesis, University of Mayland.

\bibitem {r2}Raim, A. M. , Neerchal, N. K. and Morel, J. G. (2015). Modeling
overdispersion in $R.$ Technical Report HPCI-2015-1 UMBCH High Performance
Computing Facility, University of Maryland, Baltimore Country, 2015.

\bibitem {Vos}Vos, P. W. (1992). Minimum f-divergence estimators and
quasi-likelihood functions. \emph{Annals of the Institute of Statistical
Mathematics}, \textbf{44}, 261--279.

\bibitem {Wedder}Wedderburn, R. W. M. (1974). Quasi-likelihood functions,
generalized linear models, and the Gauss-Newton method. \emph{Biometrika},
\textbf{61}, 439--447.

\bibitem {Weir}Weir, B. S.\ and Hill, W. G.\ (2002). Estimating F-statistics.
\emph{Annual Review of Genetics}, \textbf{36}, 721--750.
\end{thebibliography}
\end{document}